\newcommand{\RED}[1]{\textcolor{red}{#1}}
\newcommand{\BROWN}[1]{\textcolor{brown}{#1}}
\newcommand{\MAGENTA}[1]{\textcolor{magenta}{#1}}
\newcounter{dgnot} 
\newenvironment{dgnot}[1][]{\refstepcounter{dgnot}\par\medskip
   \noindent \textbf{\RED{NfD~\thedgnot.}  #1} \rmfamily}{\medskip}
\newcounter{mknot} 
\newenvironment{mknot}[1][]{\refstepcounter{mknot}\par\medskip
   \noindent \textbf{\MAGENTA{NfM~\themknot.}  #1} \rmfamily}{\medskip}
\newcounter{vcnot} 
\newenvironment{vcnot}[1][]{\refstepcounter{vcnot}\par\medskip
   \noindent \textbf{\BROWN{NfV~\thevcnot.}  #1} \rmfamily}{\medskip}
\newcounter{eknot} 
\newenvironment{eknot}[1][]{\refstepcounter{vcnot}\par\medskip
   \noindent \textbf{\BROWN{NfE~\thevcnot.}  #1} \rmfamily}{\medskip}
\newcommand{\calC}{\mathcal{C}}
\newcommand{\calF}{\mathcal{F}}
\newcommand{\calI}{\mathcal{I}}
\newcommand{\calJ}{\mathcal{J}}
\newcommand{\calL}{\mathcal{L}}
\newcommand{\calM}{\mathcal{M}}
\newcommand{\calN}{\mathcal{N}}
\newcommand{\calP}{\mathcal{P}}
\newcommand{\calS}{\mathcal{S}}
\newcommand{\calV}{\mathcal{V}}
\newcommand{\theo}{Theorem}
\newcommand{\propo}{Proposition}
\newcommand{\lm}{Lemma}
\newcommand{\deriv}{\noindent\hspace*{.20in}\vspace{0.1in}}
\newcommand{\contnl}{\\$ $\mbox{\hspace{0.2in}}}
\newcommand{\hint}[2]{\newline#1\hspace*{.25in} [\,$#2$\,]\vspace{0.1in}\newline\hspace*{.20in}
\vspace{0.1in}}
\newcommand{\SET}[1]{\{#1\}}
\newcommand{\ZET}[2]{\SET{#1 \,|\, #2}}
\newcommand{\pws}{\mathbf{\calP}}
\newcommand{\nats}{\mathbb{N}}
\renewcommand{\succ}{\mathtt{succ}}
\newcommand{\reals}{\mathbb{R}}
\newcommand{\cs}{CS}
\newcommand{\qdcs}{QdCS}
\newcommand{\closure}{\calC}
\newcommand{\closureF}{\stackrel{\rightarrow}{\calC}}
\newcommand{\closureT}{\stackrel{\leftarrow}{\calC}}
\newcommand{\interior}{\calI}
\newcommand{\interiorF}{\stackrel{\rightarrow}{\calI}}
\newcommand{\interiorT}{\stackrel{\leftarrow}{\calI}}
\newcommand{\pthlen}{\mathtt{len}}
\newcommand{\dom}{\mathtt{dom}}
\newcommand{\rng}{\mathtt{range}}
\newcommand{\pths}{\mathtt{Paths}}
\newcommand{\bpths}{\mathtt{BPaths}}
\newcommand{\bpthsF}{\mathtt{BPaths^F}}
\newcommand{\bpthsT}{\mathtt{BPaths^T}}
\newcommand{\leadspth}[1]{\stackrel{#1}{\Longrightarrow}}
\newcommand{\T}{\mathtt{Tr}}
\newcommand{\dbs}{dbs}
\newcommand{\dbsc}{dbsc}
\newcommand{\cl}{$\calC$}
\newcommand{\cm}{CM}
\newcommand{\cmc}{CMC}
\newcommand{\pth}{Path}
\newcommand{\cop}{CoPa}
\newcommand{\qdcm}{QdCM}
\newcommand{\ap}{{\tt AP}}
\newcommand{\model}{\calM}
\newcommand{\peval}{\calV}
\newcommand{\invpeval}{\peval^{-1}}
\newcommand{\eqsign}{\rightleftharpoons}
\newcommand{\cmbis}{\eqsign_{\mathtt{\cm}}}
\newcommand{\inlbis}{\eqsign_{\mathtt{\inl}}}
\newcommand{\homeo}{\eqsign_{\mathtt{HO}}}
\newcommand{\cmcbis}{\eqsign_{\mathtt{\cmc}}}
\newcommand{\clbis}{\eqsign_{\calC}}
\newcommand{\pthbis}{\eqsign_{\mathtt{Pth}}}
\newcommand{\copbis}{\eqsign_{\mathtt{\cop}}}
\newcommand{\treq}{\eqsign_{\mathtt{Tr}}}
\newcommand{\apeq}{\eqsign_{\ap}}
\newcommand{\dbseq}{\eqsign_{\mathtt{\dbs}}}
\newcommand{\dbsceq}{\eqsign_{\mathtt{\dbsc}}}
\newcommand{\iml}{{\tt IML}}
\newcommand{\inl}{{\tt INL}}
\newcommand{\imlc}{{\tt IMLC}}
\newcommand{\irl}{{\tt IRL}}
\newcommand{\icrl}{{\tt ICRL}}
\newcommand{\slcs}{{\tt SLCS}}
\newcommand{\islcs}{{\tt ISLCS}}
\newcommand{\minilogica}{{\tt MiniLogicA}}
\newcommand{\topochecker}{{\tt topochecker}}
\newcommand{\voxlogica}{{\tt VoxLogicA}}
\newcommand{\form}{\Phi}
\newcommand{\ltrue}{{\tt true}}
\newcommand{\lfalse}{{\tt false}}
\newcommand{\lneg}{\neg}
\newcommand{\lior}{\bigvee}
\newcommand{\liand}{\bigwedge}
\newcommand{\lnear}{\calN}
\newcommand{\lsurr}{\calS}
\newcommand{\lprop}{\calP}
\newcommand{\lnearF}{\stackrel{\rightarrow}{\calN}}
\newcommand{\lnearT}{\stackrel{\leftarrow}{\calN}}
\newcommand{\ltothru}{\stackrel{\rightarrow}{\rho}}
\newcommand{\lfromthru}{\stackrel{\leftarrow}{\rho}}
\newcommand{\lto}{\stackrel{\rightarrow}{\sigma}}
\newcommand{\lfrom}{\stackrel{\leftarrow}{\sigma}}
\newcommand{\lstothru}{\stackrel{\rightarrow}{\zeta}}
\newcommand{\lsfromthru}{\stackrel{\leftarrow}{\zeta}}
\newcommand{\imleq}{\simeq_{\iml}}
\newcommand{\irleq}{\simeq_{\irl}}
\newcommand{\icrleq}{\simeq_{\icrl}}
\newcommand{\imlceq}{\simeq_{\imlc}}
\newcommand{\islcseq}{\simeq_{\islcs}}
\newcommand{\closedefi}{\hfill$\bullet$}
\newcommand{\closethm}{}
\newcommand{\closerem}{}
\newcommand{\sep}{\vert}
\newcommand{\sem}[1]{[\![#1]\!]}
\newcommand{\wlg}{without loss of generality}
\begin{document}

\mainmatter



\title{On Bisimilarities for Closure Spaces\\Preliminary Version
\thanks{Research partially supported by the MIUR Project PRIN 2017FTXR7S IT-MaTTerS".
The authors are listed in alphabetical order, as they equally contributed to
this work.}}

\author{Vincenzo~Ciancia\inst{1} \and Diego~Latella\inst{1}  \and Mieke~Massink\inst{1}, Erik de Vink\inst{2}}

\institute{
CNR-ISTI, Pisa, Italy,\\
\email{$\{$V.Ciancia, D.Latella, M.Massink$\}$@cnr.it}
\and
Univ. of Eindhoven\\
\email{evink@win.tue.nl}}

\authorrunning{Ciancia et al.}

\maketitle

\begin{abstract}
Closure spaces are a generalisation of topological spaces obtained by removing the  idempotence requirement on the closure operator. We adapt the standard notion of bisimilarity for topological models, namely Topo-bisimilarity,  to closure models---we call the resulting equivalence {\em \cm-bisimilarity}---and refine it for quasi-discrete closure models. We also define two additional notions of bisimilarity that are based on paths on space, namely {\em \pth-bisimilarity} and {\em Compatible \pth-bisimilarity}, {\em \cop-bisimilarity} for short. The former expresses (unconditional) reachability, the latter refines it in a way that is reminishent of {\em Stuttering Equivalence} on transition systems. For each bisimilarity we provide a logical characterisation, using variants of \slcs{.} We also address the issue of (space) minimisation via the three equivalences.
\end{abstract}

\begin{keywords}
Closure Spaces;
Topological Spaces;
Spatial Logics;
Spatial Bisimilarities.
\end{keywords}

\section{Introduction}\label{sec:Introduction}

In the well known topological interpretation of model logic a point in space  satisfies formula $\diamond \, \form$ whenever it belongs to the {\em topological closure} of the set $\sem{\form}$ of all the points satisfying formula $\form$ (see e.g.~\cite{vBB07}). Topological spaces form the fundamental basis for reasoning about space, but the idempotence property of topological closure turns out to be too restrictive. For instance, discrete structures useful for certain representations of space, like general graphs, cannot be captured. To that purpose, a more liberal notion of space, namely that of {\em closure spaces}, has been proposed in the literature that does not require  idempotence of the closure operator (see~\cite{Gal03} for an in-depth treatment of the subject). 

In~\cite{Ci+14,Ci+16} the \emph{Spatial Logic for Closure Spaces} ($\slcs$) has been proposed that enriches modal logic with a {\em surrounded} operator $\lsurr$ such that a point $x$ satisfies
$\form_1 \, \lsurr \, \form_2$ if it lays in a set $A \subseteq \sem{\form_1}$ and the external border of which is composed by points in $\sem{\form_2}$, i.e. $x$ satisfies $\form_1$ and is surrounded by points satisfying $\form_2$. A model checking algorithm has been proposed in 
~\cite{Ci+14,Ci+16}  that has been implemented in  the tool \topochecker~\cite{Ci+18,Ci+15} and, more recently, in \voxlogica~\cite{Be+19}, a tool specialised for spatial model-checking digital images, that can be modelled as {\em adjacency spaces}, a special case of closure spaces.

The logic and its model checkers have been applied to several case
studies~\cite{Ci+16,Ci+15,Ci+18} including a declarative approach to
medical image analysis~\cite{Be+19,Be+19a,Ba+20,Be+21}. An encoding of the
discrete Region Connection Calculus RCC8D of~\cite{Ra+13} into the collective
variant of \slcs{} has been proposed in~\cite{Ci+19b}. The logic has also
inspired other approaches to spatial reasoning in the context of signal temporal
logic and system monitoring~\cite{BBLN17,NBCLM18} and in the verification of cyber-physical systems \cite{TKG17}.
In~\cite{Li+20} it has been shown that \slcs{} cannot express topological separation and connectedness; the authors propose a notion of {\em path preserving bisimulation}.

A key question, when reasoning about modal logics and their models, is the relationship between  logical equivalences and notions of bisimilarity defined on their underlying models. This is also important because the existence of such bisimilarities, and their logical characterisation, makes it possible to exploit minimisation procedures for  bisimilarity for the purpose of efficient model-checking. 

In this paper we study three different notions of bisimilarity for closure models, i.e.
models based on closure spaces. 
The first one is {\em \cm-bisimilarity}, that is an adaptation for closure models of classical Topo-bisimilarity for topological models~\cite{vBB07}. Actually,  \cm-bisimilarity is an instantiation to closure models of Monotonic bisimulation on neighbourhood models~\cite{vB+17,Han03}. In fact,  it is defined
using the interior operator of closure models, that is monotonic, thus making closure models an intantiation of monotonic neighbourhood models.
We show that \cm-bisimilarity is weaker than homeomorphism and provide a logical 
characterisation of the former, namely the Infinitary Modal Logic.

We then present a refinement of
 \cm-bisimilarity, specialised  for {\em quasi-discrete} closure
models, i.e. closure models where every point has a minimal neighbourhood. In this case, the closure of a set of points---and so also its interior---can be expressed using an underlying binary relation; this gives raise to both a {\em direct} closure and interior of a set, and a {\em converse} closure and interior, the latter being obtained using the inverse of the  binary relation. This, in turn, induces a refined notion of bisimilarity, {\em \cm-bisimilarity with converse},
which, on quasi-discrete closure models, is shown to be strictly stronger than \cm-bisimilarity. We also introduce a notion of {\em Trace Equivalence} for closure models and show that \cm-bisimilarity with converse implies Trace Equivalence, but not the other way around. 

We extend the Infinitary Modal Logic with the converse of its unary modal operator and show that the resulting logic characterises \cm-bisimilarity with converse.
\cm-bisimulation with converse, as \cm-bisimulation, is defined using the {\em interior} operator, $\interior$.
We show that \cl-bisimulation, proposed in~\cite{Ci+20}, and resembling Strong Back-and-Forth bisimilarity for processes proposed in~\cite{De+90}, coincides with \cm-bisimulation with converse. The definition of  \cl-bisimulation uses the  {\em closure} operator $\closure$, i.e. the dual of $\interior$. The advantage of using directly the closure operator, which is
the foundational operator of closure spaces, is given by its intuitive interpretation
in quasi-discrete closure model{s} that makes several proofs simpler. We recall here that in~\cite{Ci+20} a minimisation algorithm for \cl-bisimulation, and related tool, \minilogica, have been proposed as well.
We show that the infinitary extension \islcs{} of (a variant of) \slcs,  fully characterises \cl-bisimulation. The variant of \slcs{} of interest here is the one with two modal operators expressing {\em (conditional) reachability}. More specifically, one operator expresses the possibility that a point in space {\em may reach} an area satisfying a given formula\footnote{By ``area satisfying'' here we mean ``all the points of which satisfy''.} $\form_1$ via a path the points of which satisfy a formula $\form_2$; the other expresses the possibility that a point in space {\em may be reached} from an area satisfying a given formula $\form_1$ via a path the points of which satisfy a formula $\form_2$. The classical Infinitary Modal Logic modal operator, and its converse, can be derived from the reachability operators of \slcs{,} when the underlying model is quasi-discrete\footnote{We also show that, for {\em general} \cm{,} the {\em surrounded} operator of \slcs{} can be derived from the reachability ones.}. This last result brings to the
coincidence of \cm-bisimilarity with converse and \cl-bisimilarity for quasi-discrete closure models.

\cm-bisimilarity, and \cm-bisimilarity with converse, play an important role  as they are the  counterpart of classical Topo-bisimilarity. On the other hand,
they turn out to be rather too strong when one has in mind intuitive relations on space like, e.g. scaling, that may be useful when dealing with models representing images 
(see~\cite{Ba+20,Be+21,Be+19,Be+19a} for details).
For this purpose, we introduce our second, weaker notion of bisimilarity,
namely  \pth-bisimulation that is essentially based on {\em reachability} of bisimilar points by means of paths over the underlying space. We show that, for quasi-discrete closure models,
 \pth-bisimilarity is strictly weaker than \cm-bisimilarity with converse; we also show 
 that a similar result does  {\em not} hold for general \cm-bisimilarity and general closure models.  We provide a remedy to such problem, for the case in which the space is path-connected, using an adaptation for \cm{s} of \inl-bisimilarity~\cite{vB+17}. 
We furthermore show that \pth-bisimilarity and Trace Equivalence for general \cm{s} are uncomparable. We finally consider the Infinitary Modal Logic where the modal operator is replaced by two unary modalities---one for (unconditional) reachability {\em of} an area satisfying a given formula, and the other for (unconditional) reachability {\em from} an area satisfying a given formula---and prove that such a logic characterises \pth-bisimilarity.

\pth-bisimilarity is in some sense too weak, abstracting too much; nothing whatsoever is required of the relevant paths, except their starting points being fixed and related by the bisimulation, and  their end-points be in the bisimulation as well. A little bit deeper insight into the structure of such paths would be desirable as well as some, relatively high level, requirements on them. To that purpose we resort to a notion of ``compatibility'' between relevant paths that essentially requires each of them to be composed by a sequence of non-empty ``zones'', with the total number of zones in each of the two paths being the same, while the length of each zone being arbitrary (but at least 1); 
each element of one path in a  given zone is required to be related by the bisimulation to all the elements in the corresponding zone in the other path.
This idea of compatibility gives rise to the third notion of bisimulation, namely {\em Compatible Path bisimulation}, \cop-bisimulation, which is strictly stronger than \pth-bisimilarity and, for quasi-discrete closure model{s}, strictly weaker than \cm-bisimilarity with converse. We also show that Compatible Path bisimulation and
Trace Equivalence are uncomparable and we provide a logical characterisation of Compatible Path bisimulation using a restricted version of \islcs{.}
The notion of \cop-bisimulation is reminiscent of that of the {\em Equivalence with respect to Stuttering} for transition systems proposed in~\cite{BCG88}, although in a  different context and with different definitions as well as underlying notions. 


The  paper is organised as follows:
after having settled the context and offered some preliminary notions and definitions in Section~\ref{sec:Preliminaries}, in Section~\ref{sec:CMbisimilarity} we present \cm-bisimilarity. Section~\ref{sec:CMCbisimilarity} deals with \cm-bisimulation with converse.
Section~\ref{sec:Pathbisimilarity} addresses \pth-bisimilarity, while in 
Section~\ref{sec:CoPabisimilarity} \cop-bisimilarity is dealt with.
We conclude the paper with Section~\ref{sec:Conclusions}.
All detailed proofs are provided in the Appendix.
\section{Preliminaries}\label{sec:Preliminaries}

In this paper, given set $X$, $\pws(X)$ denotes the powerset of $X$;
for $Y \subseteq X$ we let $\overline{Y}$ denote $X\setminus Y$, i.e. the complement of $Y$. For function $f:X \to Y$ and
 $A \subseteq X$, we let $f(A)$ be defined as $\ZET{f(a)}{a \in A}$.
For binary relation $R \subseteq X \times X$ we let
$R^{-1}$ denote the relation  $\ZET{(x_1,x_2)}{(x_2,x_1)\in R}$, whereas 
$R^{r}$ ($R^{s}$, respectively)  will denote the {\em transitive  closure} ({\em symmetric closure}, respectively) of $R$, and
$R^{rst}$ will denote the {\em reflexive}, {\em symmetric} and {\em transitive} closure 
of $R$. In this section, we recall several definitions and results on closure spaces, most of which are taken from~\cite{Gal03}.

\begin{definition}[Closure Space - \cs]\label{def:ClosureSpace}
A {\em closure space}, \cs{} for short, is a pair $(X,\closure)$ where $X$ is a non-empty set (of {\em points})
and {\em $\closure: \pws(X) \to \pws(X)$} is a function satisfying the following axioms:
\begin{enumerate}
\item $\closure(\emptyset)=\emptyset$;
\item $A \subseteq \closure(A)$ for all  $A \subseteq X$;
\item $\closure(A_1 \cup A_2) = \closure(A_1) \cup \closure(A_2)$ for all $A_1,A_2\subseteq X$.
\closedefi
\end{enumerate}
\end{definition}

It is worth pointing out that topological spaces coincide with the sub-class of \cs{s} for which also the {\em idempotence} axiom $\closure(\closure(A) = \closure(A)$ holds. 
The {\em interior} operator is the dual of closure: $\interior(A)=\overline{\closure(\overline{A})}$. A {\em neighbourhood} of a point $x \in X$  is any set $A \subseteq X$ such that $x \in \interior(A)$. 
A minimal neighbourhood of a point $x$ is a neighbourhood $A$ of $x$ such that
$A \subseteq A'$ for any other neighbourhood $A'$ of $x$. 

We recall here the fact that the {\em closure} and, consequently,
the {\em interior} operators are monotonic: if $A_1 \subseteq A_2$ then
$\closure(A_1) \subseteq \closure(A_2)$ and 
$\interior(A_1) \subseteq \interior(A_2)$.

\begin{definition}[Quasi-discrete \cs{} - \qdcs{}]\label{def:QDClosureSpace}
A {\em quasi-discrete closure space} is a \cs{} $(X,\closure)$ such that any 
of the following equivalent conditions holds:
\begin{enumerate}
\item each $x \in X$ has a minimal neighbourhood;
\item for each $A \subseteq X$ it holds that $\closure(A) =
\bigcup_{x\in A}\closure(\SET{x})$.\closedefi
\end{enumerate}
\end{definition}

Given a relation $R \subseteq X \times X$, let function 
$\closure_{R}: \pws(X) \to \pws(X)$ be defined as follows: for all $A \subseteq X$,
$
\closure_R(A) = A \cup \ZET{x \in X}{\mbox{there exists } a \in A \mbox{ s.t. } (a,x)\in R}.
$
It is easy to see that, for any $R$, $\closure_{R}$ satisfies all the axioms of Definition~\ref{def:ClosureSpace} and so $(X, \closure_{R})$ is a \cs{.}
The following theorem is a standard result in the theory of \cs{s}~\cite{Gal03}:
\begin{theorem}
A \cs{} $(X, \closure)$ is quasi-discrete if and only if there is a relation $R \subseteq X \times X$ such that $\closure = \closure_{R}$. \qed
\end{theorem}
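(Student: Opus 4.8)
The plan is to establish the two implications separately; the heart of the argument is the observation that the binary relation realising a quasi-discrete closure operator can be read off directly from the closures of singletons, so that both directions reduce to unfolding the definition of $\closure_R$ and invoking Axiom~2 of Definition~\ref{def:ClosureSpace}.

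For the ``if'' direction, suppose $\closure = \closure_R$ for some $R \subseteq X \times X$. I would simply unfold the definition of $\closure_R$: for every $A \subseteq X$,
\[
\closure_R(A) \;=\; A \,\cup\, \ZET{x \in X}{\exists a \in A.\ (a,x) \in R} \;=\; \bigcup_{a \in A}\Bigl(\SET{a} \cup \ZET{x \in X}{(a,x) \in R}\Bigr) \;=\; \bigcup_{a \in A}\closure_R(\SET{a}),
\]
the middle step being just distributivity of union over the index set $A$ (with the case $A=\emptyset$ trivial, both sides being $\emptyset$). Hence $\closure$ satisfies condition~2 of Definition~\ref{def:QDClosureSpace}, so $(X,\closure)$ is quasi-discrete.

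For the ``only if'' direction, suppose $(X,\closure)$ is quasi-discrete, and take $R = \ZET{(x,y)}{y \in \closure(\SET{x})}$. For an arbitrary $A \subseteq X$ one computes
\[
\closure_R(A) \;=\; A \,\cup\, \ZET{y \in X}{\exists a \in A.\ y \in \closure(\SET{a})} \;=\; A \,\cup\, \bigcup_{a \in A}\closure(\SET{a}) \;=\; \bigcup_{a \in A}\closure(\SET{a}),
\]
where the last equality uses $a \in \closure(\SET{a})$ (Axiom~2 of Definition~\ref{def:ClosureSpace}), which makes the extra copy of $A$ redundant. By condition~2 of Definition~\ref{def:QDClosureSpace}, available precisely because $(X,\closure)$ is quasi-discrete, the right-hand side is exactly $\closure(A)$; hence $\closure = \closure_R$, as required.

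I do not expect a genuine obstacle here: both directions are a matter of unfolding definitions together with Axiom~2. The one point that deserves care is that the argument uses condition~2 of Definition~\ref{def:QDClosureSpace} rather than condition~1; this is legitimate since the two are already known to be equivalent, but had one been forced to argue from the existence of minimal neighbourhoods directly, one would first have to identify the minimal neighbourhood of a point $x$ with $\SET{x} \cup \ZET{a}{(a,x) \in R}$ (equivalently, with $\interior$ of that set), which is a slightly more delicate computation. Routing the proof through condition~2 sidesteps this. (Note also that no uniqueness of $R$ is claimed, and indeed $R$ is determined by $\closure$ only up to reflexive pairs, since $\closure_R$ always absorbs them.)
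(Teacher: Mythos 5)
Your proof is correct. Note that the paper itself gives no proof of this theorem---it is stated as a standard result and attributed to~\cite{Gal03}---so there is nothing to compare against; your argument (reading $R$ off the closures of singletons via $R=\ZET{(x,y)}{y\in\closure(\SET{x})}$ and verifying both directions against condition~2 of Definition~\ref{def:QDClosureSpace}, with Axiom~2 absorbing the extra copy of $A$) is the standard one and fills the gap cleanly.
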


In the sequel, whenever $(X, \closure)$ is quasi-discrete, we will
let $\closureF$ denote $\closure_R$, and, consequently, we will
let $(X, \closureF)$ denote the space, abstracting from the specification of relation $R$, 
when the latter is not necessary. 
Moreover, we will let $\closureT$ denote $\closure_{R^{-1}}$.
$\interiorF$ and $\interiorT$ are defined in the obvious way: 
$\interiorF A= \overline{\closureF(\overline{A})}$ and
$\interiorT A= \overline{\closureT (\overline{A})}$. 

An example of  the difference between $\closureF$ and $\closureT$ is shown in Figure~\ref{fig:CloExample}.

\begin{figure}
\def\samplesz{2cm}
\centering
\subfloat[][]
{
\includegraphics[height=\samplesz]{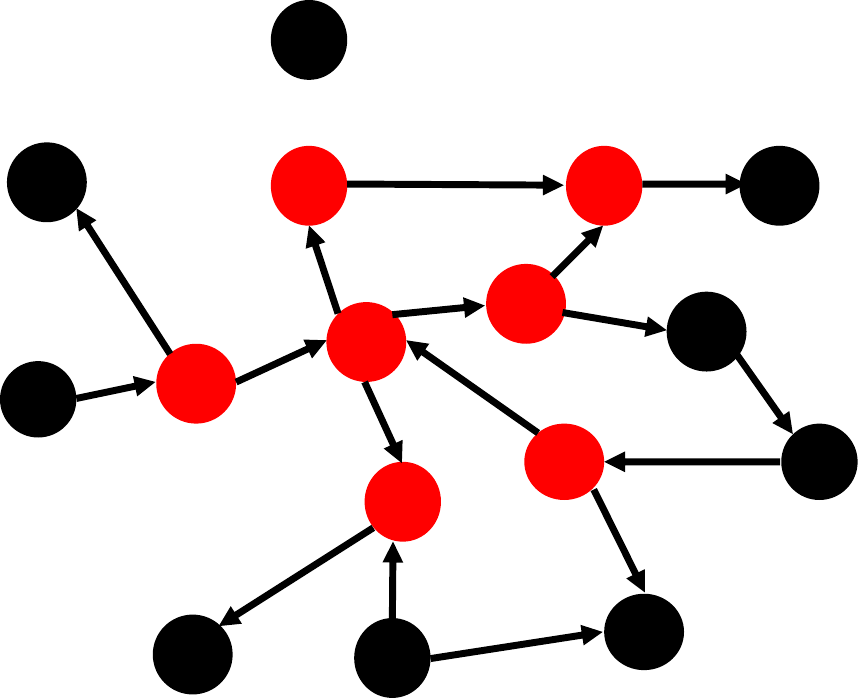}
\label{fig:Phi}
}\quad
\centering
\subfloat[][]
{
\includegraphics[height=\samplesz]{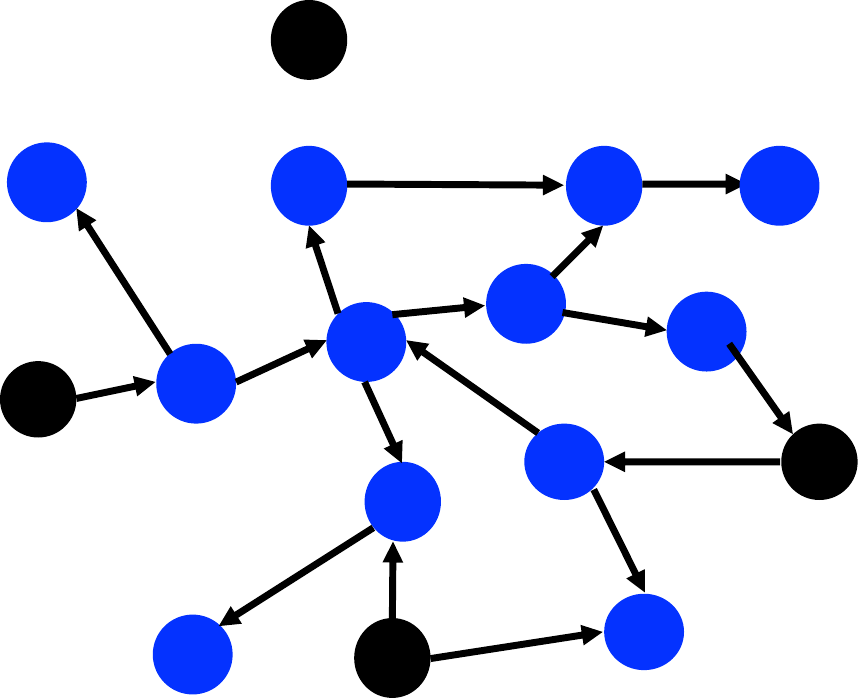}
\label{fig:NPhi}
}\quad
\centering
\subfloat[][]
{
\includegraphics[height=\samplesz]{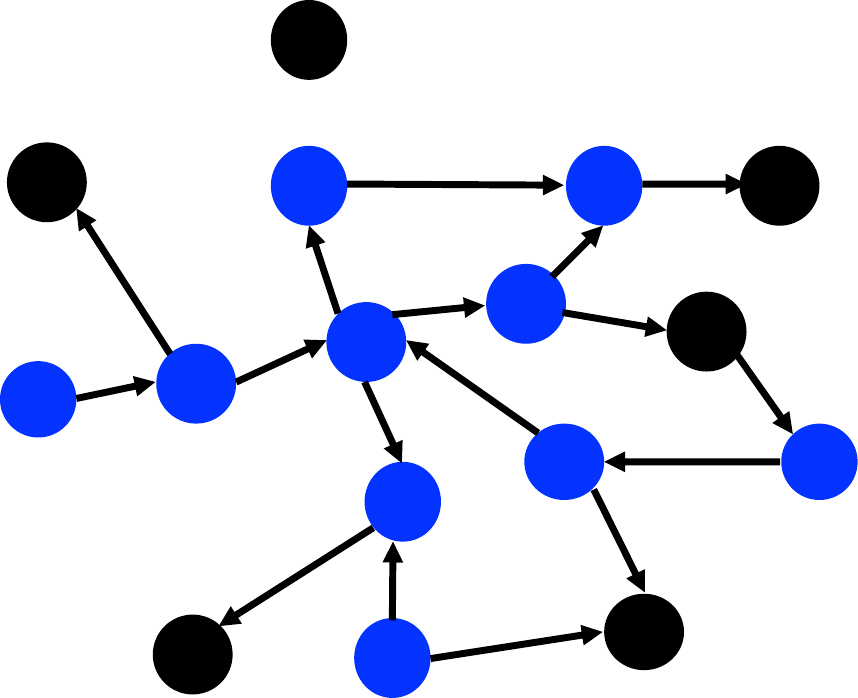}
\label{fig:RNPhi}
}
\caption{Given the points satisfying $\form$, shown in red in Fig.~\ref{fig:Phi}, those satisfying 
$\closureF(\form)$ are shown in blue in Fig.~\ref{fig:NPhi} and those satisfying 
$\closureT(\form)$ are shown in blue in Fig.~\ref{fig:RNPhi} }
\label{fig:CloExample}

\end{figure}

In the context of the present paper, {\em paths} over closure spaces play an important role;
therefore, we give a formal definition of paths as continuous functions below.

\begin{definition}[Continuous function]\label{def:ContinuousFunction}
Function $f : X_1 \to X_2$ is a {\em continuous function} from $(X_1,\closure_1)$ to $(X_2,\closure_2)$
if and only if for all sets $A \subseteq X_1$ we have: 
$f (\closure_1(A)) \subseteq \closure_2(f(A))$. \closedefi
\end{definition}

\begin{definition}[Connected space]\label{def:ConnectedSpace}
Given \cs{} $(X,\closure)$, $A \subseteq X$ is {\em connected} if
it is {\em not} the union of two non-empty separated sets.
$A_1, A_2 \subseteq X$ are {\em separated} if 
$A_1 \cap \closure(A_2) = \closure(A_1) \cap A_2 = \emptyset$.
$(X,\closure)$ is {\em connected} if $X$ is connected. \closedefi
\end{definition}

\begin{definition}[Index space]
An {\em index space} is a {\em connected} \cs{} $(I,\closure)$ 
equipped with a total order $\leq \subseteq I \times I$ with a bottom element $0$. 
We write $\iota_1 < \iota_2$ whenever $\iota_1 \leq \iota_2$ and $\iota_1 \not= \iota_2$.\closedefi
\end{definition}

\begin{definition}[Path]\label{def:path}
A {\em path} in \cs{} $(X,\closure)$ is a continuous function from 
an index  space $\calJ=(I,\closure^{\calJ})$  to $(X,\closure)$.
Path $\pi$ is {\em bounded} if there exists $\ell \in I$ s.t. 
$\pi(\iota)=\pi(\ell)$ for all $\iota$ such that $\ell \leq \iota$; 
we call $\ell$ the {\em length} of $\pi$, written $\pthlen(\pi)$. 

For bounded path $\pi$ we define the {\em domain} of $\pi$, $\dom(\pi)$,  as the set
$\ZET{\iota}{\iota \leq \pthlen(\pi)}$ and 
$\rng(\pi) = \ZET{\pi(\iota)}{\iota \leq (\pthlen \,\pi)}$ (the {\em range} of $\pi$).
\closedefi
\end{definition}

Of particular importance in the present paper are {\em quasi-discrete} paths and {\em Euclidean}
paths. Quasi-discrete paths are paths having $(\nats,\closure_{\succ})$ as 
index space, where $\nats$ is the set of natural numbers and $\succ$ is the {\em successor} relation $\succ = \ZET{(m,n)}{n=m+1}$. The index space of Euclidean paths is instead the set
of non-negative real numbers equipped with the classical closure operator.

\begin{proposition}\label{prp:FT}
For all \qdcs{} $(X, \closureF), A, A_1, A_2 \subseteq X, x_1,x_2 \in X$, and function $\pi: \nats \to X$ the following holds:
\begin{enumerate}
\item $\closureT(A) = 
A \cup \ZET{x \in X}{\mbox{there exists } a \in A \mbox{ such that } (x,a)\in R}$;
\item
$x_1 \in \closureT(\SET{x_2})$ if and only if $x_2 \in \closureF(\SET{x_1})$;
\item 
$\closureT(A)=\ZET{x}{x\in X \mbox{ and exists } a \in A \mbox{ such that } a \in \closureF(\SET{x})}$;
\item if $A_1 \subseteq A_2$, then $\closureT(A_1) \subseteq \,\closureT(A_2)$ and
$\interiorT(A_1) \subseteq \,\interiorT(A_2)$.
\item\label{path}  $\pi$ is a path over $X$ if and only if for all  
$i \in \dom(\pi) \setminus \SET{0}$, the following holds:\\
$ \pi(i) \in \closureF(\pi(i-1))$  and  $\pi(i-1) \in \closureT(\pi(i))$.
\end{enumerate}
\end{proposition}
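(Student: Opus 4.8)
The plan is to treat items 1--4 as direct consequences of the definitions and of the monotonicity remark already recorded in the Preliminaries, and then to bootstrap item 5 from items 1--2 together with quasi-discreteness of both the index space and the target space.

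For item 1 I would simply substitute $\closureT = \closure_{R^{-1}}$ into the definition of $\closure_{(-)}$ and rewrite ``$(a,x)\in R^{-1}$'' as ``$(x,a)\in R$''. Item 2 then follows by instantiating item 1 at $A=\SET{x_2}$ and the definition of $\closureF=\closure_R$ at $A=\SET{x_1}$: both sides reduce to the disjunction ``$x_1=x_2$ or $(x_1,x_2)\in R$''. For item 3 I would again use item 1: $x\in\closureT(A)$ iff $x\in A$ or $(x,a)\in R$ for some $a\in A$, while ``$a\in\closureF(\SET{x})$ for some $a\in A$'' unfolds, by the definition of $\closureF$, to ``$a=x$ or $(x,a)\in R$ for some $a\in A$'', which is literally the same condition; alternatively this is just item 2 read as $\closureT(A)=\ZET{x}{\exists a\in A.\ x\in\closureT(\SET{a})}$, i.e.\ quasi-discreteness of $\closureT$ itself. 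For item 4, the point is that $\closure_{R^{-1}}$ satisfies the three axioms of Definition~\ref{def:ClosureSpace} --- as was observed for every relation just before the characterisation theorem for \qdcs{s} --- hence $\closureT$ is a closure operator and therefore monotonic by the monotonicity remark for closures; monotonicity of $\interiorT$ follows by complementation, since $A_1\subseteq A_2$ gives $\overline{A_2}\subseteq\overline{A_1}$, hence $\closureT(\overline{A_2})\subseteq\closureT(\overline{A_1})$, hence $\interiorT(A_1)=\overline{\closureT(\overline{A_1})}\subseteq\overline{\closureT(\overline{A_2})}=\interiorT(A_2)$.

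For item 5, recall that (for a function with domain $\nats$) being a path over $X$ means being a continuous function from the index space $(\nats,\closure_\succ)$ to $(X,\closureF)$, i.e.\ $\pi(\closure_\succ(A))\subseteq\closureF(\pi(A))$ for every $A\subseteq\nats$. The crucial step is to reduce this universally quantified inclusion to singletons: $(\nats,\closure_\succ)$ is quasi-discrete (being $\closure_R$ for $R=\succ$) and $(X,\closureF)$ is quasi-discrete by hypothesis, so by condition 2 of Definition~\ref{def:QDClosureSpace} we have $\closure_\succ(A)=\bigcup_{n\in A}\closure_\succ(\SET{n})$ and $\closureF(\pi(A))=\bigcup_{n\in A}\closureF(\SET{\pi(n)})$; since taking images commutes with unions, the inclusion for all $A$ is equivalent to the inclusion for all singletons $A=\SET{n}$. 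Now $\closure_\succ(\SET{n})=\SET{n}\cup\ZET{m}{(n,m)\in\succ}=\SET{n,n+1}$, so the singleton condition reads $\pi(n+1)\in\closureF(\SET{\pi(n)})$ for all $n\in\nats$ (the membership $\pi(n)\in\closureF(\SET{\pi(n)})$ being automatic), that is, $\pi(i)\in\closureF(\pi(i-1))$ for all $i\geq 1$. Finally, by item 2 this is equivalent to $\pi(i-1)\in\closureT(\pi(i))$, so the single inclusion and the conjunction appearing in the statement are interchangeable, which closes the chain of equivalences.

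I expect the only genuinely delicate point to be the reduction-to-singletons step in item 5: it relies on distributing $\pi$ over the (in general infinite) union $\bigcup_{n\in A}\closure_\succ(\SET{n})$ and on the target closure distributing over $\bigcup_{n\in A}\SET{\pi(n)}$, which is precisely what quasi-discreteness (Definition~\ref{def:QDClosureSpace}(2)) supplies on both the source and the target --- everything else is bookkeeping. A secondary, purely notational matter worth flagging is that in item 5 the set ``$\dom(\pi)$'' is being used for the whole index set $\nats$ even though $\pi$ need not be a bounded path, so ``$i\in\dom(\pi)\setminus\SET{0}$'' should simply be read as ``$i\in\nats$ with $i\geq 1$''.
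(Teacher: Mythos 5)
Your proof is correct and follows essentially the same route as the paper's: items 1--4 are dismissed as immediate consequences of the definitions, and item 5 is proved by reducing continuity to the singleton condition $\pi(\closure_{\succ}(\SET{n}))\subseteq\closureF(\SET{\pi(n)})$ and computing $\closure_{\succ}(\SET{n})=\SET{n,n+1}$. The only (welcome) streamlining is that you derive the $\closureT$ half of item 5 from item 2 instead of repeating the argument, and you package the backward direction as a reduction to singletons via quasi-discreteness where the paper instead decomposes $\closure_{\succ}(N)=N\cup\ZET{i}{i-1\in N}$ and invokes monotonicity of $\closureF$ --- equivalent bookkeeping.
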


\begin{remark}
Note that the {\em definition} of the closure operator for \qdcs{s} given in~\cite{Gal03} coincides with  $\closureT A$, as given in the first item of Proposition~\ref{prp:FT}.\closerem
\end{remark}

In the sequel we fix a set  $\ap$  of {\em atomic proposition letters}.

\begin{definition}[Closure model - \cm{}]\label{def:ClosureModel}
A {\em closure model}, \cm{} for short, is a tuple $\model=(X,\closure, \peval)$, with $(X,\closure)$ a \cs{,} and $\peval: \ap \to \pws(X)$ the (atomic predicate) valuation function assigning to each $p\in \ap$ the set of points where $p$ holds. \closedefi
\end{definition}

The following definition adapts  the  notion of homeomorphism for topological spaces, as given in~\cite{Mor21}, to the case of closure spaces.

\begin{definition}[Homeomorphism]\label{def:Homeo}
A homeomorphism between \cm{s} $\model_1 = (X_1,\closure_1, \peval_1)$ and
$\model_2 = (X_2,\closure_2, \peval_2)$ is
a {\em bijection} $h:X_1 \to X_2$ s.t. for all $x_1 \in X_1$, $x_2 \in X_2, A_1 \subseteq X_1$ and $A_2 \subseteq X_2$, the following holds:
\begin{enumerate}
\item $\invpeval_1(x_1) = \invpeval_2(h(x_1))$;
\item $\invpeval_2(x_2) = \invpeval_1(h^{-1}(x_2))$;
\item $h(\interior_1(A_1)) = \interior_2(h(A_1))$;
\item $h^{-1}(\interior_2(A_2)) = \interior_1(h^{-1}(A_2))$.
\end{enumerate}
We say that $x_1,x_2 \in X$ are homeomorphic, written $x_1\homeo x_2$ if and only if there
is an  homeomorphism $h$ such that $x_2=h(x_1)$.
\closedefi
\end{definition}

\noindent
An alternative, equivalent, definition can be obtained by requiring that
$h(\closure_1(A_1))$ $=$ $\closure_2(h(A_1))$ and 
$h^{-1}(\closure_2(A_2)) = \closure_1(h^{-1}(A_2))$
instead of 
$h(\interior_1(A_1)) = \interior_2(h(A_1))$ and
$h^{-1}(\interior_2(A_2)) = \interior_1(h^{-1}(A_2))$.
%

All the definitions given above for \cs{s} apply to \cm{s} as well; thus, a
{\em quasi-discrete closure model} (\qdcm{} for short) is a \cm{} $\model=(X, \closureF,\peval)$ where $(X,\closureF)$ is a \qdcs{.}
For model  $\model=(X,\closure,\peval)$ we will often write $x\in \model$ when $x \in X$; similarly we will speak of paths in $\model$ meaning paths in $(X,\closure)$;
we let 
$\pths_{\calJ,\model}$ denote the set of all paths in $\model$ with index space
$\calJ$. 
$\bpths_{\calJ,\model}$ denotes the set of all {\em bounded} paths in $\model$, whereas
for $x \in X$, $\bpthsF_{\calJ,\model}(x)$ denotes the set 
$\ZET{\pi\in \bpths_{\calJ,\model}}{\pi(0)=x}$ and, similarly,
$\bpthsT_{\calJ,\model}(x)$ denotes the set 
$\ZET{\pi\in \bpths_{\calJ,\model}}{\pi(\pthlen(\pi))=x}$. We will refrain from writing the
subscripts $_{\calJ,\model}$ whenever not necessary.

We often write $x\leadspth{\pi} x'$ if  $\pi \in \bpthsF_{\calJ,\model}(x) \cap \bpthsT_{\calJ,\model}(x')$ and
$x\leadspth{} x'$ if there exists $\pi$ s.t. $x\leadspth{\pi} x'$.
We say that $\model$ is {\em path-connected} if for all $x,x' \in \model$ we have $x\leadspth{} x'$.

Finally, for $\pi \in \pths_{\calJ,\model}$ with $\calJ=(I,\closure^{\calJ})$, 
we let $\T(\pi)$ denote the {\em trace} of $\pi$, namely 
$\T: \pths_{\calJ,\model} \to (I \to \pws(\ap))$ with
$\T(\pi)(\iota) = \invpeval(\pi(\iota))$. We say that $x_1,x_2\in \model$ are {\em trace equivalent}, written $x_1 \, \treq \, x_2$ if $\T(\bpthsF(x_1)) = \T(\bpthsF(x_2))$ and
$\T(\bpthsT(x_1)) = \T(\bpthsT(x_2))$.

In the sequel, for logic $\calL$, formula $\form \in \calL$, and model $\model=(X,\closure,\peval)$ we let  $\sem{\form}^{\model}_{\calL}$ denote the set $\ZET{x\in X}{\model,x \models_{\calL} \form}$ of all the points in $\model$ that satisfy $\form$, where $\models_{\calL}$ is the satisfaction relation for $\calL$. For the sake of readability, we will refrain from writing the subscript $_{\calL}$ when this will not cause confusion.

\section{\cm-bisimilarity}\label{sec:CMbisimilarity}

\subsection{\cm-bisimilarity}

The first notion of bisimilarity that we consider is \cm-bisimilarity. This notion stems from a natural adaptation for \cm{s} of Topo-bisimilation for topological models, as defined e.g. in~\cite{vBB07}. We recall such definition below, where $(X,\tau,\peval)$ is the topological model with set of points $X$, open sets $\tau$,  and atomic predicate evaluation function $\peval$:

\begin{definition}[Topo-bisimulation]\label{def:Topobisimulation}
A {\em topological bisimulation} or simply a {\em topo-bisimulation} between two topo-models 
$\model_1=(X_1,\tau_1,\peval_1)$ and $\model_2= (X_2,\tau_2,\\\peval_2)$
is a non-empty relation $T\subseteq X_1 \times X_2$ such that if $(x_1,x_2)\in T$ then:
\begin{enumerate}
\item $x_1 \in \peval_1(p)$ if and only if $x_2 \in \peval_2(p)$ for each $p \in \ap$;
\item (forth): $x_1 \in U_1 \in \tau_1$ implies there exists $U_2 \in\tau_2$ such that $x_2 \in U_2$ and for all $x_2' \in U_2$ there exists $x_1' \in U_1$ such that $(x_1',x_2')\in T $;
\item (back): $x_2 \in U_2 \in \tau_2$ implies there exists $U_1 \in\tau_1$ such that $x_1 \in U_1$ and for all $x_1' \in U_1$ there exists $x_2' \in U_2$ such that $(x_1',x_2')\in T $. \closedefi
\end{enumerate}
\end{definition}

In the context of \cm{s}, we replace the notion of {\em open set} containing a given point $x$ with that of {\em neighbourhood} of $x$, so that we get the following\footnote{In this paper, we provide all major definitions and result with respect to a {\em single} model
$\model=(X,\closure,\peval)$ whereas some authors do this with respect to {\em two} models $\model_1=(X_1,\closure_1,\peval_1)$ and $\model_2=(X_2,\closure_2,\peval_2)$. The two approaches are interchangeable and we find the former a little bit simpler from the notational point of view.}

\begin{definition}[\cm-bisimilarity]\label{def:CMbisimilarity}
Given CM $\model=(X,\closure, \peval)$, a non empty relation 
$B \subseteq X \times X$ is a {\em \cm-bisimulation  over $X$} if, whenever $(x_1,x_2) \in B$, the following holds:
\begin{enumerate}
\item $\invpeval(x_1) = \invpeval(x_2)$;
\item for all neighbourhoods $S_1$ of $x_1$  there is a neighbourhood $S_2$ of $x_2$ such that
for all $s_2 \in S_2$, there is $s_1 \in S_1$ with $(s_1,s_2)\in B$;
\item for all neighbourhoods $S_2$ of $x_2$ there is a neighbourhood $S_1$ of $x_1$ such that
for all $s_1 \in S_1$, there is $s_2 \in S_2$ with $(s_1,s_2)\in B$. 
\end{enumerate}
$x_1$ and $x_2$ are {\em \cm-bisimilar}, written $x_1\,\cmbis^{\model}\, x_2$,  if and only if there is a \cm-bisimulation $B$ over $X$  such that $(x_1,x_2) \in B$. \closedefi
\end{definition}

The above definition is very similar to that of bisimilarity between {\em monotonic neighbourhood spaces}~\cite{vB+17,Han03}, and, in fact, monotonicity of the $\interior$ operator makes it legitimate to interpret \cm{s} as an instantiation of the notion of monotonic neighbourhood models (see~\cite{vB+17,Han03} for details).\\

\cm-bisimilarity is coarser than homeomorphism:

\begin{proposition}
\label{prp:HomeoImplCMbis}
For all \cm{s} $\model = (X,\closure, \peval)$ and $x_1, x_2 \in X$ the following holds:
$x_1\, \homeo \, x_2$ implies $x_1\, \cmbis \, x_2$.
\end{proposition}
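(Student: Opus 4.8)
The plan is to exhibit a concrete $\cm$-bisimulation whose graph contains the pair $(x_1,x_2)$, namely the graph of the homeomorphism $h$ witnessing $x_1 \homeo x_2$. So let $h : X \to X$ be a homeomorphism with $x_2 = h(x_1)$, and set $B = \ZET{(x,h(x))}{x \in X}$. This is non-empty (since $X$ is non-empty, and anyway it contains $(x_1,x_2)$), so it remains to check the three clauses of Definition~\ref{def:CMbisimilarity} for an arbitrary pair $(x,h(x)) \in B$.

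For clause~1, $\invpeval(x) = \invpeval(h(x))$ is exactly condition~1 in the definition of homeomorphism (Definition~\ref{def:Homeo}). For clause~2 (forth), suppose $S_1$ is a neighbourhood of $x$, i.e. $x \in \interior(S_1)$. I would take $S_2 = h(S_1)$ and show it is a neighbourhood of $h(x)$: by condition~3 of Definition~\ref{def:Homeo} we have $h(\interior(S_1)) = \interior(h(S_1)) = \interior(S_2)$, and since $x \in \interior(S_1)$ we get $h(x) \in \interior(S_2)$, so $S_2$ is indeed a neighbourhood of $h(x)$. Now for any $s_2 \in S_2 = h(S_1)$ there is, by definition of $h(S_1)$, some $s_1 \in S_1$ with $h(s_1) = s_2$, hence $(s_1, s_2) \in B$. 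Clause~3 (back) is handled symmetrically, using $h^{-1}$ together with conditions~2 and~4 of Definition~\ref{def:Homeo}: given a neighbourhood $S_2$ of $h(x)$, take $S_1 = h^{-1}(S_2)$; condition~4 gives $h^{-1}(\interior(S_2)) = \interior(h^{-1}(S_2)) = \interior(S_1)$, so $x = h^{-1}(h(x)) \in \interior(S_1)$ and $S_1$ is a neighbourhood of $x$; and for any $s_1 \in S_1 = h^{-1}(S_2)$ we have $h(s_1) \in S_2$ with $(s_1, h(s_1)) \in B$, as required.

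Thus $B$ is a $\cm$-bisimulation containing $(x_1, x_2)$, which yields $x_1 \cmbis x_2$. I do not expect any serious obstacle here; the only point requiring a little care is the bookkeeping between the two equivalent formulations of homeomorphism — the definition is stated with the interior operator, which matches the neighbourhood-based Definition~\ref{def:CMbisimilarity} directly, so I would work with that form rather than the closure-based alternative mentioned after Definition~\ref{def:Homeo}. A minor subtlety worth a sentence in the write-up is that $S_2 = h(S_1)$ need not be the image of $S_1$ in any fancier sense than the elementwise one defined in Section~\ref{sec:Preliminaries}, and that $h$ being a bijection is what makes $h^{-1}(S_2)$ behave well in clause~3; both are immediate from the hypotheses.
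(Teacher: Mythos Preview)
The proposal is correct and follows essentially the same approach as the paper: both take $S_2 = h(S_1)$ and use condition~3 of Definition~\ref{def:Homeo} to show $h(x) \in \interior(S_2)$, then use bijectivity of $h$ to find the required $s_1$. The only cosmetic difference is that you take the graph of the single witnessing homeomorphism $h$ as your bisimulation $B$, whereas the paper uses the full relation $\homeo$; the verification is the same either way.
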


The converse of \propo{}~\ref{prp:HomeoImplCMbis} does not hold
as shown in Figure~\ref{fig:CMbisNoImplHomeo} where 
$
\invpeval(x_{11})=\invpeval(x_{21})=\SET{r}\not=\SET{b}=\invpeval(x_{12})=\invpeval(x_{22})=\invpeval(x_{23})
$
and $x_{11}  \cmbis x_{21}$ but $x_{11}  \not\homeo x_{21}$ 
(see Remark~\ref{rem:prp:HomeoImplCMbis} in Appendix~\ref{apx:sec:CMbisimilarity}).
\begin{figure}
\centerline{\resizebox{1.5in}{!}{\includegraphics{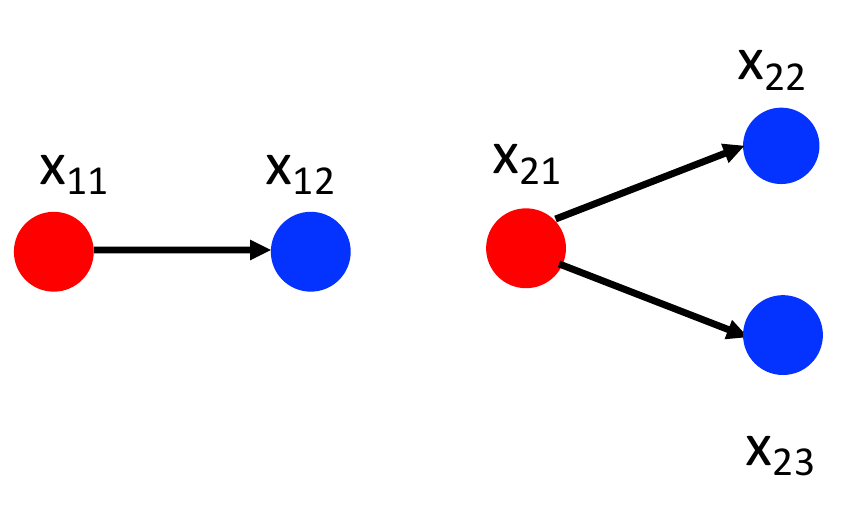}}}
\caption{$x_{11}  \cmbis x_{21}$ but $x_{11}  \not\homeo x_{21}$.\label{fig:CMbisNoImplHomeo}}
\end{figure}

\subsection{Logical Characterisation of \cm-bisimilarity}

In this section, we show that \cm-bisimilarity is characterised by the Infinitary Modal Logic, \iml{} for short. We first recall the definition  of \iml{.}

\begin{definition}[Infinitary Modal Logic - \iml]\label{def:Iml}
For index set $I$ and $p \in  \ap$ the abstract language of \iml{} is defined as follows:
$$
\form ::= p \; \sep \; \lneg \form \; \sep \; \liand_{i \in I} \form_i \; \sep \; \lnear \form.
$$
The satisfaction relation for all \cm{s}  $\model$, points $x \in \model$, and \iml{} formulas $\form$ is defined recursively on the structure of $\form$ as follows:
$$
\begin{array}{r c l c l c l l}
\model,x & \models_{\iml}  & p & \Leftrightarrow & x  \in \peval(p);\\
\model,x & \models_{\iml}  & \lneg \,\form & \Leftrightarrow & \model,x  \models_{\iml} \form \mbox{ does not hold};\\
\model,x & \models_{\iml}  & \liand_{i\in I} \form_i  & \Leftrightarrow &
\model,x  \models_{\iml} \form_i \mbox{ for all } i \in I;\\
\model,x & \models_{\iml}  & \lnear \form & \Leftrightarrow & x \in \closure(\sem{\form}^{\model}).
\end{array}
$$

\closedefi
\end{definition}


\begin{definition}[\iml-Equivalence]\label{def:ImlEq}
Given \cm{} $\model = (X,\closure,\peval)$, the equivalence relation $\imleq^{\model} \, \subseteq \, X \times X$ is defined as: $x_1 \imleq^{\model} x_2$ if and only if for all \iml{} formulas $\form$ the following holds: $\model, x_1 \models_{\iml} \form$ if and only if $\model, x_2 \models_{\iml} \form$.
\closedefi
\end{definition}

In the sequel we will often abbreviate $\imleq^{\model}$ with $\imleq$, leaving the specification of the model implicit.

\begin{theorem}\label{thm:CMbisIsImleq}
For all \cm{s} $\model=(X,\closure,\peval)$, any
\cm-Bisimulation   $B$ over $X$ is included in the equivalence $\imleq^{\model}$.
\closethm
\end{theorem}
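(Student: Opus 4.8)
The plan is to show that $\cm$-bisimilarity is contained in $\iml$-equivalence by proving, by induction on the structure of an arbitrary $\iml$-formula $\form$, the statement: for every $\cm$-bisimulation $B$ over $X$ and every $(x_1,x_2)\in B$, we have $\model,x_1\models_\iml\form$ iff $\model,x_2\models_\iml\form$. Since $B$ is symmetric-free in spirit but the three clauses of Definition~\ref{def:CMbisimilarity} are themselves symmetric (clause~2 is the mirror of clause~3), it suffices in each inductive step to prove one direction, say $\model,x_1\models_\iml\form\Rightarrow\model,x_2\models_\iml\form$, and obtain the converse by swapping the roles of $x_1$ and $x_2$. Equivalently, one may phrase everything in terms of the set $\sem{\form}^\model$ and show the auxiliary claim that for each $\form$, $x_1\in\sem{\form}^\model$ and $(x_1,x_2)\in B$ imply $x_2\in\sem{\form}^\model$.

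First, the base case $\form=p$ is immediate from clause~1 of Definition~\ref{def:CMbisimilarity}: $\invpeval(x_1)=\invpeval(x_2)$ means $x_1\in\peval(p)\iff x_2\in\peval(p)$ for all $p\in\ap$. The Boolean cases are routine: negation follows directly from the inductive hypothesis by contraposition (and the symmetry noted above), and the infinitary conjunction $\liand_{i\in I}\form_i$ follows because the inductive hypothesis applies uniformly to each conjunct $\form_i$. The only interesting case is the modal operator $\form=\lnear\psi$. Here I would argue as follows. Suppose $\model,x_1\models_\iml\lnear\psi$, i.e.\ $x_1\in\closure(\sem{\psi}^\model)$. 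I want $x_2\in\closure(\sem{\psi}^\model)$. Working with the interior/neighbourhood formulation is cleanest: $x_1\in\closure(S)$ is equivalent to saying that $x_1$ has no neighbourhood disjoint from $S$, i.e.\ every neighbourhood $S_1$ of $x_1$ meets $\sem{\psi}^\model$.

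So, to show $x_2\in\closure(\sem{\psi}^\model)$, take an arbitrary neighbourhood $S_2$ of $x_2$; I must exhibit a point of $S_2$ lying in $\sem{\psi}^\model$. Apply clause~3 (``back'') of Definition~\ref{def:CMbisimilarity} with this $S_2$: there is a neighbourhood $S_1$ of $x_1$ such that for every $s_1\in S_1$ there is $s_2\in S_2$ with $(s_1,s_2)\in B$. Since $x_1\in\closure(\sem{\psi}^\model)$, the neighbourhood $S_1$ of $x_1$ contains some $s_1\in\sem{\psi}^\model$; let $s_2\in S_2$ be the corresponding point with $(s_1,s_2)\in B$. By the inductive hypothesis applied to $\psi$ and the pair $(s_1,s_2)\in B$ (using the direction $s_1\models\psi\Rightarrow s_2\models\psi$), we get $s_2\in\sem{\psi}^\model$, and $s_2\in S_2$, as required. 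Hence $x_2\in\closure(\sem{\psi}^\model)$, i.e.\ $\model,x_2\models_\iml\lnear\psi$. The converse direction uses clause~2 instead of clause~3, again with the inductive hypothesis in the other direction. This completes the induction, and since $\imleq^\model$ is by definition the set of pairs agreeing on all $\iml$-formulas, we conclude $B\subseteq{\imleq^\model}$.

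I expect the main subtlety — not really an obstacle, but the point needing care — to be the correct handling of the quantifier alternation in the modal case: one must translate ``$x\in\closure(S)$'' into ``every neighbourhood of $x$ meets $S$'' (the dual of ``$x\in\interior(\overline S)$ for some neighbourhood'') and then feed the chosen neighbourhood $S_2$ into the back-condition in the right order, so that the existential witness $s_1\in S_1\cap\sem{\psi}^\model$ is produced \emph{after} the neighbourhood $S_1$ is obtained. A secondary point is making sure the inductive hypothesis is stated as a biconditional (so that it can be used in both directions, as the negation and modal cases both require), and noting explicitly that $\imleq^\model$ is an equivalence relation so that ``$B$ included in $\imleq^\model$'' makes sense even though $B$ itself need not be. Everything else is bookkeeping; no appeal to quasi-discreteness or to Proposition~\ref{prp:FT} is needed, only monotonicity is implicitly used in identifying $\cm$s as monotonic neighbourhood models, which is already remarked in the text.
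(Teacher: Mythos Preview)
Your proof is correct and follows essentially the same induction-on-formulas strategy as the paper. The only presentational difference is that the paper argues the modal case by contradiction---assuming $x_1\notin\closure(\sem{\form'})$ and $x_2\in\closure(\sem{\form'})$, taking $S_1=\overline{\sem{\form'}}$ as a specific neighbourhood of $x_1$, invoking the \emph{forth} clause, and packaging the key step ``$\closure(Y)\cap\interior(Z)\neq\emptyset\Rightarrow Y\cap Z\neq\emptyset$'' as a separate lemma---whereas you argue directly, quantifying over an arbitrary neighbourhood $S_2$ of $x_2$, invoking the \emph{back} clause, and using the equivalent characterisation ``$x\in\closure(S)$ iff every neighbourhood of $x$ meets $S$'' inline. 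These are dual formulations of the same argument; your version is arguably a bit cleaner since it avoids the auxiliary lemma and the detour through contradiction.
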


The  converse of Theorem~\ref{thm:CMbisIsImleq} is given below.

\begin{theorem}\label{thm:ImleqIsCMbis}
For all \cm{s}  $\model=(X,\closure,\peval)$, $\imleq^{\model}$ is a \cm-Bisimulation.
\closethm
\end{theorem}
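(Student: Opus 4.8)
The plan is to verify directly that $\imleq := \imleq^{\model}$ satisfies the three clauses of Definition~\ref{def:CMbisimilarity}. Since $\imleq$ is an equivalence relation, it is non-empty (as $X$ is non-empty) and symmetric, so the ``back'' clause follows from the ``forth'' clause by exchanging the roles of $x_1$ and $x_2$; hence it suffices to check clause~1 and clause~2. Clause~1, namely $\invpeval(x_1)=\invpeval(x_2)$, is immediate: for each $p\in\ap$ the atomic formula $p$ belongs to \iml{} and $\model,x\models_{\iml} p$ iff $x\in\peval(p)$, so $x_1\imleq x_2$ forces $x_1$ and $x_2$ to satisfy exactly the same atomic letters.

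For the ``forth'' clause, suppose $x_1\imleq x_2$ and let $S_1$ be a neighbourhood of $x_1$, i.e. $x_1\in\interior(S_1)=\overline{\closure(\overline{S_1})}$, equivalently $x_1\notin\closure(\overline{S_1})$. I would take as candidate neighbourhood of $x_2$ the $\imleq$-saturation of $S_1$, that is $S_2=\ZET{x\in X}{\exists s_1\in S_1 : s_1\imleq x}$. Then $S_1\subseteq S_2$ and, by construction, every $s_2\in S_2$ is $\imleq$-related to some $s_1\in S_1$ (and $\imleq$ is symmetric, so $(s_1,s_2)\in\imleq$). Thus the only nontrivial point is that $S_2$ is a neighbourhood of $x_2$, i.e. $x_2\notin\closure(\overline{S_2})$.

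The heart of the argument, and the place where the infinitary conjunction of \iml{} is essential, is to exhibit a single \iml{} formula $\form$ with $\overline{S_2}\subseteq\sem{\form}^{\model}\subseteq\overline{S_1}$. For each $y\in\overline{S_2}$ and each $s_1\in S_1$ we have $y\not\imleq s_1$, so some \iml{} formula distinguishes them; replacing it by its negation if necessary, we may assume it holds at $y$ and fails at $s_1$, and conjoining these over all $s_1\in S_1$ yields a formula $\gamma_y$ with $\model,y\models\gamma_y$ and $\sem{\gamma_y}^{\model}\cap S_1=\emptyset$. Setting $\form:=\lior_{y\in\overline{S_2}}\gamma_y$ then gives the desired sandwich: the first inclusion because $y\models\gamma_y$ for every $y\in\overline{S_2}$, the second because $\form$ is unsatisfied throughout $S_1$. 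With $\form$ in hand, the conclusion is a short computation using monotonicity of $\closure$ and the semantic clause for $\lnear$: if $x_2\in\closure(\overline{S_2})$, then $x_2\in\closure(\sem{\form}^{\model})$, i.e. $\model,x_2\models_{\iml}\lnear\form$; since $x_1\imleq x_2$ also $\model,x_1\models_{\iml}\lnear\form$, so $x_1\in\closure(\sem{\form}^{\model})\subseteq\closure(\overline{S_1})$, contradicting $x_1\in\interior(S_1)$.

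I expect the main obstacle to be precisely this formula construction. One must be careful that the index sets of the infinitary conjunction over $S_1$ and disjunction over $\overline{S_2}$ are genuine sets — they are, being subsets of $X$, and $\lior$ is derivable from $\lneg$ and $\liand$ — and, more subtly, that $\sem{\form}^{\model}$ is sandwiched \emph{between} $\overline{S_2}$ and $\overline{S_1}$ rather than merely contained in one of them, since both inclusions are used: the first to push $x_2$ into $\closure(\sem{\form}^{\model})$, the second to produce the contradiction at $x_1$. The remaining steps are routine manipulations with monotonicity and the identity $\interior(A)=\overline{\closure(\overline{A})}$.
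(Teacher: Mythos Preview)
Your proof is correct and follows essentially the same route as the paper's: both take $S_2$ to be the $\imleq$-saturation of $S_1$ and use an infinitary formula built from pairwise distinguishing formulas to transfer membership in the interior from $x_1$ to $x_2$. The paper packages this via characteristic formulas $\chi_x=\bigwedge_{x'\in X}\delta_{x,x'}$ (so that $S_2=\sem{\bigvee_{s_1\in S_1}\chi_{s_1}}$ exactly) and argues directly rather than by contradiction, but your $\lneg\form$ plays the same role and the underlying idea is identical.
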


\begin{corollary}\label{cor:CMbisEqImleq}
For all \cm{s}  $\model=(X,\closure,\peval)$ we have that $\imleq^{\model}$ coincides with $\cmbis^{\model}$.
\qed 
\end{corollary}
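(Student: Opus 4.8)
The plan is to obtain Corollary~\ref{cor:CMbisEqImleq} as the immediate conjunction of Theorem~\ref{thm:CMbisIsImleq} and Theorem~\ref{thm:ImleqIsCMbis}; all the real work sits in those two theorems, which I would establish first and which the corollary simply packages together. Concretely, I would prove the two set inclusions $\cmbis^{\model} \subseteq \imleq^{\model}$ and $\imleq^{\model} \subseteq \cmbis^{\model}$ separately.

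For $\cmbis^{\model} \subseteq \imleq^{\model}$: unfolding Definition~\ref{def:CMbisimilarity}, $x_1 \cmbis^{\model} x_2$ holds exactly when some \cm-bisimulation $B$ over $X$ contains $(x_1,x_2)$; by Theorem~\ref{thm:CMbisIsImleq} every such $B$ satisfies $B \subseteq \imleq^{\model}$, so $(x_1,x_2) \in \imleq^{\model}$. For the reverse inclusion $\imleq^{\model} \subseteq \cmbis^{\model}$: by Theorem~\ref{thm:ImleqIsCMbis} the relation $\imleq^{\model}$ is itself a \cm-bisimulation over $X$ (it is non-empty, being reflexive on the non-empty set $X$), hence whenever $x_1 \imleq^{\model} x_2$ we exhibit $\imleq^{\model}$ as a \cm-bisimulation containing $(x_1,x_2)$, which by Definition~\ref{def:CMbisimilarity} gives $x_1 \cmbis^{\model} x_2$. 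Combining the two inclusions yields $\imleq^{\model} = \cmbis^{\model}$.

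There is no obstacle in the corollary itself; the difficulty is entirely upstream. Theorem~\ref{thm:CMbisIsImleq} goes by induction on the structure of \iml{} formulas: the base case uses clause~1 of a \cm-bisimulation, negation and infinite conjunction are routine, and the $\lnear$ case needs the forth/back conditions together with the reformulation of ``$x \in \closure(\sem{\form})$'' in neighbourhood terms (every neighbourhood of $x$ meets $\sem{\form}$) plus monotonicity of $\closure$. Theorem~\ref{thm:ImleqIsCMbis} is the genuinely delicate one: given $x_1 \imleq^{\model} x_2$ and a neighbourhood $S_1$ of $x_1$, one must produce a witnessing neighbourhood $S_2$ of $x_2$, and since the model is not assumed quasi-discrete there need not be minimal neighbourhoods, so this requires either a characteristic-formula argument or an analogous saturation-style construction. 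I would therefore expect to spend essentially all effort on Theorem~\ref{thm:ImleqIsCMbis}, after which the corollary is a one-line consequence.
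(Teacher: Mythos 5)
Your proposal is correct and matches the paper exactly: the corollary is stated with a bare \qed{} precisely because it is the immediate combination of Theorem~\ref{thm:CMbisIsImleq} (every \cm-bisimulation, hence $\cmbis^{\model}$ as their union, is contained in $\imleq^{\model}$) and Theorem~\ref{thm:ImleqIsCMbis} ($\imleq^{\model}$ is itself a \cm-bisimulation, hence contained in $\cmbis^{\model}$). Your additional remark that $\imleq^{\model}$ is non-empty by reflexivity on the non-empty $X$ is a small but legitimate detail the paper leaves implicit.
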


\section{\cmc-bisimilarity for Quasi-discrete CMs}\label{sec:CMCbisimilarity}
In this section we refine \cm-bisimilarity into {\em \cm-bisimilarity with converse}, \cmc-bisimilarity for short, a specialisation of \cm-bisimilarity for \qdcm{s}. Recall that, for \cm{} 
$\model = (X,\closure,\peval)$, $S \subseteq X$ is a neighbourhood of $x \in X$ if
$x \in \interior(S)$. Moreover, whenever $\model$ is quasi-discrete, there are actually two
interior functions, namely $\interiorF(S)$ and $\interiorT(S)$.  It is then natural to exploit both 
functions for a definition of \cm-bisimilarity specifically designed for \qdcm{s}, namely \cmc-bisimilarity.

\subsection{\cmc-bisimilarity for \qdcm{s}}

\begin{definition}[\cmc-bisimilarity for \qdcm{s}]\label{def:CMCbisimilarity}
Given \qdcm{} $\model=(X,\closureF, \\\peval)$, a non empty relation 
$B \subseteq X \times X$ is a {\em \cmc-bisimulation  over $X$} if, whenever $(x_1,x_2) \in B$, the following holds:
\begin{enumerate}
\item $\invpeval(x_1) = \invpeval(x_2)$;
\item for all $S_1 \subseteq X$ such that  $x_1 \in \interiorF(S_1)$  there is $S_2 \subseteq X$ such that $x_2 \in \interiorF(S_2)$ and 
for all $s_2 \in S_2$, there is $s_1 \in S_1$ with $(s_1,s_2)\in B$;
\item for all $S_2 \subseteq X$ such that  $x_2 \in \interiorF(S_2)$  there is $S_1 \subseteq X$ such that $x_1 \in \interiorF(S_1)$ and 
for all $s_1 \in S_1$, there is $s_2 \in S_2$ with $(s_1,s_2)\in B$; 
\item for all $S_1 \subseteq X$ such that  $x_1 \in \interiorT(S_1)$  there is $S_2 \subseteq X$ such that $x_2 \in \interiorT(S_2)$ and 
for all $s_2 \in S_2$, there is $s_1 \in S_1$ with $(s_1,s_2)\in B$;
\item for all $S_2 \subseteq X$ such that  $x_2 \in \interiorT(S_2)$  there is $S_1 \subseteq X$ such that $x_1 \in \interiorT(S_1)$ and 
for all $s_1 \in S_1$, there is $s_2 \in S_2$ with $(s_1,s_2)\in B$.
\end{enumerate}
$x_1$ and $x_2$ are {\em CMC-bisimilar}, written $x_1\,\cmcbis^{\model}\, x_2$,  if and only if there is a \cmc-bisimulation $B$ over $X$  such that $(x_1,x_2) \in B$. \closedefi
\end{definition}

The following proposition trivially follows from the relevant definitions, keeping in mind that, for \qdcm{s} $\interior$ coincides with  $\interiorF$.

\begin{proposition}
\label{prp:CMCbisImpliesCMbis}
For all \qdcm{s} $\model = (X,\closureF, \peval)$ and $x_1, x_2 \in X$ the following holds:
$x_1\, \cmcbis \, x_2$ implies $x_1\,  \cmbis \, x_2$.\qed
\end{proposition}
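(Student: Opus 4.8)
The plan is to show that every \cmc-bisimulation is already a \cm-bisimulation, from which the statement follows immediately by reusing the witnessing relation. The key observation is the one flagged in the text preceding the proposition: in a \qdcm{} $\model = (X,\closureF,\peval)$ the (unique) interior operator $\interior$ coincides with the direct interior $\interiorF$ (by Definition~\ref{def:QDClosureSpace} together with the theorem characterising quasi-discreteness via a relation $R$). Hence a set $S\subseteq X$ is a neighbourhood of a point $x$ in the sense of Definition~\ref{def:CMbisimilarity}, i.e. $x\in\interior(S)$, exactly when $x\in\interiorF(S)$.

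First I would fix a \cmc-bisimulation $B\subseteq X\times X$ and an arbitrary pair $(x_1,x_2)\in B$, and verify the three clauses of Definition~\ref{def:CMbisimilarity} for $B$. Clause~1 (atomic harmony, $\invpeval(x_1)=\invpeval(x_2)$) is literally clause~1 of Definition~\ref{def:CMCbisimilarity}. For clause~2, given a neighbourhood $S_1$ of $x_1$, the observation above gives $x_1\in\interiorF(S_1)$; clause~2 of Definition~\ref{def:CMCbisimilarity} then yields $S_2\subseteq X$ with $x_2\in\interiorF(S_2)$—that is, $S_2$ a neighbourhood of $x_2$—such that every $s_2\in S_2$ has some $s_1\in S_1$ with $(s_1,s_2)\in B$, which is exactly the conclusion required by clause~2 of Definition~\ref{def:CMbisimilarity}. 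Clause~3 is obtained symmetrically from clause~3 of Definition~\ref{def:CMCbisimilarity}. The two remaining clauses of Definition~\ref{def:CMCbisimilarity}, the ones phrased with $\interiorT$, are simply not used here.

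Finally, since $B$ is non-empty and satisfies clauses~1--3 of Definition~\ref{def:CMbisimilarity}, it is a \cm-bisimulation over $X$; and as $(x_1,x_2)\in B$ was the pair witnessing $x_1\cmcbis x_2$, the very same $B$ witnesses $x_1\cmbis x_2$. I expect no genuine obstacle: the only point requiring a little care is making the identification $\interior=\interiorF$ for \qdcm{s} explicit, since this is precisely what turns the $\interiorF$-phrased clauses of Definition~\ref{def:CMCbisimilarity} into the neighbourhood-phrased clauses of Definition~\ref{def:CMbisimilarity} verbatim; once that is stated, the argument is a direct unfolding of definitions.
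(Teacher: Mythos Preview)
Your proposal is correct and is exactly the argument the paper intends: the proposition is stated with a \qed{} and the preceding sentence already says it ``trivially follows from the relevant definitions, keeping in mind that, for \qdcm{s}, $\interior$ coincides with $\interiorF$''. You have simply spelled this out, matching clauses~1--3 of Definition~\ref{def:CMCbisimilarity} to clauses~1--3 of Definition~\ref{def:CMbisimilarity} via that identification and discarding the $\interiorT$ clauses.
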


The converse of \propo{}~\ref{prp:CMCbisImpliesCMbis} does not hold
as shown in Figure~\ref{fig:CMbisNoImplCMCbis} 
\begin{figure}
\centerline{\resizebox{1.2in}{!}{\includegraphics{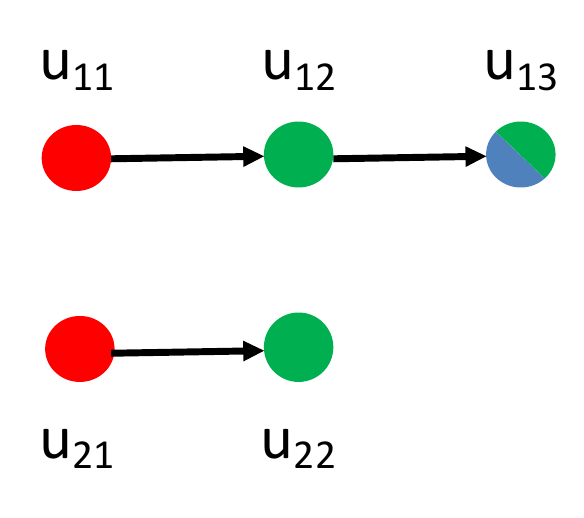}}}
\caption{$u_{11}  \cmbis u_{21}$ but $u_{11}  \not\cmcbis u_{21}$.\label{fig:CMbisNoImplCMCbis}}
\end{figure}
where $\invpeval(u_{11})=\invpeval(u_{21})=\SET{r}, \invpeval(u_{12})=\invpeval(u_{22})=\SET{g}$ and $\invpeval(u_{13})=\SET{b,g}$ (see Remark~\ref{rem:prp:CMCbisImpliesCMbis} in Appendix~\ref{apx:sec:CMCbisimilarity}).

\begin{proposition}
\label{prp:CMCbisImpliesTraceEq}
For all \qdcm{s} $\model = (X,\closureF, \peval)$ and $x_1, x_2 \in X$ the following holds:
$x_1\, \cmcbis \, x_2$ implies $x_1\,  \treq \, x_2$.
\end{proposition}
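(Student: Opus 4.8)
The plan is to show that \cmc-bisimilarity preserves, and reflects, the traces of all bounded paths (with index space $(\nats,\closure_{\succ})$) issuing from, respectively ending in, a point; trace equivalence then follows directly from its definition. First I would extract two easy consequences. On the one hand, unfolding $\interiorF(S)=\overline{\closureF(\overline S)}$ and $\interiorT(S)=\overline{\closureT(\overline S)}$ with item~1 of Proposition~\ref{prp:FT} gives $x\in\interiorF(S)$ iff $\closureT(\SET{x})\subseteq S$, and $x\in\interiorT(S)$ iff $\closureF(\SET{x})\subseteq S$; so $\closureT(\SET{x})$ (resp.\ $\closureF(\SET{x})$) is the \emph{least} $\interiorF$- (resp.\ $\interiorT$-) neighbourhood of $x$. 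Instantiating clauses~3 and~5 of Definition~\ref{def:CMCbisimilarity} on these least neighbourhoods, any \cmc-bisimulation $B$ enjoys, whenever $(x_1,x_2)\in B$, the pointwise transfer properties \textbf{(3')} for every $y_1\in\closureT(\SET{x_1})$ there is $y_2\in\closureT(\SET{x_2})$ with $(y_1,y_2)\in B$, and \textbf{(5')} for every $y_1\in\closureF(\SET{x_1})$ there is $y_2\in\closureF(\SET{x_2})$ with $(y_1,y_2)\in B$. On the other hand, by Proposition~\ref{prp:FT}(\ref{path}) a function $\pi:\nats\to X$ is a path iff $\pi(i)\in\closureF(\SET{\pi(i-1)})$ (equivalently, by item~2, $\pi(i-1)\in\closureT(\SET{\pi(i)})$) for every $i\geq 1$ in its domain.

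Next I would handle the forward case $\T(\bpthsF(x_1))=\T(\bpthsF(x_2))$. Fix a \cmc-bisimulation $B$ with $(x_1,x_2)\in B$ and a bounded path $\pi_1\in\bpthsF(x_1)$ of length $\ell=\pthlen(\pi_1)$. I would build $\pi_2\in\bpthsF(x_2)$ by recursion, keeping the invariant $(\pi_1(i),\pi_2(i))\in B$: put $\pi_2(0)=x_2$; for $0\leq i<\ell$, since $\pi_1(i+1)\in\closureF(\SET{\pi_1(i)})$, property~(5') applied to $(\pi_1(i),\pi_2(i))\in B$ yields some $\pi_2(i+1)\in\closureF(\SET{\pi_2(i)})$ with $(\pi_1(i+1),\pi_2(i+1))\in B$; for $i>\ell$ put $\pi_2(i)=\pi_2(\ell)$. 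By construction the path condition holds for $i\leq\ell$ and trivially for $i>\ell$, so $\pi_2$ is a bounded path with $\pi_2(0)=x_2$, i.e.\ $\pi_2\in\bpthsF(x_2)$. Since $\pi_1$ is eventually constant equal to $\pi_1(\ell)$, the invariant extends to all $i\in\nats$, and clause~1 of Definition~\ref{def:CMCbisimilarity} gives $\invpeval(\pi_1(i))=\invpeval(\pi_2(i))$ for all $i$, that is $\T(\pi_1)=\T(\pi_2)$. Hence $\T(\bpthsF(x_1))\subseteq\T(\bpthsF(x_2))$; as $B^{-1}$ is again a \cmc-bisimulation (clause~1 is symmetric and clauses~2/3 and~4/5 are exchanged), so $x_2\cmcbis x_1$, the reverse inclusion follows by the same argument.

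For the backward case $\T(\bpthsT(x_1))=\T(\bpthsT(x_2))$ I would run the recursion downwards. Given $\pi_1\in\bpthsT(x_1)$ of length $\ell$ (so $\pi_1(\iota)=x_1$ for $\iota\geq\ell$), set $\pi_2(\ell)=x_2$ and, for $i$ from $\ell$ down to $1$, use $\pi_1(i-1)\in\closureT(\SET{\pi_1(i)})$ (which holds since $\pi_1(i)\in\closureF(\SET{\pi_1(i-1)})$) together with property~(3') to pick $\pi_2(i-1)\in\closureT(\SET{\pi_2(i)})$ with $(\pi_1(i-1),\pi_2(i-1))\in B$; note $\pi_2(i-1)\in\closureT(\SET{\pi_2(i)})$ means $\pi_2(i)\in\closureF(\SET{\pi_2(i-1)})$, the path condition. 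For $i>\ell$ put $\pi_2(i)=x_2$. By Proposition~\ref{prp:FT}(\ref{path}) this $\pi_2$ is a bounded path, eventually constant equal to $x_2$, hence $\pi_2\in\bpthsT(x_2)$, and as before $(\pi_1(i),\pi_2(i))\in B$ for all $i$ gives $\T(\pi_1)=\T(\pi_2)$. The reverse inclusion follows from $x_2\cmcbis x_1$ exactly as above, and combining the two cases yields $x_1\treq x_2$.

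On the difficulty: the recursions and the boundary bookkeeping after index $\ell$ are routine. The only genuinely delicate point is the first paragraph --- recognising that the quantifier-heavy neighbourhood clauses of Definition~\ref{def:CMCbisimilarity} collapse to the pointwise conditions~(3') and~(5') by evaluating them on the minimal neighbourhoods $\closureT(\SET{x})$ and $\closureF(\SET{x})$, and then matching the right clause (the ``back'' clause~3 for $\closureT$-steps, the ``back'' clause~5 for $\closureF$-steps) to the right direction of the path mimicking. Once this dictionary between the interior-based bisimulation conditions and the $\closureF/\closureT$ transition structure is set up, the rest is a standard bisimulation-style path-mimicking argument.
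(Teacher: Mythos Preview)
Your proof is correct and, in its core path-mimicking recursion along $\pi_1$, is essentially the same as the paper's. The one substantive difference is how the pointwise transfer properties are obtained. The paper invokes the (forward-referenced) equivalence $\cmcbis=\clbis$ of Corollary~\ref{cor:ClbisEqCMCbis} and then uses the $\calC$-bisimulation clauses of Definition~\ref{def:Clbisimilarity} directly; you instead derive your conditions (3$'$) and (5$'$) straight from Definition~\ref{def:CMCbisimilarity} by instantiating the ``back'' clauses on the minimal neighbourhoods $\closureT(\SET{x})$ and $\closureF(\SET{x})$. Your route is more self-contained---it avoids the forward reference and in effect proves one half of Corollary~\ref{cor:ClbisEqCMCbis} along the way---while the paper's route keeps the proposition's proof shorter by deferring that work. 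Either way, once (3$'$) and (5$'$) are in hand, the remaining step-by-step construction of $\pi_2$ via Proposition~\ref{prp:FT}(\ref{path}) is identical.
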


The converse of \propo{}~\ref{prp:CMCbisImpliesTraceEq} does not hold
as shown in Figure~\ref{fig:TraceEqNoImplCMCbis} where 
$
\invpeval(y_{11})=\invpeval(y_{12})=\invpeval(y_{21})=\invpeval(y_{22})=\invpeval(y_{24})=\SET{r}\not=\SET{b}=\invpeval(y_{13})=\invpeval(y_{23})
$
and $y_{11}  \treq y_{21}$ but $y_{11}  \not\cmcbis y_{21}$ 
(see  Remark~\ref{rem:prp:CMCbisImpliesTraceEq} in Appendix~\ref{apx:sec:CMCbisimilarity}).
\begin{figure}
\centerline{\resizebox{1.2in}{!}{\includegraphics{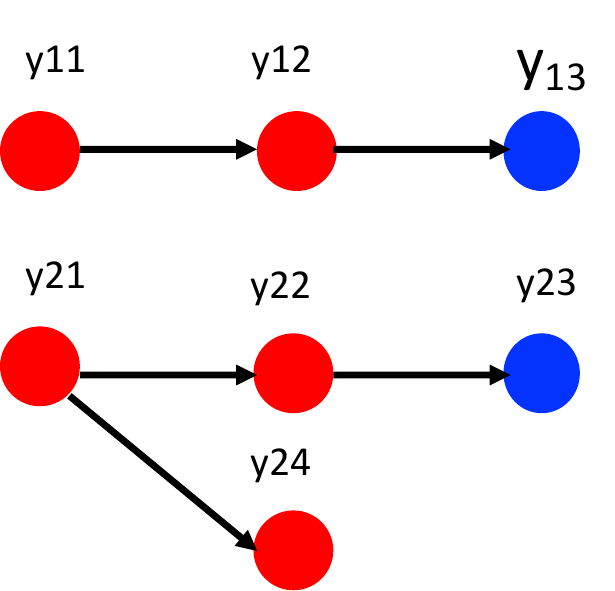}}}
\caption{$y_{11}  \treq y_{21}$ but $y_{11}  \not\cmcbis y_{21}$.\label{fig:TraceEqNoImplCMCbis}}
\end{figure}

\subsection{Logical Characterisation of \cmc-bisimilarity for \qdcm{s}}
In order to provide a logical characterisation of \cmc-bisimilarity, we extend \iml{} with a
``converse'' of the modal operator of classical \iml{,} thus exploiting the inverse 
of the binary relation underlying the \qdcm{.} The result is a logic with the two modalities
$\lnearF$ and $\lnearT$, with the expected meaning.

\begin{definition}[Infinitary Modal Logic with Converse - \imlc{}]\label{def:Imlc}
For index set $I$ and $p \in  \ap$ the abstract language of \imlc{} is defined as follows:
$$
\form ::= p \; \sep \; \lneg \form \; \sep \; \liand_{i \in I} \form_i \; \sep \; 
\lnearF \form \; \sep \; \lnearT \form.
$$
The satisfaction relation for all \qdcm{s}  $\model$, points $x \in \model$, and \imlc{} formulas $\form$ is defined recursively on the structure of $\form$ as follows:
$$
\begin{array}{r c l c l c l l}
\model,x & \models_{\imlc}  & p & \Leftrightarrow & x  \in \peval(p);\\
\model,x & \models_{\imlc}  & \lneg \,\form & \Leftrightarrow & \model,x  \models_{\imlc} \form \mbox{ does not hold};\\
\model,x & \models_{\imlc}  & \liand_{i\in I} \form_i  & \Leftrightarrow &
\model,x  \models_{\imlc} \form_i \mbox{ for all } i \in I;\\
\model,x & \models_{\imlc}  & \lnearF \form & \Leftrightarrow & x \in \closureF(\sem{\form}^{\model});\\
\model,x & \models_{\imlc}  & \lnearT \form & \Leftrightarrow & x \in \closureT(\sem{\form}^{\model}).
\end{array}    
$$

\closedefi
\end{definition}

\begin{definition}[\imlc-Equivalence]\label{def:ImlcEq}
Given \qdcm{} $\model = (X,\closureF,\peval)$, the equivalence relation $\imlceq^{\model} \, \subseteq \, X \times X$ is defined as: $x_1 \imlceq^{\model} x_2$ if and only if for all \imlc{} formulas $\form$ the following holds: $\model, x_1 \models_{\imlc} \form$ if and only if $\model, x_2 \models_{\imlc} \form$.
\closedefi
\end{definition}

In the sequel we will often abbreviate $\imlceq^{\model}$ with $\imlceq$.

\begin{theorem}\label{thm:CMCbisIsImlceq}
For all \qdcm{s} $\model=(X,\closureF,\peval)$, any
\cmc-Bisimulation   $B$ over $X$ is included in the equivalence $\imlceq^{\model}$.
\closethm
\end{theorem}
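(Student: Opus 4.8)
The plan is to prove the statement by showing that if $B$ is a \cmc-bisimulation over $X$ and $(x_1,x_2)\in B$, then $x_1$ and $x_2$ satisfy exactly the same \imlc{} formulas; equivalently, $B\subseteq\imlceq^\model$. Since $\imlceq^\model$ is by definition the set of pairs that agree on all \imlc{} formulas, it suffices to prove, by induction on the structure of $\form$, the statement $P(\form)$: ``for all $(x_1,x_2)\in B$, $\model,x_1\models_\imlc\form$ iff $\model,x_2\models_\imlc\form$.'' This mirrors the standard proof of Theorem~\ref{thm:CMbisIsImleq}, but now with two modal cases, $\lnearF$ and $\lnearT$, handled by the two pairs of ``forth/back'' conditions of Definition~\ref{def:CMCbisimilarity}.

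First I would dispatch the base case $\form=p$: this is immediate from clause~1 of Definition~\ref{def:CMCbisimilarity}, since $\invpeval(x_1)=\invpeval(x_2)$ means $x_1\in\peval(p)\iff x_2\in\peval(p)$. The Boolean cases are routine: for $\lneg\form$, $P(\form)$ gives $P(\lneg\form)$ directly by contraposition; for $\liand_{i\in I}\form_i$, $P(\liand_{i\in I}\form_i)$ follows because $\model,x\models\liand_{i\in I}\form_i$ iff $\model,x\models\form_i$ for every $i$, and we have $P(\form_i)$ for each $i$ by the induction hypothesis. None of these steps uses the bisimulation conditions beyond clause~1.

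The modal cases are the heart of the argument, and this is where I would be careful. Consider $\form=\lnearF\psi$ with $P(\psi)$ assumed, and let $(x_1,x_2)\in B$. Suppose $\model,x_1\models\lnearF\psi$, i.e. $x_1\in\closureF(\sem{\psi})$. It is convenient here to reformulate the bisimulation conditions in dual form: a standard computation shows that ``for all $S_1$ with $x_1\in\interiorF(S_1)$ there is $S_2$ with $x_2\in\interiorF(S_2)$ and every $s_2\in S_2$ has some $s_1\in S_1$ with $(s_1,s_2)\in B$'' is equivalent to ``for all $T_2\subseteq X$, if $x_2\in\closureF(T_2)$ then $x_1\in\closureF(B^{-1}(T_2))$,'' where $B^{-1}(T_2)=\ZET{s_1}{\exists s_2\in T_2.\,(s_1,s_2)\in B}$ (take complements: $S_i=\overline{T_i}$, and rewrite the quantifier structure). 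Using the symmetric clause~3 of Definition~\ref{def:CMCbisimilarity} in this dual form with $T=\sem{\psi}$: from $x_1\in\closureF(\sem{\psi})$ we get $x_2\in\closureF(B^{-1}(\sem{\psi}))$; but $P(\psi)$ together with symmetry of $B$ on the pair under consideration (really: for each $s_1\in\sem{\psi}$ related to some $s_2$, $P(\psi)$ gives $s_2\in\sem{\psi}$ as well — hence $B^{-1}(\sem{\psi})\subseteq\sem{\psi}$), so by monotonicity of $\closureF$ (Proposition~\ref{prp:FT}(4)) $x_2\in\closureF(\sem{\psi})$, i.e. $\model,x_2\models\lnearF\psi$. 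The converse direction uses clause~2 (the other ``forth'' direction for $\interiorF$) symmetrically. The case $\form=\lnearT\psi$ is identical, using clauses~4 and~5 and the converse closure operator, with Proposition~\ref{prp:FT}(4) again supplying monotonicity of $\closureT$.

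The main obstacle I expect is getting the dualisation between the neighbourhood/interior formulation of Definition~\ref{def:CMCbisimilarity} and the closure formulation right, including the precise handling of the asymmetric ``there is $s_1\in S_1$ with $(s_1,s_2)\in B$'' clause and which of conditions 2--5 feeds which implication direction. The rest is bookkeeping. One subtlety worth stating explicitly in the write-up: the step $B^{-1}(\sem{\psi})\subseteq\sem{\psi}$ uses only $P(\psi)$ in the form ``any point $B$-related to a point satisfying $\psi$ also satisfies $\psi$,'' which follows from $P(\psi)$ applied to that related pair; no separate symmetry hypothesis on $B$ is needed. Alternatively, one can avoid the dual reformulation altogether and argue directly from clauses 2--5 by an explicit element chase on minimal neighbourhoods (which exist by quasi-discreteness, Definition~\ref{def:QDClosureSpace}), but the closure-based reformulation keeps the proof short and parallels the treatment of \cm-bisimilarity.
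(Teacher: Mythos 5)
Your proof is correct and follows essentially the same route as the paper's: the paper handles the modal cases by contradiction, taking $S_1=\overline{\sem{\psi}}$ and invoking exactly the interior/closure duality (via its lemma that $\closure(Y)\cap\interior(Z)\neq\emptyset$ implies $Y\cap Z\neq\emptyset$) that your ``dual reformulation'' packages as an equivalence, after which both arguments finish with the induction hypothesis, using clauses 2/3 for $\lnearF$ and clauses 4/5 for $\lnearT$. The only blemish is notational: with your definition of $B^{-1}$, the dual of clause 3 yields $x_2\in\closureF(B(\sem{\psi}))$ (the forward image) rather than $\closureF(B^{-1}(\sem{\psi}))$, but since the induction hypothesis places both images inside $\sem{\psi}$, the monotonicity step goes through either way and this does not affect the argument.
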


The  converse of Theorem~\ref{thm:CMCbisIsImlceq} is given below.

\begin{theorem}\label{thm:ImlceqIsCMCbis}
For all \qdcm{s}  $\model=(X,\closureF,\peval)$, $\imlceq^{\model}$ is a \cmc-Bisimulation.
\closethm
\end{theorem}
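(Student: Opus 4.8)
The plan is to show that $\imlceq^{\model}$ satisfies the five conditions of Definition~\ref{def:CMCbisimilarity}. Condition~1 (agreement on atomic propositions) is immediate, since each $p \in \ap$ is itself an \imlc{} formula. The symmetry of the setup means it suffices to establish conditions~2 and~4 (the ``forth'' directions for $\interiorF$ and for $\interiorT$); conditions~3 and~5 follow by swapping the roles of $x_1$ and $x_2$. Moreover, conditions~2 and~4 are entirely symmetric to one another under replacing $R$ by $R^{-1}$ — i.e. swapping $\interiorF \leftrightarrow \interiorT$, $\closureF \leftrightarrow \closureT$, $\lnearF \leftrightarrow \lnearT$ — and since \imlc{} is closed under this swap, it is enough to prove condition~2; I would prove it and then remark that condition~4 is obtained mutatis mutandis.

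For condition~2, assume $x_1 \imlceq x_2$ and let $S_1 \subseteq X$ with $x_1 \in \interiorF(S_1)$. I need to produce $S_2$ with $x_2 \in \interiorF(S_2)$ such that every $s_2 \in S_2$ has some $s_1 \in S_1$ with $s_1 \imlceq s_2$. The natural candidate is the $\imlceq$-saturation of $S_1$, namely $S_2 := \ZET{y \in X}{\exists s_1 \in S_1.\ s_1 \imlceq y\,\}}$. By construction $S_1 \subseteq S_2$ and the ``for all $s_2 \in S_2$ there is $s_1 \in S_1$'' requirement holds trivially. The crux is to verify $x_2 \in \interiorF(S_2)$, equivalently $x_2 \notin \closureF(\overline{S_2})$. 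The key observation is that $\overline{S_2}$ is $\imlceq$-saturated (a union of $\imlceq$-classes), so it can be approximated from the outside by a formula: for each class $C$ of $\imlceq$ not contained in $\overline{S_2}$... — more precisely, I would build, for each $z \in \overline{S_2}$ and each $w \in S_1$, a formula $\form_{z,w}$ distinguishing them ($\model,z \models \form_{z,w}$, $\model,w \not\models \form_{z,w}$), then set $\psi_z = \liand_{w \in S_1} \form_{z,w}$, so $\sem{\psi_z}^\model$ separates $z$ from all of $S_1$; finally $\psi = \lior_{z \in \overline{S_2}}\psi_z$ satisfies $\overline{S_2} \subseteq \sem{\psi}^\model$ and $\sem{\psi}^\model \cap S_1 = \emptyset$. (Disjunction is derivable via $\lneg$ and $\liand$; the index sets may be proper classes, which I would handle by picking one representative per $\imlceq$-class, so all conjunctions/disjunctions are set-indexed.)

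From $x_1 \in \interiorF(S_1)$ and $S_1 \cap \sem{\psi}^\model = \emptyset$ we get $x_1 \in \interiorF(\overline{\sem{\psi}^\model})$, i.e. $x_1 \notin \closureF(\sem{\psi}^\model)$, i.e. $\model, x_1 \not\models_{\imlc} \lnearF \psi$. Since $x_1 \imlceq x_2$, also $\model, x_2 \not\models_{\imlc} \lnearF \psi$, hence $x_2 \notin \closureF(\sem{\psi}^\model)$. Now $\overline{S_2} \subseteq \sem{\psi}^\model$ gives $\closureF(\overline{S_2}) \subseteq \closureF(\sem{\psi}^\model)$ by monotonicity (Proposition~\ref{prp:FT} applies, or directly the stated monotonicity of closure), so $x_2 \notin \closureF(\overline{S_2})$, which is exactly $x_2 \in \interiorF(S_2)$, as required.

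The main obstacle is purely foundational rather than mathematical: the index set $I$ in \imlc{} is arbitrary, so I must be careful that the conjunctions and disjunctions used above are legitimately indexed by \emph{sets}. This is handled by the standard device of choosing a single representative from each $\imlceq$-equivalence class (there is a set of such classes, being a quotient of $X$), and indexing all the big conjunctions/disjunctions by subsets of this set of representatives. With that caveat in place, each step is routine: the construction of the separating formulas $\form_{z,w}$ uses only that $z \not\imlceq w$ unpacks to the existence of a distinguishing \imlc{} formula (negating if necessary so it holds at $z$), and the rest is monotonicity of $\closureF$ plus the semantic clause for $\lnearF$. I expect Theorem~\ref{thm:ImlceqIsCMCbis} then follows, and together with Theorem~\ref{thm:CMCbisIsImlceq} yields that $\imlceq^{\model}$ coincides with $\cmcbis^{\model}$.
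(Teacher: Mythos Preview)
Your proof is correct and follows essentially the same strategy as the paper's: both take $S_2$ to be the $\imlceq$-saturation of $S_1$, exhibit an \imlc{} formula whose semantics is (the complement of) $S_2$, and use the $\lnearF$ modality to transfer $\interiorF$-membership from $x_1$ to $x_2$. The only cosmetic difference is that the paper packages the formula via characteristic formulas $\chi_{s_1} = \liand_{x \in X} \delta_{s_1,x}$ (Definition~\ref{def:deltachi}) and obtains $\sem{\lior_{s_1 \in S_1}\chi_{s_1}} = S_2$ directly, whereas you build the dual formula $\psi$ with $\overline{S_2} \subseteq \sem{\psi} \subseteq \overline{S_1}$ from pairwise separators; the underlying argument is the same.
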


\begin{corollary}\label{cor:CMCbisEqImleq}
For all \qdcm{s}  $\model=(X,\closureF,\peval)$ we have that $\imlceq^{\model}$ coincides with $\cmcbis^{\model}$.
\qed 
\end{corollary}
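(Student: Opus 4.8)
The plan is to obtain the corollary as an immediate consequence of Theorems~\ref{thm:CMCbisIsImlceq} and~\ref{thm:ImlceqIsCMCbis}, by establishing the two inclusions $\cmcbis^{\model} \subseteq \imlceq^{\model}$ and $\imlceq^{\model} \subseteq \cmcbis^{\model}$ separately and then concluding equality. No new machinery is needed; the corollary is just the packaging of the two theorems.

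For the inclusion $\cmcbis^{\model} \subseteq \imlceq^{\model}$, I would unfold Definition~\ref{def:CMCbisimilarity}: if $x_1 \,\cmcbis^{\model}\, x_2$, then there is a \cmc-bisimulation $B$ over $X$ with $(x_1,x_2) \in B$. Theorem~\ref{thm:CMCbisIsImlceq} says precisely that any such $B$ is contained in $\imlceq^{\model}$, hence $(x_1,x_2) \in \imlceq^{\model}$. Equivalently, $\cmcbis^{\model}$ is the union of all \cmc-bisimulations over $X$, and a union of subsets of $\imlceq^{\model}$ is again a subset of $\imlceq^{\model}$.

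For the reverse inclusion $\imlceq^{\model} \subseteq \cmcbis^{\model}$, I would first note that $\imlceq^{\model}$ is non-empty, being reflexive on the non-empty set $X$, and that by Theorem~\ref{thm:ImlceqIsCMCbis} it is itself a \cmc-bisimulation over $X$. Consequently, for every pair $(x_1,x_2) \in \imlceq^{\model}$ there exists a \cmc-bisimulation relating $x_1$ and $x_2$ — namely $\imlceq^{\model}$ — so $x_1 \,\cmcbis^{\model}\, x_2$ by Definition~\ref{def:CMCbisimilarity}. Combining the two inclusions gives $\imlceq^{\model} = \cmcbis^{\model}$, which is the claim.

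There is no genuine obstacle at this stage, since all the substantive content lives in Theorems~\ref{thm:CMCbisIsImlceq} and~\ref{thm:ImlceqIsCMCbis}. The only points deserving a moment's care are the bookkeeping in the first direction (that $\cmcbis^{\model}$, as a union of \cmc-bisimulations, inherits inclusion in $\imlceq^{\model}$) and, in the second direction, the verification of the non-emptiness clause of Definition~\ref{def:CMCbisimilarity} for $\imlceq^{\model}$; both are routine.
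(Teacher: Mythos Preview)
Your proposal is correct and matches the paper's own treatment: the corollary is stated with a bare \qed{} and no proof, as it is an immediate consequence of Theorems~\ref{thm:CMCbisIsImlceq} and~\ref{thm:ImlceqIsCMCbis} in exactly the way you describe.
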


\subsection{\cl-bisimilarity for \qdcm{s}}
In this section, we recall a notion of bisimilarity for \qdcm{s} that has been
proposed in~\cite{Ci+20} and that is based on closure functions, instead of interior functions. 
We then prove that such a notion, which here we call \cl-bisimilarity, 
coincides with \cmc-bisimilarity. 
The introduction of \cl-bisimilarity  is motivated by the fact that we find it more intuitive, and  its use makes several proofs simpler.

\begin{definition}[\cl-bisimilarity for \qdcm{s}]\label{def:Clbisimilarity}
Given \qdcm{} $\model=(X,\closureF, \peval)$, a non empty relation 
$B \subseteq X \times X$ is a {\em \cl-bisimulation  over $X$} if, whenever $(x_1,x_2) \in B$, the following holds:
\begin{enumerate}
\item $\invpeval{x_1} = \invpeval{x_2}$;
\item for all $x_1' \in  \closureF(\SET{x_1})$ 
there exists $x_2' \in  \closureF(\SET{x_2})$ such that $(x_1',x_2') \in B$;
\item for all $x_2' \in  \closureF(\SET{x_2})$ 
there exists $x_1' \in  \closureF(\SET{x_1})$ such that $(x_1',x_2') \in B$;
\item for all $x_1' \in  \closureT(\SET{x_1})$ 
there exists $x_2' \in  \closureT(\SET{x_2})$ such that $(x_1',x_2') \in B$;
\item for all $x_2' \in  \closureT(\SET{x_2})$ 
there exists $x_1' \in  \closureT(\SET{x_1})$ such that $(x_1',x_2') \in B$;
\end{enumerate}
We say that $x_1$ and $x_2$ are \cl-bisimilar, written  $x_1 \, \clbis^{\model} \, x_2$, if and only if there exists a \cl-bisimulation $B$ such that $(x_1, x_2)\in B$.
\closedefi
\end{definition}

As mentioned in Section~\ref{sec:Introduction}, \cl-bisimulation resembles (strong) Back and Forth bisimulation of~\cite{De+90}, in particular for the presence of Conditions 4 and 5.
Should we delete the above mentioned conditions, thus making our definition of \cl-bisimulation more similar to classical strong bisimulation for transition systems, we would have to consider points $v_{12}$ and $v_{22}$ of Figure~\ref{fig:V12NOEQV22} bisimilar where
$\invpeval(v_{11})=\SET{r}\not=\SET{g}=\invpeval(v_{21})$ and 
$\invpeval(v_{12})=\SET{b}=\invpeval(v_{22})$.
\begin{figure}
\centerline{\resizebox{0.8in}{!}{\includegraphics{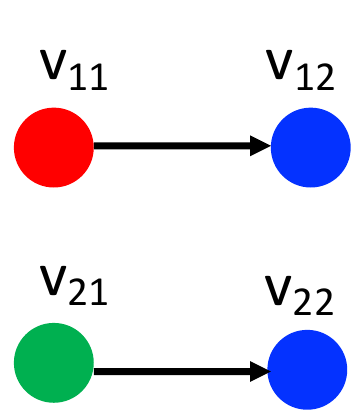}}}
\caption{$v_{12}$ and $v_{22}$ are not \cl-bisimilar.\label{fig:V12NOEQV22}}
\end{figure}
We instead want to consider them as not being bisimilar because they are in the closure of (i.e. they are ``near'' to) points that are not bisimilar, namely $v_{11}$ and $v_{21}$. For instance, $v_{21}$ might represent a poisoned physical location (whereas $v_{11}$ is not poisoned) and so $v_{22}$ should not be considered equivalent to $v_{12}$ because the former can be reached from the poisoned location while the latter cannot.

\subsection{\cl-bisimilarity minimisation}
In~\cite{Ci+20} we have shown a minimisation algorithm  for $\clbis^{\model}$.
The algorithm is defined in a coalgebraic setting: it takes an $\calF$-coalgebra, for appropriate functor $\calF$ in the
category {\bf Set}, and returns the bisimilarity quotient of its carrier set. The instantiation of the algorithm for (a
coalgebraic interpretation of) \qdcs{s} is implemented in the tool  MiniLogicA, available for the major operating systems at {\tt https://github.com/vincenzoml/MiniLogicA}.

\subsection{Logical Characterisation of \cl-bisimilarity}
In this section we present \islcs, an infinitary version of a variant of the {\em Spatial Logic for Closure Spaces} (\slcs). \slcs{} has been proposed in~\cite{Ci+16} and its basic modal operators are {\em near} ($\lnear$), {\em surrounded} ($\lsurr$) and {\em propagation} ($\lprop$), whereas reachability operators are  derived from the above. 
The variant of \slcs{} that we use in this section, instead, has only two basic reachability operators $\ltothru$ and $\lfromthru$, as in~\cite{Ci+20}. We will show that, when the underlying interptretation model is quasi-discrete, $\lnear$ can be derived from the reachability operators: more precisely, $\lnearF$ can be derived from $\lfromthru$ and $\lnearT$ from $\ltothru$ (Lemma~\ref{lem:NearEqRhoFalse} below)\footnote{For completeness, in Proposition~\ref{prp:SurrAndPropDerived} in Appendix~\ref{apx:prp:SurrAndPropDerived}, we also show  that, for {\em general} $\cm{s}$, 
$\lsurr$ can be derived from $\ltothru$ and $\lprop$ from $\lfromthru$, when the latter are interpreted over general \cm{s}.}.
Furthermore, we show that \islcs{} characterises \cl-bisimilarity. We also prove that an appropirate sub-logic of \islcs{} is sufficient for characterising \cl-bisimilarity and that
such sub-logic coincides with \imlc. As a side-result, we get the coincidence of 
\cmc-bisimilarity and \cl-bisimilarity.

\begin{definition}[Infinitary \slcs{} - \islcs{}]\label{def:Islcs}
For index set $I$ and $p \in  \ap$ the abstract language of \islcs{} is defined as follows:
$$
\form ::= p \; \sep \; \lneg \form \; \sep \; \liand_{i \in I} \form_i \; \sep \; 
\ltothru \form_1 [\form_2] \; \sep \; \lfromthru \form_1  [\form_2].
$$
The satisfaction relation for \qdcm{s}  $\model$, $x \in \model$, and \islcs{} formulas $\form$ is defined recursively on the structure of $\form$ as follows:
$$
\begin{array}{r c l c l c l}
\model,x & \models_{\islcs}  & p & \Leftrightarrow & x  \in \peval(p)\\
\model,x & \models_{\islcs}  & \lneg \,\form & \Leftrightarrow & \model,x  \models_{\islcs} \form \mbox{ does not hold}\\
\model,x & \models_{\islcs}  & \liand_{i\in I} \form_i  & \Leftrightarrow &
\model,x  \models_{\islcs} \form_i \mbox{ for all } i \in I\\
\model,x  & \models_{\islcs} & \ltothru \form_1 [\form_2] & \Leftrightarrow &
\mbox{there exist  path } \pi \mbox{ and index } \ell \mbox{ such that }\\
&&&&\mbox{\hspace{0.1in}} \pi(0) = x \mbox{ and }\\
&&&&\mbox{\hspace{0.1in}} \pi(\ell) \models_{\islcs} \form_1 \mbox{ and }\\
&&&&\mbox{\hspace{0.1in}} \mbox{for all } j \mbox{ such that } 0 < j < \ell \mbox{ the following holds:}\\
&&&&\mbox{\hspace{0.2in}} \pi(j) \models_{\islcs} \form_2;\\
\model,x  & \models_{\islcs} & \lfromthru \form_1 [\form_2] & \Leftrightarrow &
\mbox{there exist  path } \pi \mbox{ and index } \ell \mbox{ such that }\\
&&&&\mbox{\hspace{0.1in}} \pi(\ell) = x \mbox{ and }\\
&&&&\mbox{\hspace{0.1in}} \pi(0) \models_{\islcs} \form_1 \mbox{ and }\\
&&&&\mbox{\hspace{0.1in}} \mbox{for all } j \mbox{ such that } 0 < j < \ell \mbox{ the following holds:}\\
&&&&\mbox{\hspace{0.2in}} \pi(j) \models_{\islcs} \form_2.
\end{array}
$$
\closedefi
\end{definition}

\begin{definition}[\islcs-Equivalence]\label{def:IslcslEq}
Given \qdcm{} $\model = (X,\closureF,\peval)$, the equivalence relation $\islcseq^{\model} \, \subseteq \, X \times X$ is defined as: $x_1 \islcseq^{\model} x_2$ if and only if for all \islcs{} formulas $\form$ the following holds: $\model, x_1 \models_{\islcs} \form$ if and only if $\model, x_2 \models_{\islcs} \form$.
\closedefi
\end{definition}

\begin{theorem}\label{thm:ClbisIsIslcseq}
For all \qdcm{s} $\model=(X,\closureF,\peval)$, any
\cl-Bisimulation   $B$ over $X$ is included in the equivalence $\islcseq^{\model}$.
\closethm
\end{theorem}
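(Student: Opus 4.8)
The plan is to prove, by structural induction on $\form$, the slightly stronger statement that for \emph{every} \islcs{} formula $\form$ and \emph{every} pair $(y_1,y_2)\in B$ we have $\model,y_1\models_{\islcs}\form$ iff $\model,y_2\models_{\islcs}\form$; the theorem then follows immediately from Definition~\ref{def:IslcslEq}. (The induction must be stated uniformly over all pairs in $B$, not just the given $(x_1,x_2)$, because the modal cases will invoke the induction hypothesis at intermediate points of paths that are $B$-related but distinct from $x_1,x_2$.) The atomic case $\form=p$ is handled by Condition~1 of Definition~\ref{def:Clbisimilarity}, which gives $\invpeval(y_1)=\invpeval(y_2)$, and the cases $\lneg\form$ and $\liand_{i\in I}\form_i$ are immediate from the induction hypothesis since those connectives are interpreted pointwise.

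The substantive cases are the two reachability modalities, which I would treat by a path-mimicking construction; I describe $\ltothru\form_1[\form_2]$, the case of $\lfromthru\form_1[\form_2]$ being symmetric. Suppose $(y_1,y_2)\in B$ and $\model,y_1\models_{\islcs}\ltothru\form_1[\form_2]$, witnessed by a quasi-discrete path $\pi$ and an index $\ell\in\nats$ with $\pi(0)=y_1$, $\pi(\ell)\models_{\islcs}\form_1$ and $\pi(j)\models_{\islcs}\form_2$ for $0<j<\ell$. I would build a path $\pi'$ with $\pi'(0)=y_2$ maintaining the invariant $(\pi(i),\pi'(i))\in B$ for all $i\le\ell$: put $\pi'(0):=y_2$; given $\pi'(i)$ with $(\pi(i),\pi'(i))\in B$, Proposition~\ref{prp:FT}(\ref{path}) yields $\pi(i{+}1)\in\closureF(\SET{\pi(i)})$, so Condition~2 of Definition~\ref{def:Clbisimilarity} provides some $x'\in\closureF(\SET{\pi'(i)})$ with $(\pi(i{+}1),x')\in B$, and we set $\pi'(i{+}1):=x'$; finally extend $\pi'$ beyond $\ell$ by $\pi'(i):=\pi'(\ell)$ for $i>\ell$. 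Because $\pi'(i{+}1)\in\closureF(\SET{\pi'(i)})$ holds for every $i$ — by construction for $i<\ell$, and by extensivity of $\closureF$ for $i\ge\ell$ — and because the converse-closure clause of Proposition~\ref{prp:FT}(\ref{path}) is forced by the forward one via Proposition~\ref{prp:FT}(2), this $\pi'$ is a genuine path, bounded with index $\ell$ and $\pi'(0)=y_2$. Applying the induction hypothesis to the $B$-related pairs $(\pi(\ell),\pi'(\ell))$ and $(\pi(j),\pi'(j))$ for $0<j<\ell$ gives $\pi'(\ell)\models_{\islcs}\form_1$ and $\pi'(j)\models_{\islcs}\form_2$, so $\model,y_2\models_{\islcs}\ltothru\form_1[\form_2]$. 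The reverse implication $\model,y_2\models\Rightarrow\model,y_1\models$ is obtained the same way using Condition~3 in place of Condition~2; and $\lfromthru\form_1[\form_2]$ is handled identically, except that $\pi'$ is built from index $\ell$ \emph{downwards}, one uses $\pi(i{-}1)\in\closureT(\SET{\pi(i)})$ together with Conditions~4 and~5, and it is now the forward-closure clause of Proposition~\ref{prp:FT}(\ref{path}) that comes for free via Proposition~\ref{prp:FT}(2).

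As for difficulty, this is essentially the standard ``bisimilarity refines logical equivalence'' template, so I do not expect a genuine obstacle; the one place requiring care is verifying that the sequence $\pi'$ produced by the back-and-forth clauses is truly a continuous function on $(\nats,\closure_{\succ})$ — a path — and that truncating it to a bounded path after position $\ell$ preserves this. Both points are dispatched cleanly by Proposition~\ref{prp:FT}(\ref{path}): it is precisely the fact that a single forward-closure relation between consecutive points characterises quasi-discrete paths (the converse clause being automatic by Proposition~\ref{prp:FT}(2)) that makes the construction go through with only the relevant half of the \cl-bisimulation conditions used at each step, while all four reachability conditions (2--5) are ultimately needed across the two modalities and their two directions.
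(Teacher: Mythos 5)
Your proposal is correct and follows essentially the same route as the paper's own proof: structural induction on the formula, with the reachability cases handled by the same step-by-step path-mimicking construction that uses Proposition~\ref{prp:FT}(\ref{path}) together with the forward (resp.\ converse) closure clauses of Definition~\ref{def:Clbisimilarity} to build the matching path, followed by an application of the induction hypothesis at each $B$-related pair of positions. Your explicit remarks on stating the induction uniformly over all pairs in $B$ and on verifying continuity and boundedness of the constructed path are just careful spellings-out of steps the paper treats implicitly.
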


The  converse of Theorem~\ref{thm:ClbisIsIslcseq} is given below.

\begin{theorem}\label{thm:IslcseqIsClbis}
For all \qdcm{s}  $\model=(X,\closureF,\peval)$, $\islcseq^{\model}$ is a \cl-Bisimulation.
\closethm
\end{theorem}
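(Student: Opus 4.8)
The plan is to show that $\islcseq^{\model}$ satisfies the five conditions of Definition~\ref{def:Clbisimilarity}. Since $\islcseq^{\model}$ is by construction nonempty (it is an equivalence relation on the nonempty set $X$), it suffices to fix $(x_1,x_2)\in{\islcseq^{\model}}$ and verify Conditions~1--5. Condition~1 is immediate: atomic proposition letters are \islcs{} formulas, so $x_1\islcseq x_2$ forces $\invpeval(x_1)=\invpeval(x_2)$. By the symmetry of the conditions (swapping $x_1\leftrightarrow x_2$) and the $\closureF/\closureT$ duality, it is enough to establish Condition~2; Conditions~3--5 follow by the same argument with roles exchanged and $\ltothru$ in place of $\lfromthru$.

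The heart of the proof is therefore the following: given $x_1'\in\closureF(\SET{x_1})$, find $x_2'\in\closureF(\SET{x_2})$ with $x_1'\islcseq x_2'$. I would argue by contradiction. Suppose no such $x_2'$ exists, i.e. every point $x_2'\in\closureF(\SET{x_2})$ is \islcs-distinguishable from $x_1'$. For each such $x_2'$ pick a formula $\psi_{x_2'}$ with $x_1'\models\psi_{x_2'}$ but $x_2'\not\models\psi_{x_2'}$ (if $x_1'\not\models$ and $x_2'\models$, negate). Form $\form_1 \,=\, \liand_{x_2'\in\closureF(\SET{x_2})}\psi_{x_2'}$; this is a legal \islcs{} formula since \islcs{} admits arbitrary (index-set) conjunctions. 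Then $x_1'\models\form_1$ while no point of $\closureF(\SET{x_2})$ satisfies $\form_1$. The key step now is to use the reachability modality to ``see'' $x_1'$ from $x_1$: since $x_1'\in\closureF(\SET{x_1})$, by Proposition~\ref{prp:FT}(\ref{path}) the two-point function $\pi(0)=x_1$, $\pi(i)=x_1'$ for $i\ge 1$ is a path (one checks $x_1'\in\closureF(\SET{x_1})$ and $x_1\in\closureT(\SET{x_1'})$, the latter being equivalent to $x_1'\in\closureF(\SET{x_1})$ by Proposition~\ref{prp:FT}(2)), so taking $\ell=1$ witnesses $x_1\models\lfromthru\form_1[\lfalse]$ --- or rather $x_1\models\lfromthru'$ the ``length-one'' reachability; note that with $\ell=1$ the condition ``for all $j$ with $0<j<\ell$'' is vacuous, so the intermediate-formula slot is irrelevant and we may write $\lfromthru\form_1[\form]$ for any $\form$, e.g. $\form=\liand_{\emptyset}$, i.e. $\ltrue$. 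Hence $x_1\models\lfromthru\form_1[\ltrue]$.

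Now apply $x_1\islcseq x_2$: we get $x_2\models\lfromthru\form_1[\ltrue]$, so there is a path $\pi$ and index $\ell$ with $\pi(\ell)=x_2$, $\pi(0)\models\form_1$, and the intermediate points satisfying $\ltrue$ (no constraint). If $\ell=0$ then $x_2=\pi(0)\models\form_1$; but $x_2\in\closureF(\SET{x_2})$ (closure is extensive), contradicting that no point of $\closureF(\SET{x_2})$ satisfies $\form_1$. If $\ell\ge 1$, then $\pi(\ell-1)\in\closureT(\SET{\pi(\ell)})=\closureT(\SET{x_2})$ by Proposition~\ref{prp:FT}(\ref{path})---wait, that gives a point in $\closureT(\SET{x_2})$, not $\closureF(\SET{x_2})$. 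This mismatch is the main obstacle: the ``from'' modality naturally produces a predecessor of $x_2$, which lies in $\closureT(\SET{x_2})$, whereas Condition~2 asks for a point of $\closureF(\SET{x_2})$. The resolution is that the \emph{right} modality to use is not $\lfromthru$ but a path that \emph{starts} at $x_1$: from $x_1'\in\closureF(\SET{x_1})$, the length-one path $\pi(0)=x_1,\pi(1{+})=x_1'$ satisfies $\pi(0)=x_1$ and $\pi(1)=x_1'\models\form_1$, so $x_1\models\ltothru\form_1[\ltrue]$. Then $x_1\islcseq x_2$ yields $x_2\models\ltothru\form_1[\ltrue]$: a path $\pi$ with $\pi(0)=x_2$, $\pi(\ell)\models\form_1$. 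If $\ell=0$, $x_2\models\form_1$, contradiction as before. If $\ell\ge1$, then by Proposition~\ref{prp:FT}(\ref{path}), $\pi(1)\in\closureF(\SET{\pi(0)})=\closureF(\SET{x_2})$; but does $\pi(1)\models\form_1$? Not necessarily, only $\pi(\ell)$ does. So I would instead take $\psi$ itself to be ``reachability-robust'': replace $\form_1$ in the construction above by considering, for the \emph{minimal} such $\ell$, and use that $\pi(1)\in\closureF(\SET{x_2})$ together with the fact that $\form_1$ restricted appropriately must be satisfied. Concretely, the clean fix---which I expect the authors use---is: the point $\pi(1)\in\closureF(\SET{x_2})$ can itself reach a $\form_1$-point along $\pi$, so $\pi(1)\models\ltothru\form_1[\ltrue]$; taking $\form_1' = \form_1 \wedge \ltothru\form_1[\ltrue]$ from the start (still distinguishing $x_1'$ from all of $\closureF(\SET{x_2})$, provided one also builds in that no point of $\closureF(\SET{x_2})$ reaches an $x_1'$-equivalent point---which requires strengthening the induction), collapses the case analysis. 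The genuinely delicate point, and where I would spend the most care, is setting up the distinguishing formula so that it simultaneously rules out $\ell=0$ and all $\ell\ge1$; this is exactly the standard ``Hennessy--Milner with infinite conjunctions'' maneuver, and the infinitary conjunction of \islcs{} is what makes it go through. Once Condition~2 is done, Conditions~3, 4, 5 are obtained by the symmetric argument (swap indices) and the dual argument (use $\lfromthru$ / $\closureT$), so the proof concludes.
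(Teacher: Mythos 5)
Your overall strategy---fix $(x_1,x_2)\in{\islcseq}$, suppose for contradiction that some $x_1'\in\closureF(\SET{x_1})$ has no $\islcseq$-partner in $\closureF(\SET{x_2})$, and form the infinitary conjunction $\form_1=\liand_{x_2'\in\closureF(\SET{x_2})}\psi_{x_2'}$ separating $x_1'$ from every point of $\closureF(\SET{x_2})$---is exactly the paper's. The gap is in the one step where this is converted into a formula distinguishing $x_1$ from $x_2$. You reach $\ltothru\form_1[\lfalse]$ and then declare the second argument ``irrelevant'' because the witnessing path at $x_1$ has length one, replacing it by $\ltrue$. The slot is irrelevant only for the \emph{forward} direction (establishing $\model,x_1\models\ltothru\form_1[\cdot]$); it is essential for the \emph{backward} direction. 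From $\model,x_2\models\ltothru\form_1[\ltrue]$ you only get a $\form_1$-point at the end of an arbitrarily long path from $x_2$, which, as you yourself observe, need not lie in $\closureF(\SET{x_2})$, so no contradiction follows. With $[\lfalse]$ the intermediate condition forces $\ell\le 1$, so the reached point $\pi(\ell)$ lies in $\closureF(\SET{x_2})$ (by Proposition~\ref{prp:FT}(\ref{path}) for $\ell=1$, and by extensiveness of closure for $\ell=0$), and the contradiction with the construction of $\form_1$ is immediate. This is exactly the content of Lemma~\ref{lem:NearEqRhoFalse}, $\ltothru\form[\lfalse]\equiv\lnearT\form$: the $[\lfalse]$ instance is a one-step modality. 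The paper's proof indeed uses $\ltothru\bigl(\liand_{x_2'\in\closureF(\SET{x_2})}\Phi_{x_2'}\bigr)[\lfalse]$ and concludes in a few lines at that point.

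Your attempted repairs do not close the gap: strengthening the formula to $\form_1\wedge\ltothru\form_1[\ltrue]$ still constrains only the endpoint of the path from $x_2$, not its first step, and the suggestion that one ``requires strengthening the induction'' has nothing to attach to---this theorem is not proved by induction on formulas (that is the companion Theorem~\ref{thm:ClbisIsIslcseq}). Keep everything else as you have it, but take $\lfalse$ as the second argument of the reachability modality; Conditions~3--5 then follow, as you say, by exchanging the roles of $x_1$ and $x_2$ and by using $\lfromthru(\cdot)[\lfalse]$ for the $\closureT$ clauses.
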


\begin{corollary}\label{cor:ClbisEqIslcseq}
For all \qdcm{s}  $\model=(X,\closureF,\peval)$ we have that $\islcseq^{\model}$ coincides with $\clbis^{\model}$.
\qed 
\end{corollary}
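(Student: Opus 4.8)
The plan is to read the corollary off Theorems~\ref{thm:ClbisIsIslcseq} and~\ref{thm:IslcseqIsClbis} by chasing the two set inclusions; there is essentially nothing else to do. For $\clbis^{\model} \subseteq \islcseq^{\model}$: if $x_1 \clbis^{\model} x_2$ then, by Definition~\ref{def:Clbisimilarity}, there is a \cl-bisimulation $B$ over $X$ with $(x_1,x_2) \in B$, and Theorem~\ref{thm:ClbisIsIslcseq} gives $B \subseteq \islcseq^{\model}$, whence $x_1 \islcseq^{\model} x_2$.

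For the reverse inclusion $\islcseq^{\model} \subseteq \clbis^{\model}$: I would first note that $\islcseq^{\model}$ is non-empty, being an equivalence relation on the non-empty set $X$ and hence containing the diagonal; by Theorem~\ref{thm:IslcseqIsClbis} it is moreover a \cl-bisimulation over $X$. Thus for every $(x_1,x_2) \in \islcseq^{\model}$ the relation $\islcseq^{\model}$ is itself a \cl-bisimulation containing the pair, so $x_1 \clbis^{\model} x_2$ by Definition~\ref{def:Clbisimilarity}. Combining the two inclusions gives $\islcseq^{\model} = \clbis^{\model}$, as required.

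Since the corollary is purely formal, the only ``obstacle'' sits in the two theorems it invokes, and it is worth saying where. In Theorem~\ref{thm:ClbisIsIslcseq} the natural route is structural induction on \islcs{} formulas; the atomic, negation and (infinitary) conjunction cases are immediate, and the only substantive cases are $\ltothru \form_1 [\form_2]$ and $\lfromthru \form_1 [\form_2]$, where from a path witnessing the formula at $x_1$ one rebuilds, index by index, a path at $x_2$, the five clauses of Definition~\ref{def:Clbisimilarity} being precisely what is needed at each step because, by Proposition~\ref{prp:FT}, item~\ref{path}, a path is characterised by $\pi(i) \in \closureF(\pi(i-1))$ and $\pi(i-1) \in \closureT(\pi(i))$. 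The genuinely delicate direction is Theorem~\ref{thm:IslcseqIsClbis}: to verify, say, clause~2 for $\islcseq^{\model}$ I would argue by contradiction, assuming some $x_1' \in \closureF(\{x_1\})$ has no \islcs{}-equivalent partner in $\closureF(\{x_2\})$, then picking for each $x_2' \in \closureF(\{x_2\})$ a distinguishing formula and bundling these with $\liand$ into a reachability formula true at $x_1$ but not at $x_2$ (here one uses that $\lnearF$ is derivable from $\lfromthru$), contradicting $x_1 \islcseq^{\model} x_2$; clauses 3--5 are symmetric. The essential use of the infinitary conjunction in that step is where I expect the main technical effort to lie.
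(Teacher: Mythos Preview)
Your proof is correct and is exactly what the paper intends: the corollary is marked with \qed{} precisely because it is immediate from Theorems~\ref{thm:ClbisIsIslcseq} and~\ref{thm:IslcseqIsClbis}, and your two-inclusion argument spells this out faithfully. Your additional commentary on where the work lies in those theorems is also accurate and matches the appendix proofs (structural induction with path reconstruction for Theorem~\ref{thm:ClbisIsIslcseq}; a contradiction argument with an infinitary conjunction of distinguishing formulas for Theorem~\ref{thm:IslcseqIsClbis}), with the minor caveat that in the paper's proof of clause~2 the separating formula is built with $\ltothru(\cdot)[\lfalse]$, i.e.\ the $\lnearT$-direction, rather than $\lnearF$/$\lfromthru$.
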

 
Let $\lnearF$ and $\lnearT$ be defined as in Definition~\ref{def:Imlc}.

\begin{lemma}\label{lem:NearEqRhoFalse}
For all \qdcm{s}  $\model=(X,\closureF,\peval)$, the following holds:\\
$\lnearF \form \equiv \lfromthru \form[\lfalse]$ and $\lnearT \form \equiv \ltothru \form [\lfalse]$.
\end{lemma}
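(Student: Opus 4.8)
The plan is to prove both equivalences by unfolding the semantics of the two sides at an arbitrary point $x$ of an arbitrary \qdcm{} $\model=(X,\closureF,\peval)$; concretely, I will show that $\model,x\models_{\imlc}\lnearF\form$ iff $\model,x\models_{\islcs}\lfromthru\form[\lfalse]$, and dually that $\model,x\models_{\imlc}\lnearT\form$ iff $\model,x\models_{\islcs}\ltothru\form[\lfalse]$. The pivotal observation is that $\sem{\lfalse}^{\model}=\emptyset$, so the clause ``$\pi(j)\models_{\islcs}\lfalse$ for all $j$ with $0<j<\ell$'' in the semantics of $\lfromthru$ can be satisfied only if there is no index strictly between $0$ and $\ell$; for quasi-discrete witnessing paths (index space $(\nats,\closure_\succ)$) this forces $\ell=0$ or $\ell=1$. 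Hence $\lfromthru\form[\lfalse]$ degenerates to: either $x$ is itself a $\form$-point, or $x$ is the image of a $\form$-point after a single step of a path. Recalling that on a \qdcm{} $\closureF=\closure_R$ and that, by quasi-discreteness (Definition~\ref{def:QDClosureSpace}, clause~2), $\closureF(\sem{\form}^{\model})=\bigcup_{a\in\sem{\form}^{\model}}\closureF(\SET{a})$, this is exactly the condition $x\in\closureF(\sem{\form}^{\model})$ that defines $\lnearF\form$.

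For the implication from $\lnearF\form$ to $\lfromthru\form[\lfalse]$, assume $x\in\closureF(\sem{\form}^{\model})$; by quasi-discreteness there is $a\in\sem{\form}^{\model}$ with $x\in\closureF(\SET{a})$. I define $\pi\colon\nats\to X$ by $\pi(0)=a$ and $\pi(i)=x$ for all $i\geq 1$. Using Proposition~\ref{prp:FT}, items (2) and (5), one checks that $\pi$ is a path: the only non-trivial requirement, at $i=1$, is $\pi(1)=x\in\closureF(\SET{\pi(0)})$, which holds by choice of $a$, together with the equivalent condition $\pi(0)=a\in\closureT(\SET{x})$ supplied by item (2). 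Then $\pi$ with $\ell=1$ witnesses $\model,x\models_{\islcs}\lfromthru\form[\lfalse]$, since $\pi(0)=a\models_{\islcs}\form$, $\pi(\ell)=x$, and there is vacuously no $j$ with $0<j<1$. For the converse, let a path $\pi$ and index $\ell$ witness $\model,x\models_{\islcs}\lfromthru\form[\lfalse]$. If $\ell=0$ then $x=\pi(0)\in\sem{\form}^{\model}\subseteq\closureF(\sem{\form}^{\model})$. If $\ell=1$ then $\pi(0)\in\sem{\form}^{\model}$ and, by Proposition~\ref{prp:FT}, item (5), $x=\pi(1)\in\closureF(\SET{\pi(0)})\subseteq\closureF(\sem{\form}^{\model})$ by monotonicity, while the companion condition $\pi(0)\in\closureT(\SET{\pi(1)})$ imposes nothing new by item (2). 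Either way $\model,x\models_{\imlc}\lnearF\form$.

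The equivalence $\lnearT\form\equiv\ltothru\form[\lfalse]$ is entirely dual: one repeats the argument exchanging the roles of $\pi(0)$ and $\pi(\ell)$ and replacing $\closureF$ by $\closureT$, using that $\closureT=\closure_{R^{-1}}$ is again the closure operator of a quasi-discrete space, so that $\closureT(\sem{\form}^{\model})=\bigcup_{b\in\sem{\form}^{\model}}\closureT(\SET{b})$ (equivalently, Proposition~\ref{prp:FT}, item (1)), together once more with items (2) and (5) to build and analyse the witnessing path. The only point I expect to need genuine care is the bookkeeping around the index $\ell$: noticing that the body $\lfalse$ collapses the witnessing path to length at most one in the quasi-discrete case (for a non-quasi-discrete index space only $\ell=0$ survives, which is still covered because $\sem{\form}^{\model}\subseteq\closureF(\sem{\form}^{\model})$), handling the degenerate case $\ell=0$ explicitly, and verifying that the two-point function built in the converse direction really is a path in the sense of Definition~\ref{def:path}. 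None of this is deep, and the proof should be short.
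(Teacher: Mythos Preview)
Your proposal is correct and follows essentially the same route as the paper's own proof: both directions are handled by observing that the $\lfalse$ body forces $\ell\in\{0,1\}$ for the quasi-discrete index space, then invoking quasi-discreteness of $\closureF$ and Proposition~\ref{prp:FT}(\ref{path}) to construct or analyse the two-point path. Your forward argument is in fact slightly tidier than the paper's, which splits into the cases $\model,x\models\form$ and $\model,x\not\models\form$; you avoid this by uniformly picking $a\in\sem{\form}^{\model}$ with $x\in\closureF(\SET{a})$ and taking $\ell=1$, which covers both cases at once.
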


\begin{theorem}\label{thm:ClbisEqImlceq}
For all \qdcm{s}  $\model=(X,\closureF,\peval)$, $\imlceq^{\model}$ coincides with $\clbis^{\model}$.
\closethm
\end{theorem}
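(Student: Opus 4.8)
The plan is to establish the coincidence $\imlceq^{\model} = \clbis^{\model}$ by chaining together results already at hand, with Lemma~\ref{lem:NearEqRhoFalse} as the bridge. We already know from Corollary~\ref{cor:ClbisEqIslcseq} that $\clbis^{\model}$ coincides with $\islcseq^{\model}$, and from Corollary~\ref{cor:CMCbisEqImleq} that $\imlceq^{\model}$ coincides with $\cmcbis^{\model}$. So it suffices to relate $\islcseq^{\model}$ and $\imlceq^{\model}$ directly, at the level of logical equivalences. Since \imlc{} is (up to the derived-operator translation) a sub-logic of \islcs, one inclusion is immediate; the work is in the other direction.

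First I would argue the easy inclusion $\islcseq^{\model} \subseteq \imlceq^{\model}$. By Lemma~\ref{lem:NearEqRhoFalse}, every \imlc{} formula can be rewritten as a semantically equivalent \islcs{} formula: replace each occurrence of $\lnearF \form$ by $\lfromthru \form[\lfalse]$ and each $\lnearT \form$ by $\ltothru \form[\lfalse]$, proceeding by induction on formula structure (the Boolean and infinitary-conjunction cases are transparent, and $\lfalse$ is itself expressible, e.g.\ as $p \wedge \lneg p$ for some $p \in \ap$, or as the empty disjunction). Hence if $x_1 \islcseq^{\model} x_2$ then $x_1$ and $x_2$ agree on all \imlc{} formulas, i.e.\ $x_1 \imlceq^{\model} x_2$.

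For the converse inclusion $\imlceq^{\model} \subseteq \islcseq^{\model}$ I would not try a syntactic translation — \islcs{} is genuinely more expressive as a logic — but instead go through the bisimulations. By Corollary~\ref{cor:CMCbisEqImleq}, $\imlceq^{\model} = \cmcbis^{\model}$, so it suffices to show that $\cmcbis^{\model}$ is a \cl-bisimulation; then by Corollary~\ref{cor:ClbisEqIslcseq} we get $\imlceq^{\model} = \cmcbis^{\model} \subseteq \clbis^{\model} = \islcseq^{\model}$, which together with the first inclusion closes the loop and yields $\imlceq^{\model} = \clbis^{\model}$ as a bonus reformulation. To show $\cmcbis^{\model}$ is a \cl-bisimulation, take $(x_1,x_2) \in \cmcbis^{\model}$ and verify Conditions 2--5 of Definition~\ref{def:Clbisimilarity}. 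The key observation is the duality in a \qdcm{}: $x' \in \closureF(\SET{x})$ iff $x \in \interiorF(\overline{\SET{x'}})$ fails, i.e.\ more usefully, a set $S$ is a $\closureF$-neighbourhood of $x$ exactly when $\closureF(\overline S) \not\ni x$, and the \emph{minimal} $\closureF$-neighbourhood of $x$ is $\{x\} \cup \ZET{y}{(y,x) \in R}$ — precisely $\closureT(\SET{x})$ (by Proposition~\ref{prp:FT}, items 1 and 2). Dually the minimal $\interiorT$-neighbourhood of $x$ is $\closureF(\SET{x})$. So I would instantiate Condition 2 of the \cmc-bisimulation (the $\interiorF$-forth clause) with $S_1$ equal to the minimal $\interiorF$-neighbourhood of $x_1$, namely $\closureT(\SET{x_1})$, extract a neighbourhood $S_2$ of $x_2$, shrink it to the minimal one $\closureT(\SET{x_2})$, and read off exactly Condition 4 of \cl-bisimulation; symmetrically Condition 3 of \cmc{} gives Condition 5, and the two $\interiorT$-clauses of \cmc{} give Conditions 2 and 3 of \cl{} (using that the minimal $\interiorT$-neighbourhood of $x$ is $\closureF(\SET{x})$). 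Condition 1 is identical in both definitions.

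The main obstacle is bookkeeping the $\mathrm{F}/\mathrm{T}$ swap correctly: the \cmc-definition is phrased with \emph{interior} operators (where smaller neighbourhoods are stronger) while the \cl-definition is phrased with \emph{closure} operators directly, and the minimal $\interiorF$-neighbourhood is a $\closureT$-set, not a $\closureF$-set. I would state and prove a short auxiliary lemma: for a \qdcm{}, $x \in \interiorF(S)$ iff $\closureT(\SET{x}) \subseteq S$, and $x \in \interiorT(S)$ iff $\closureF(\SET{x}) \subseteq S$; this follows by unfolding $\interiorF S = \overline{\closureF(\overline S)}$ together with Proposition~\ref{prp:FT}(1)--(2). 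With that lemma in place, translating each \cmc-clause into the corresponding \cl-clause (and back, for completeness, though only one direction is strictly needed here) is routine. Note this also re-derives the already-known coincidence of \cmc- and \cl-bisimilarity promised in the introduction, so an alternative, equally valid route is simply to cite that coincidence together with Corollaries~\ref{cor:CMCbisEqImleq} and \ref{cor:ClbisEqIslcseq}; I would present the self-contained version since the duality lemma is short and illuminating.
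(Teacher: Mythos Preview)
Your proposal is correct, but the second inclusion takes a different route from the paper. For $\clbis \Rightarrow \imlceq$ you and the paper agree: Lemma~\ref{lem:NearEqRhoFalse} makes \imlc{} a definable fragment of \islcs, so Theorem~\ref{thm:ClbisIsIslcseq} (or equivalently Corollary~\ref{cor:ClbisEqIslcseq}) gives the implication. For $\imlceq \Rightarrow \clbis$, the paper does \emph{not} pass through $\cmcbis$; instead it shows directly that $\imlceq$ is a \cl-bisimulation by rerunning the proof of Theorem~\ref{thm:IslcseqIsClbis} with the distinguishing formula $\lnearT(\liand_{x_2'\in\closureF(\SET{x_2})}\Phi_{x_2'})$ in place of $\ltothru(\liand\cdots)[\lfalse]$ (and dually $\lnearF$ for $\lfromthru$). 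This is shorter: no need for the minimal-neighbourhood duality lemma or for Corollary~\ref{cor:CMCbisEqImleq}. Your route, by contrast, proves the semantic inclusion $\cmcbis \subseteq \clbis$ directly and then invokes Corollary~\ref{cor:CMCbisEqImleq}; this is longer but has the virtue of establishing $\cmcbis = \clbis$ without any logic at all, whereas in the paper that coincidence (Corollary~\ref{cor:ClbisEqCMCbis}) is obtained only \emph{after} the present theorem, by composing the two logical characterisations.

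One small correction to your bookkeeping: instantiating Condition~2 of Definition~\ref{def:CMCbisimilarity} with $S_1=\closureT(\SET{x_1})$ yields the ``for all $s_2\in S_2$ there is $s_1\in S_1$'' shape, which matches Condition~5 of Definition~\ref{def:Clbisimilarity}, not Condition~4. The correct pairing is $\text{CMC-2}\to\text{\cl-5}$, $\text{CMC-3}\to\text{\cl-4}$, $\text{CMC-4}\to\text{\cl-3}$, $\text{CMC-5}\to\text{\cl-2}$; in each case one instantiates the \cmc-clause with the minimal neighbourhood on the $x_2$-side (not the $x_1$-side) and uses the inclusion $\closureF(\SET{x_1})\subseteq S_1$ (resp.\ $\closureT$) coming from your duality lemma. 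You anticipated this kind of slip; the fix is routine and the argument goes through.
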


From Corollary~\ref{cor:CMCbisEqImleq} and Theorem~\ref{thm:ClbisEqImlceq}
we get the following

\begin{corollary}\label{cor:ClbisEqCMCbis}
For all \qdcm{s}  $\model=(X,\closureF,\peval)$ we have that $\cmcbis^{\model}$ coincides with $\clbis^{\model}$.
\qed 
\end{corollary}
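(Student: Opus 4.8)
The plan is to obtain the statement purely by transitivity, chaining together the two logical characterisations already in hand. By Corollary~\ref{cor:CMCbisEqImleq}, on every \qdcm{} $\model$ the relation $\cmcbis^{\model}$ coincides with $\imlceq^{\model}$; by Theorem~\ref{thm:ClbisEqImlceq}, the relation $\clbis^{\model}$ also coincides with $\imlceq^{\model}$. Coincidence of relations being transitive, $\cmcbis^{\model}$ and $\clbis^{\model}$ coincide, which is exactly the claim. No genuinely new argument is needed here: the substantive work has already been carried out in the two prerequisites, and in particular in Lemma~\ref{lem:NearEqRhoFalse}, which drives Theorem~\ref{thm:ClbisEqImlceq} by letting $\lnearF$ and $\lnearT$ be expressed through the reachability operators with trivial path condition.

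For completeness I would also indicate how the coincidence can be seen \emph{directly}, bypassing the logics, since this is instructive and short. One shows that a non-empty $B \subseteq X \times X$ is a \cmc-bisimulation if and only if it is a \cl-bisimulation. The bridging observation is that in a \qdcs{} one has, for all $S \subseteq X$ and $x \in X$, the equivalences $x \in \interiorF(S) \iff \closureT(\SET{x}) \subseteq S$ and $x \in \interiorT(S) \iff \closureF(\SET{x}) \subseteq S$ (immediate from $\interiorF S = \overline{\closureF(\overline{S})}$ together with the explicit descriptions of $\closureF$ and $\closureT$ in Proposition~\ref{prp:FT}). Consequently $\closureT(\SET{x})$ is the minimal neighbourhood of $x$ with respect to $\interiorF$, and $\closureF(\SET{x})$ the minimal neighbourhood with respect to $\interiorT$.

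Next I would note that in clauses (2)--(5) of Definition~\ref{def:CMCbisimilarity} it is enough to instantiate the neighbourhoods by these minimal ones: the premise $x_1 \in \interiorF(S_1)$ is hardest when $S_1 = \closureT(\SET{x_1})$, and the ``for all $s_2 \in S_2$ there is $s_1 \in S_1$'' requirement is monotone in $S_2$, so the best witness is $S_2 = \closureT(\SET{x_2})$ (and dually for $\interiorT$). Under this instantiation, clause (2) (resp.\ (3)) of Definition~\ref{def:CMCbisimilarity} becomes precisely clause (5) (resp.\ (4)) of Definition~\ref{def:Clbisimilarity}, and clause (4) (resp.\ (5)) becomes clause (2) (resp.\ (3)); the atomic-proposition clause is literally the same. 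The only thing to watch is keeping straight which of $\closureF(\SET{x})$, $\closureT(\SET{x})$ is the minimal neighbourhood for $\interiorF$ and which for $\interiorT$ — and this is the whole of the ``difficulty''. Everything else is a direct rewriting, so I expect no real obstacle: the corollary is essentially a bookkeeping consequence of the preceding development.
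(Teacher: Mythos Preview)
Your primary argument---chaining Corollary~\ref{cor:CMCbisEqImleq} and Theorem~\ref{thm:ClbisEqImlceq} through the common relation $\imlceq^{\model}$---is exactly the paper's proof, which states the corollary immediately after those two results with a \qed. The additional direct argument you sketch (via minimal neighbourhoods $\closureT(\SET{x})$ and $\closureF(\SET{x})$) is correct and instructive, though not in the paper; note only a small bookkeeping slip: clause~(4) of Definition~\ref{def:CMCbisimilarity} reduces to clause~(3) of Definition~\ref{def:Clbisimilarity} and clause~(5) to clause~(2), rather than the other way round.
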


\section{\pth-bisimilarity}\label{sec:Pathbisimilarity}

\cm-bisimilarity, and its refinements \cmc-bisimilarity and \cl-bisimilarity, are a fundamental starting point for the study of bisimulations in space due to their strong links to Topo-bisimulation. On the other hand, they are somehow too much fine grain relations for reasoning about general properties of space and related notions of model minimisation. For instance, with reference to the model of Figure~\ref{fig:FiveAndFour}, where all red points satisfy only atomic proposition $r$ while the blue ones satisfy only $b$, the point at the center of the left part of the model is not \cmc-bisimilar to any other red point in the model. This is because \cmc-bisimilarity is based on the fact that points reachable ``in one step'' are taken into consideration, as it is clear from the equivalent \cl-bisimilarity definition. This, in turn,  gives bisimilarity a sort of ``counting'' power, that goes against the idea that, for instance, the left  part of the model could be represented by the right part---and that, actually, both parts could be represented by a minimal model consisting of just one red point and one blue point, connected by a symmetric arrow, which would convey an idea of space scaling. Such scaling would be quite useful when dealing, for instance, with models representing images---as briefly mentioned in Section~\ref{sec:Introduction}. Such models are \qdcm{s} where the ``points'' are pixels or voxels and the underlying relation is the so called {\em Adjacency} relation, i.e. a reflexive and symmetric relation such that each pixel/voxel is related to all the pixel/voxel that share an edge or a vertex with it. In this and in the next sections, we present  weaker notions of bisimilarity, namely \pth-bisimilarity and \cop-bisimilarity, with the aim of capturing the intuitive notions briefly discussed above. We start with the definition of \pth-bisimilarity.

\begin{figure}
\centerline{
\resizebox{1.3in}{!}{\includegraphics{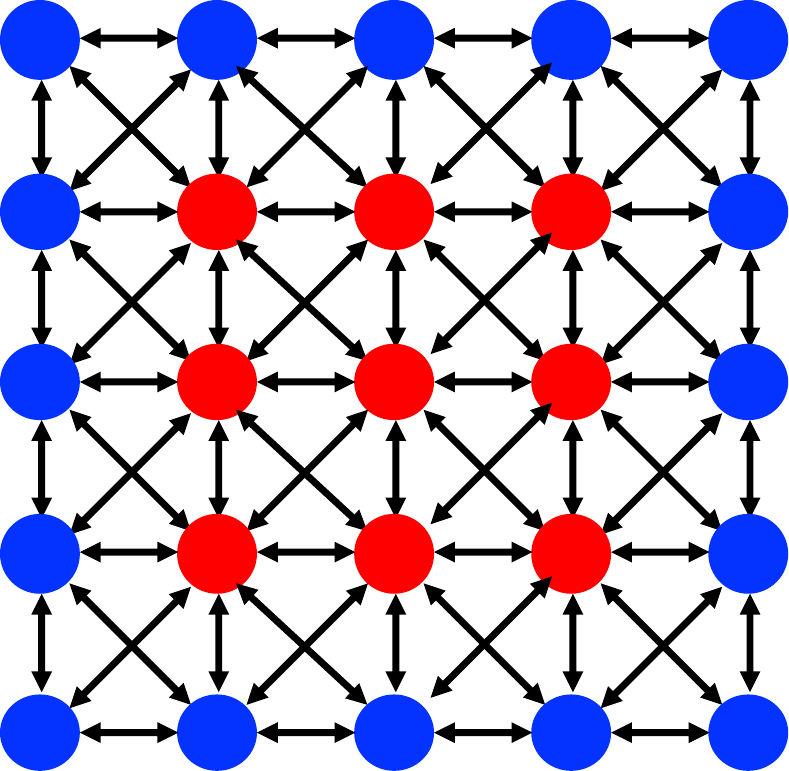}}\hspace{0.7in}
\raisebox{0.1in}{\resizebox{1in}{!}{\includegraphics{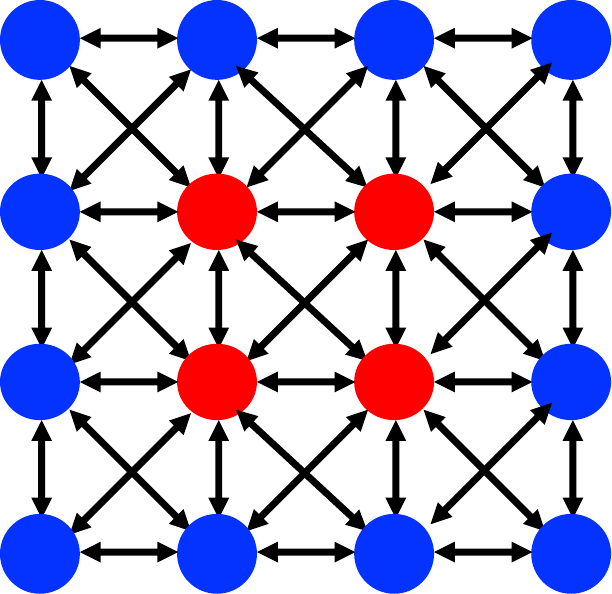}}}
}
\caption{\label{fig:FiveAndFour} A model consisting of two parts}
\end{figure}

\subsection{\pth-bisimilarity}

\begin{definition}[\pth-bisimilarity]\label{def:Pathbisimilarity}
Given CM $\model=(X,\closure, \peval)$ and index space
$\calJ=(I,\closure^{\calJ})$, a non empty relation 
$B \subseteq X \times X$ is a {\em \pth-bisimulation  over $X$} if, whenever $(x_1,x_2) \in B$, the following holds:
\begin{enumerate}
\item $\invpeval(x_1) = \invpeval(x_2)$;
\item for all $\pi_1 \in \bpthsF_{\calJ,\model}(x_1)$, 
there exists $\pi_2 \in \bpthsF_{\calJ,\model}(x_2)$ such that\\
$(\pi_1(\pthlen(\pi_1)), \pi_2(\pthlen(\pi_2))) \in B$;
\item for all $\pi_2 \in \bpthsF_{\calJ,\model}(x_2)$, 
there exists $\pi_1 \in \bpthsF_{\calJ,\model}(x_1)$ such that\\
$(\pi_1(\pthlen(\pi_1)), \pi_2(\pthlen(\pi_2))) \in B$;
\item for all $\pi_1 \in \bpthsT_{\calJ,\model}(x_1)$, 
there exists $\pi_2 \in \bpthsT_{\calJ,\model}(x_2)$ such that.\\
$(\pi_1(0), \pi_2(0)) \in B$;
\item for all $\pi_2 \in \bpthsT_{\calJ,\model}(x_2)$, 
there exists $\pi_1 \in \bpthsT_{\calJ,\model}(x_1)$ such that\\
$(\pi_1(0), \pi_2(0)) \in B$.
\end{enumerate}
$x_1$ and $x_2$ are {\em \pth-bisimilar}, written $x_1\,\pthbis^{\model}\, x_2$,  if and only if there is a \pth-bisimulation $B$ over $X$  such that $(x_1,x_2) \in B$. \closedefi
\end{definition}

In the sequel, we will say that two points are {\ap}-equivalent, written
 $\apeq$, if they satisfy exactly the same atomic propositions. In other words: 
$\apeq$ is the set $\ZET{(x_1,x_2)}{\invpeval(x_1) = \invpeval(x_2)}$. The following proposition trivially follows from the relevant definitions:

\begin{proposition}
\label{prp:PathImplAP}
For all \cm{s} $\model = (X,\closure, \peval)$ and $x_1, x_2 \in X$ the following holds:\\
$x_1\, \pthbis \, x_2$ implies $x_1\, \apeq \, x_2$.\qed
\end{proposition}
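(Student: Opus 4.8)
The plan is to simply unfold the definitions; this is an immediate consequence of the first clause of Definition~\ref{def:Pathbisimilarity}. Assume $x_1 \pthbis^{\model} x_2$. By definition of $\pthbis^{\model}$, there exists a \pth-bisimulation $B$ over $X$ with $(x_1,x_2) \in B$. First I would invoke clause~(1) of the definition of \pth-bisimulation applied to the pair $(x_1,x_2) \in B$, which gives $\invpeval(x_1) = \invpeval(x_2)$. Then, recalling that $\apeq$ is precisely the relation $\ZET{(x_1,x_2)}{\invpeval(x_1) = \invpeval(x_2)}$ as defined just above the proposition, we conclude $x_1 \apeq x_2$.

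There is no real obstacle here: clauses~(2)--(5) of Definition~\ref{def:Pathbisimilarity}, concerning bounded forward and backward paths, play no role, since the atomic-proposition agreement is built directly into clause~(1) exactly as in every bisimulation notion of the paper. Hence the proposition follows trivially, as asserted in the text preceding it. \qed
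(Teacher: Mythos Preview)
Your proof is correct and matches the paper's treatment: the paper simply states that the proposition ``trivially follows from the relevant definitions'' and marks it with \qed, which is exactly the unfolding you carry out via clause~(1) of Definition~\ref{def:Pathbisimilarity}.
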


The converse of  \propo{} \ref{prp:PathImplAP} does not hold, as shown again in Figure~\ref{fig:TraceEqNoImplCMCbis} where we leave to the reader the easy task of
checking  that $y_{11} \not\pthbis y_{21}$ despite $y_{11} \apeq y_{21}$ since $\invpeval(y_{11}) = \invpeval(y_{21})=\SET{r}$.

In Figure~\ref{fig:PathReducedMaze} (left) an image representing a maze is shown; green pixels are the {\em exit} ones wheras the blue ones represent possible starting points; walls are represented by black pixels. In Figure~\ref{fig:PathReducedMaze} (right) the minimal model via \pth-bisimilarity is shown; it actually coincides with the one we would have obtained 
using $\apeq$ instead. In practical terms, some important features of the image of the maze are lost in its \pth-bisimilarity minimisation, such as the fact that some starting points cannot reach the exit, unless passing through walls, which should not happen! This is due to the fact that
\pth-bisimilarity abstracts from the structure of the underlying paths. In Section~\ref{sec:CoPabisimilarity} we will address this issue explicitly and refine \pth-bisimilarity into a stronger one, namely \cop-bisimilarity.
\begin{figure}
\centerline{\resizebox{4in}{!}{\includegraphics{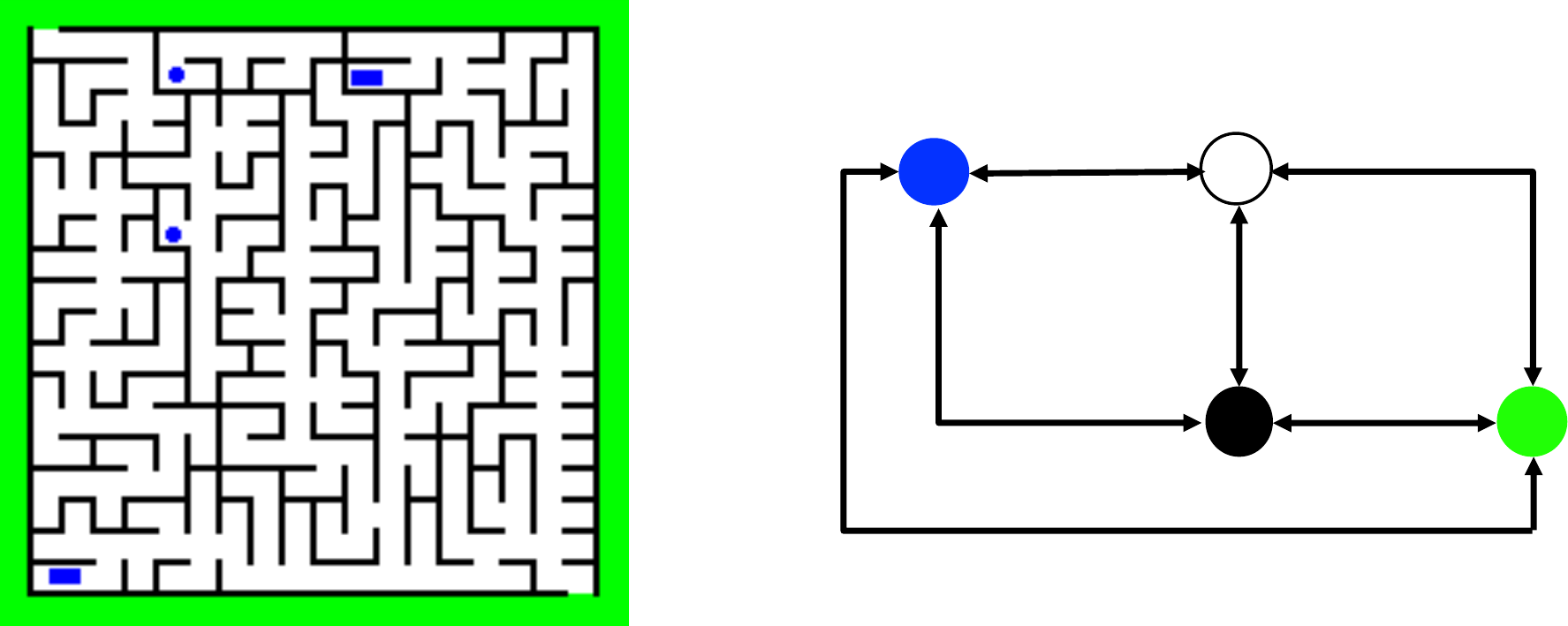}}}
\caption{\label{fig:PathReducedMaze} A maze and its reduced model modulo \pth-bisimilarity.}
\end{figure}
%

%
%
%
%
%

\begin{proposition}
\label{prp:CMCbisImplPath}
For all \qdcm{s} $\model = (X,\closure, \peval)$ and $x_1, x_2 \in X$ the following holds:
$x_1\, \cmcbis \, x_2$ implies $x_1\, \pthbis \, x_2$.
\end{proposition}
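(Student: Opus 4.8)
The plan is to reduce to \cl-bisimilarity and then exploit the one-step characterisation of quasi-discrete paths given by item~\ref{path} of Proposition~\ref{prp:FT}. By Corollary~\ref{cor:ClbisEqCMCbis}, $\cmcbis^{\model}$ coincides with $\clbis^{\model}$, so it suffices to prove that every \cl-bisimulation $B$ over $X$ is a \pth-bisimulation over $X$ for the quasi-discrete index space $\calJ=(\nats,\closure_{\succ})$ (the one relevant for \qdcm{s}); then the pair witnessing $x_1\,\clbis^{\model}\,x_2$ also witnesses $x_1\,\pthbis^{\model}\,x_2$. Condition~1 of Definition~\ref{def:Pathbisimilarity} is literally Condition~1 of Definition~\ref{def:Clbisimilarity}, so only the four path conditions remain to be checked.

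For Condition~2, fix $(x_1,x_2)\in B$ and $\pi_1\in\bpthsF_{\calJ,\model}(x_1)$ with $\ell=\pthlen(\pi_1)$. I would define $\pi_2$ by recursion on $i\le\ell$: put $\pi_2(0)=x_2$; given $\pi_2(i-1)$ with $(\pi_1(i-1),\pi_2(i-1))\in B$, item~\ref{path} of Proposition~\ref{prp:FT} gives $\pi_1(i)\in\closureF(\SET{\pi_1(i-1)})$, and Condition~2 of \cl-bisimulation then yields some $\pi_2(i)\in\closureF(\SET{\pi_2(i-1)})$ with $(\pi_1(i),\pi_2(i))\in B$; finally extend $\pi_2$ constantly for $i>\ell$. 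That $\pi_2$ is a path follows again from item~\ref{path} of Proposition~\ref{prp:FT}: for $1\le i\le\ell$ the condition $\pi_2(i)\in\closureF(\SET{\pi_2(i-1)})$ holds by construction and its companion $\pi_2(i-1)\in\closureT(\SET{\pi_2(i)})$ holds by item~2 of Proposition~\ref{prp:FT}, while for $i>\ell$ both hold trivially since $x\in\closureF(\SET x)\cap\closureT(\SET x)$ always. Hence $\pi_2\in\bpthsF_{\calJ,\model}(x_2)$, it is bounded with $\ell$ a stabilising index, and $(\pi_1(\ell),\pi_2(\ell))\in B$; since both $\pi_1$ and $\pi_2$ are constant from $\ell$ onward, $(\pi_1(\pthlen(\pi_1)),\pi_2(\pthlen(\pi_2)))\in B$, as required. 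Condition~3 is the mirror image, with the roles of $x_1,x_2$ swapped and Condition~3 of \cl-bisimulation used.

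Conditions~4 and~5 are obtained by running the same construction backwards: given $\pi_1\in\bpthsT_{\calJ,\model}(x_1)$ with $\ell=\pthlen(\pi_1)$, so $\pi_1(\ell)=x_1$, set $\pi_2(\ell)=x_2$ and, for decreasing $i$, use $\pi_1(i-1)\in\closureT(\SET{\pi_1(i)})$ (item~\ref{path} of Proposition~\ref{prp:FT}) together with Condition~4 of \cl-bisimulation to obtain $\pi_2(i-1)\in\closureT(\SET{\pi_2(i)})$ with $(\pi_1(i-1),\pi_2(i-1))\in B$; item~2 of Proposition~\ref{prp:FT} again supplies the companion $\closureF$-condition, and extending $\pi_2$ constantly past $\ell$ gives $\pi_2\in\bpthsT_{\calJ,\model}(x_2)$ with $(\pi_1(0),\pi_2(0))\in B$. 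Condition~5 is symmetric, via Condition~5 of \cl-bisimulation. The only genuinely non-mechanical point is the verification that the recursively built $\pi_2$ is an honest path rather than merely a $B$-related sequence of points; this is precisely where both the forward and the converse one-step clauses of \cl-bisimulation (Conditions~2--3 and Conditions~4--5) are needed, in conjunction with the singleton duality of item~2 of Proposition~\ref{prp:FT}. A minor bookkeeping subtlety, namely that $\pthlen$ need not return the least stabilising index, is dispatched throughout by arguing on a common bound $\ell$ and using that $\pi_1,\pi_2$ are eventually constant.
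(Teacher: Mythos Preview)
Your proposal is correct and follows essentially the same route as the paper's proof: build the matching path $\pi_2$ step by step using the one-step characterisation of quasi-discrete paths in Proposition~\ref{prp:FT}(\ref{path}) together with the \cl-bisimulation transfer conditions. The paper's proof is slightly terser---it states it is showing that every \cmc-bisimulation is a \pth-bisimulation but then silently uses the \cl-bisimulation clauses---whereas you make the reduction via Corollary~\ref{cor:ClbisEqCMCbis} explicit and spell out the continuity check for $\pi_2$; these are presentational rather than substantive differences.
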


The converse of  \propo{} \ref{prp:CMCbisImplPath} does not hold, as shown in Figure~\ref{fig:PathNoImplCMCbis}
\begin{figure}[t]
\centerline{\resizebox{1.1in}{!}{\includegraphics{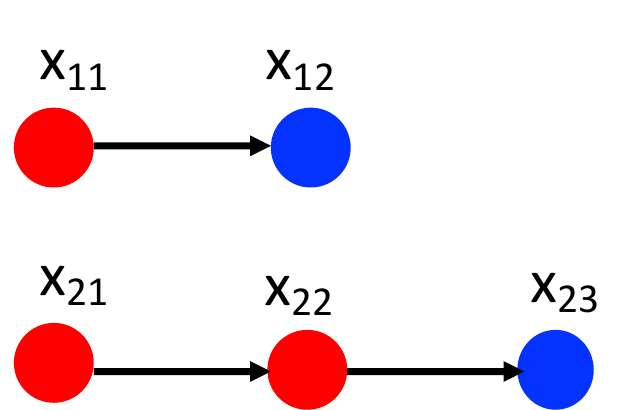}}}
\caption{\label{fig:PathNoImplCMCbis} $x_{11}  \pthbis x_{21}$ but 
$x_{11}  \not\cmcbis x_{21}$.}
\end{figure}
where 
$
\invpeval(x_{11})=\invpeval(x_{21})=\invpeval(x_{22})=\SET{r}\not=\SET{b}=\invpeval(x_{12})=\invpeval(x_{23})
$
and $x_{11}  \pthbis x_{21}$ but $x_{11}  \not\cmcbis x_{21}$ (see Remark~\ref{rem:prp:CMCbisImplPath} of Appendix~\ref{apx:sec:Pathbisimilarity}).

\begin{remark}\label{rem:CMbisNoImplPath}
It is worth pointing out that the analogous of \propo{}~\ref{prp:CMCbisImplPath} for general \cm{s} does not hold. In fact there are models with points that are \cm-bisimilar
but not \pth-bisimilar, as shown in Figure~\ref{fig:CMbisNoImplPath} where an Euclidean model $\model=(X,\closure,\peval)$ is shown such that $X=(-\infty,0) \cup (0,\infty)$,
$\closure$ is the standard closure operator for the real line $\reals$, $A,B$ and $C$ are non-empty intervals with  $B \subset A \subset (-\infty,0)$, and $C \subset   (0,+\infty)$,
$\peval(g)=A \cup C$ and  $\peval(r)=\SET{k}$, with $k\in A\setminus B$. In such a model,
$x_1 \cmbis x_2$ for all $(x_1,x_2) \in B\times C$. 
In fact $B \times C$ is a \cm-bisimulation, as shown in the sequel. Take any 
$(x_1,x_2) \in B\times C$;
clearly $\invpeval(x_1)=\invpeval(x_2)$ by construction; 
let $S_1 \subseteq (-\infty,0)$ be any set such that $x_1 \in \interior(S_1)$; then, for what concerns Condition 2 of Definition~\ref{def:CMbisimilarity}, take 
$S_2=C=\interior(C)$; for each $s_2 \in  S_2$ there is 
$s_1 \in  S_1 \cap B$ such that  
$(s_1,s_2)\in B\times C$, by definition of $B\times C$;
let finally $S_2 \subseteq (0,+\infty)$ be any set such that $x_2 \in \interior(S_2)$ then, for what concerns Condition 3 of Definition~\ref{def:CMbisimilarity}, take $S_1=B=\interior(B)$:  for each $s_1 \in  S_1$ there is $s_2 \in  S_2 \cap C$ such that  $(s_1,s_2)\in B\times C$, by definition of $B\times C$.
On the other hand, $x_1 \not\pthbis x_2$, since there cannot be any \pth-bisimulation for $x_1$ and $x_2$ as above. This is because $x_1 \in A$, since
$B \subset A$, and 
$x_1 \leadspth{} k$ with 
$r \in \invpeval(k)$ whereas  $r\not\in \invpeval(x)$ for all $x\in (0,+\infty)$.
\end{remark}

\begin{figure}
\centerline{\resizebox{4in}{!}{\includegraphics{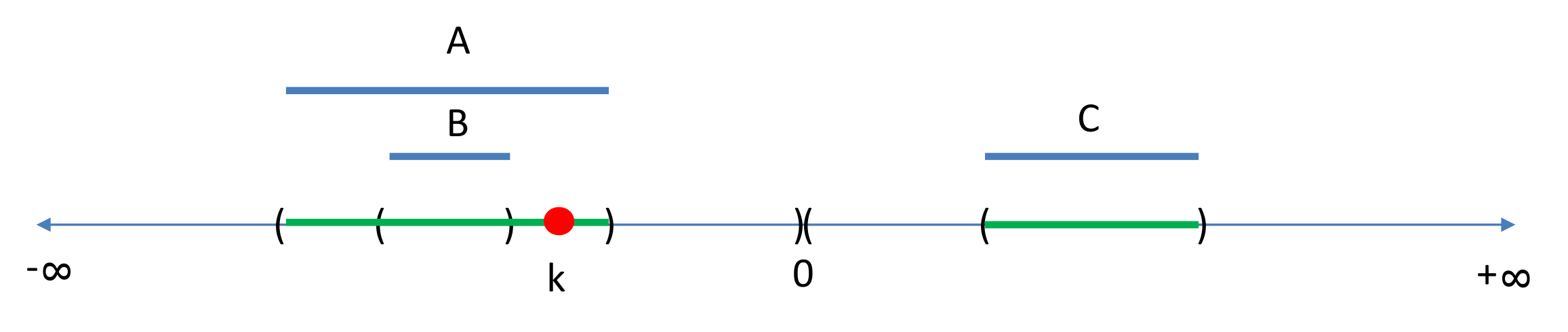}}}
\caption{\label{fig:CMbisNoImplPath} For all $x_1$ and $x_2$ such that
$(x_1,x_2)\in B \times C$ we have $x_1 \cmbis x_2$ but  $x_1 \not\pthbis x_2$.}
\end{figure}

Downstream of Remark~\ref{rem:CMbisNoImplPath} we can strengthen Definition~\ref{def:CMbisimilarity}  so that we get an adaptation for \cm{s} of the notion of 
\inl-bisimilarity proposed in~\cite{vB+17} for general neighbourhood models:

\begin{definition}[\inl-bisimilarity for \cm{s}]\label{def:INLbisimilarity}
Given CM $\model=(X,\closure, \peval)$, a non empty relation 
$B \subseteq X \times X$ is a {\em \inl-bisimulation  over $X$} if, whenever $(x_1,x_2) \in B$, the following holds:
\begin{enumerate}
\item $\invpeval(x_1) = \invpeval(x_2)$;
\item for all neighbourhoods $S_1$ of $x_1$  there is a neighbourhood $S_2$ of $x_2$ such that:
\begin{enumerate}
\item\label{back} for all $s_2 \in S_2$, there is $s_1 \in S_1$ with $(s_1,s_2)\in B$;
\item\label{forw} for all $s_1 \in S_1$, there is $s_2 \in S_2$ with $(s_1,s_2)\in B$;
\end{enumerate}
\item for all neighbourhoods $S_2$ of $x_2$ there is a neighbourhood $S_1$ of $x_1$ such that:
\begin{enumerate}
\item for all $s_1 \in S_1$, there is $s_2 \in S_2$ with $(s_1,s_2)\in B$;
\item for all $s_2 \in S_2$, there is $s_1 \in S_1$ with $(s_1,s_2)\in B$.
\end{enumerate}
\end{enumerate}
$x_1$ and $x_2$ are {\em \inl-bisimilar}, written $x_1\,\inlbis^{\model}\, x_2$,  if and only if there is a \inl-bisimulation $B$ over $X$  such that $(x_1,x_2) \in B$. \closedefi
\end{definition}

We can now prove the following

\begin{proposition}
\label{prp:MppImplInlImplpath}
For all {\em path-connected} \cm{s} $\model=(X,\closure,\peval)$ and $x_1,x_2 \in X$
the following holds: $x_1 \inlbis x_2$ implies $x_1 \pthbis x_2$.
\end{proposition}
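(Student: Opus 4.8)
The plan is to show that any \inl-bisimulation $B$ over a path-connected \cm{} $\model$ is itself a \pth-bisimulation over $\model$; the statement then follows immediately, since $x_1 \inlbis x_2$ means $(x_1,x_2)$ lies in some \inl-bisimulation $B$, which is then a \pth-bisimulation witnessing $x_1 \pthbis x_2$. So fix an \inl-bisimulation $B$ and a pair $(x_1,x_2)\in B$. Condition~1 of Definition~\ref{def:Pathbisimilarity} is immediate, as it is literally Condition~1 of Definition~\ref{def:INLbisimilarity}. By symmetry (of the roles of $x_1,x_2$ and of the forward/backward path clauses) it suffices to verify Condition~2: given $\pi_1\in\bpthsF_{\calJ,\model}(x_1)$, we must produce $\pi_2\in\bpthsF_{\calJ,\model}(x_2)$ with $(\pi_1(\pthlen(\pi_1)),\pi_2(\pthlen(\pi_2)))\in B$.

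The key observation is that for a path $\pi$ of length $\ell$, the endpoint $\pi(\ell)$ always belongs to $\closure(\{\pi(0)\})$: indeed $\pi(\ell)\in\pi(\dom(\pi))\subseteq\closure^{\calJ}$-images, and by continuity one gets $\pi(\closure^{\calJ}(\{0\}))\subseteq\closure(\{\pi(0)\})$; more simply, since the index space is connected and $\ell$ is reachable from $0$, an easy induction/connectedness argument gives $\pi(\dom(\pi))\subseteq\closure(\{\pi(0)\})$. Hence every point reachable from $x_1$ by a bounded path lies in some neighbourhood-successor of $x_1$ — but actually the cleanest route is the other direction: I claim that whenever $x' \in \closure(\{x\})$ (equivalently: every neighbourhood of $x$... no — rather, whenever $x\in\closure(\{x'\})$), $B$ relates $x'$ to a point ``near'' $x_2$. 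Concretely, the crucial step is a \textbf{one-step transfer lemma}: if $(x_1,x_2)\in B$ and $x_1'$ is such that $x_1$ is in every neighbourhood... I will instead argue directly that $B$ satisfies: for all $x_1'\in X$ with $x_1'$ reachable from $x_1$ in one ``closure step'', there is $x_2'$ reachable from $x_2$ with $(x_1',x_2')\in B$ — and then chain this along the path.

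Here is the heart of it. First, reduce to \emph{quasi-discrete-like} one-step reasoning along a path: by Proposition~\ref{prp:FT}(\ref{path}) the relevant paths over general \cm{s} need not be so simple, so instead I use neighbourhoods directly. Given $\pi_1$ with $\pthlen(\pi_1)=\ell$, set $z_1 := \pi_1(\ell)$. Since $z_1 \in \closure(\{x_1\})$ (by the connectedness argument above), \emph{every} neighbourhood $S_1$ of $x_1$ must contain $z_1$? — no, that is false. The correct fact is: $x_1$ belongs to every neighbourhood of itself, and if $z_1\in\closure(\{x_1\})$ then $z_1$ lies in the closure, i.e. $x_1 \in \interior(S_1)$ forces $z_1 \in S_1$ only when $S_1$ is open-like; in a general \cm{} this needn't hold. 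So the genuine route is: consider the neighbourhood $S_1 = \overline{\closure(\{x_1\})\setminus\{z_1\}}$... this is getting delicate. \textbf{The main obstacle}, and where I would focus the real work, is precisely extracting from the two \inl-clauses (2a)+(2b) the statement ``for each $z_1\in\closure(\{x_1\})$ there is $z_2\in\closure(\{x_2\})$ with $(z_1,z_2)\in B$, and vice versa'' — i.e. that an \inl-bisimulation restricts to a \cl-style one-step back-and-forth relation on singleton closures. Granting that, I proceed by transfinite induction along $\dom(\pi_1)$: I build $\pi_2$ pointwise so that $(\pi_1(\iota),\pi_2(\iota))\in B$ for all $\iota\le\ell$, using the one-step property at successor-like indices and using path-connectedness of $\model$ to ``fill in'' between the chosen points and to guarantee the constructed $\pi_2$ is a genuine bounded path of the same index space $\calJ$; at the top index $\ell$ this yields $\pi_2(\ell)$ with $(z_1,\pi_2(\ell))\in B$, as required. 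Path-connectedness is exactly what lets us realise the inductively chosen sequence of $B$-related points as an actual path $\pi_2 \in \bpthsF_{\calJ,\model}(x_2)$; without it the chosen points might not be joinable. I would double-check the base and limit steps of the induction and the precise way $\calJ$'s order and closure are used to concatenate the path segments — that bookkeeping, together with the one-step extraction lemma, is the only substantive content.
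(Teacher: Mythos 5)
There is a genuine gap: the step you yourself flag as ``the main obstacle'' is exactly the step the proof needs, and the route you sketch for it would not work. The paper's argument is far more direct and does not involve any one-step or pointwise construction at all. Given $\pi_1\in\bpthsF(x_1)$ with endpoint $x_1'=\pi_1(\pthlen(\pi_1))$, take as $S_1$ simply the whole carrier $X$ (or any neighbourhood of $x_1$ containing $\rng(\pi_1)$); this is a legitimate neighbourhood of $x_1$ since $X=\interior(X)$. The extra ``forth'' half of clause~2 of Definition~\ref{def:INLbisimilarity} --- the half that is absent from plain \cm-bisimulation --- then yields, for the particular point $x_1'\in S_1$, some $x_2'$ in some neighbourhood $S_2$ of $x_2$ with $(x_1',x_2')\in B$. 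Where $x_2'$ sits is irrelevant: path-connectedness is invoked exactly once, to produce a path $\pi_2$ with $x_2\leadspth{\pi_2}x_2'$, and Condition~2 of Definition~\ref{def:Pathbisimilarity} only constrains the endpoints. That is the whole proof.

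By contrast, your plan hinges on two claims that fail. First, the ``one-step transfer lemma'' (for each $z_1\in\closure(\SET{x_1})$ there is $z_2\in\closure(\SET{x_2})$ with $(z_1,z_2)\in B$) is not derivable from the \inl{} clauses: membership of $z_1$ in $\closure(\SET{x_1})$ is a condition on neighbourhoods of $z_1$, not of $x_1$ (indeed $z_1\in\closure(\SET{x_1})$ iff $\overline{\SET{x_1}}$ is not a neighbourhood of $z_1$), so the transfer conditions at $(x_1,x_2)$ give no grip on it; this would amount to deriving a \cl-style clause from a neighbourhood-style bisimulation, which is precisely what does not hold in general. Second, even granting it, your construction of $\pi_2$ with $(\pi_1(\iota),\pi_2(\iota))\in B$ for \emph{all} $\iota$ proves far more than Path-bisimulation requires and relies on ``successor-like indices'', which do not exist for dense index spaces such as the Euclidean one --- and the proposition is stated for general (e.g.\ Euclidean) \cm{s}, which is exactly the setting of Remark~\ref{rem:CMbisNoImplPath} that motivates it. (Your preliminary claim that $\pi(\ell)\in\closure(\SET{\pi(0)})$ is also false, as you half-noticed: already for a quasi-discrete path of length $2$ the endpoint can lie outside the closure of the starting point.) The correct role of path-connectedness is not to ``fill in between inductively chosen points'' but simply to connect $x_2$ to the single matching endpoint produced by the forth clause.
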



The following proposition shows that $\pthbis$ and $\treq$ uncomparable:

\begin{proposition}
\label{prp:PathUncTraceEq}
There exist \cm{s} $\model$ and points $x_1, x_2 \in\model$ such that
$x_{1}  \pthbis x_{2}$ and $x_{1}  \not\treq x_{2}$; similarly,
there are \cm{s} $\model$ and points $x_1, x_2 \in\model$ such that
$x_{1}  \not\pthbis x_{2}$ and $x_{1}  \treq x_{2}$.
\end{proposition}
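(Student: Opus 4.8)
The plan is to exhibit two small witnessing models, one for each direction of incomparability. For the direction ``$\pthbis$ but not $\treq$'', I would reuse the model already in the paper. Observe that trace equivalence records, for each bounded forward path out of a point, the full sequence of atomic-proposition labels along that path, whereas $\pthbis$ only constrains the end-points (and, symmetrically, start-points of backward paths). So the first step is to pick a model in which two points $x_1,x_2$ can reach (via bounded paths) exactly the same set of end-points-up-to-$B$, yet along the way traverse label-sequences that differ. A natural candidate: take $x_1$ with a one-step path to a point labelled $\{b\}$ and $x_2$ with a two-step path whose intermediate point is labelled $\{c\}$ and whose endpoint is labelled $\{b\}$, with the rest of the structure arranged symmetrically so that the relation pairing ``$r$-points with $r$-points and $b$-points with $b$-points'' is a \pth-bisimulation. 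Then $x_1\pthbis x_2$ but $\T(\bpthsF(x_1))\neq\T(\bpthsF(x_2))$ because one trace set contains a length-$2$ trace visiting label $\{c\}$ and the other does not. I would spell out that the candidate relation satisfies all five clauses of Definition~\ref{def:Pathbisimilarity} (the endpoint conditions are immediate once the reachable endpoints match up to label, and the back-conditions are symmetric), and that the traces differ by inspecting a single path.

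For the converse direction, ``$\treq$ but not $\pthbis$'', the idea is that $\treq$ is blind to the \emph{branching} structure of how labels are reached: it only sees sets of label-sequences, not the relational accessibility that a \pth-bisimulation must respect at every step. Concretely I would take $x_1$ and $x_2$ with the same label and the same set of forward traces and backward traces, but arranged so that no \pth-bisimulation can relate them — for instance because $x_1$ lies on a path leading to some distinctively-labelled point $k$ that $x_2$ cannot reach while staying trace-equivalent at the endpoint, or more simply because the ``one-step'' reachable labels from $x_1$ and $x_2$ coincide as \emph{sets of whole traces} but a \pth-bisimulation would force a pairing of intermediate points with incompatible future behaviour. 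The cleanest such example is one where the two points have identical trace sets but differ in which intermediate points are mutually reachable (forward vs. backward), so that clauses~4--5 of Definition~\ref{def:Pathbisimilarity} fail for every candidate $B$. I would then verify $\T(\bpthsF(x_1))=\T(\bpthsF(x_2))$ and $\T(\bpthsT(x_1))=\T(\bpthsT(x_2))$ by enumerating the (finitely many, up to the stabilisation of bounded paths) traces on both sides, and argue non-\pth-bisimilarity by contradiction: any $B$ containing $(x_1,x_2)$ would, by clause~2, have to relate some endpoint reachable only from $x_1$ to one reachable from $x_2$, contradicting clause~1 ($\apeq$) at that pair.

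The main obstacle I anticipate is the second direction: designing a model where the trace sets genuinely coincide is delicate, because trace sets are quite rich (they contain all prefixes-that-arise and all lengths, since bounded paths stabilise), so a naive example tends to accidentally be \pth-bisimilar as well. The key is to exploit that $\treq$ is a purely ``linear-time'' notion while $\pthbis$ retains a mild ``branching'' discipline through the interplay of the forward clauses~2--3 and backward clauses~4--5 at a shared intermediate point: two points may have the same forward and backward trace languages yet no single relation can witness both simultaneously. I expect a four- or five-point model (analogous in spirit to the one in Figure~\ref{fig:TraceEqNoImplCMCbis}, which already separates $\treq$ from $\cmcbis$) to suffice, and indeed reusing or lightly modifying that figure's model is the most economical route — one would only need to additionally check, for the ``$\treq$ but not $\pthbis$'' half, that the relevant points fail to be \pth-bisimilar, which by Proposition~\ref{prp:CMCbisImplPath} is a priori weaker than the failure of $\cmcbis$ already established there and therefore needs its own short argument.
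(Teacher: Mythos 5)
Your overall strategy --- two small witnessing models, reusing figures already in the paper --- is exactly the paper's route: the paper takes the model of Figure~\ref{fig:PathNoImplCMCbis} for the first half (where $x_{11}\pthbis x_{21}$ was already established, and $\SET{r}\cdot\SET{b}^{\omega}$ lies in $\T(\bpthsF(x_{11}))\setminus\T(\bpthsF(x_{21}))$) and the model of Figure~\ref{fig:TraceEqNoImplCMCbis} for the second half. However, your concrete candidate for the first direction does not work. If the intermediate point on the two-step path from $x_2$ carries a fresh label $\SET{c}$ that no point reachable from $x_1$ carries, then $x_1\not\pthbis x_2$: the bounded forward path from $x_2$ ending at that intermediate point forces, by clause~3 of Definition~\ref{def:Pathbisimilarity}, some endpoint reachable from $x_1$ to be $B$-related to a $\SET{c}$-labelled point, which clause~1 then forbids. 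So the very feature you use to separate the trace sets (a trace visiting $\SET{c}$ on one side only) destroys the \pth-bisimilarity you need on the other side. The paper's model avoids this by giving the intermediate point the same label $\SET{r}$ as the start points, so that it can be paired with $x_{11}$ itself; the trace difference then comes from the \emph{position} at which $b$ first occurs ($x_{11}$ reaches a $b$-point in one step, $x_{21}$ cannot), not from a fresh label.

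The second half is essentially right: Figure~\ref{fig:TraceEqNoImplCMCbis} is the intended witness, and you correctly observe that $y_{11}\not\pthbis y_{21}$ needs its own short argument beyond the already-established $y_{11}\not\cmcbis y_{21}$. But the mechanism you cite at the end (``contradicting clause~1 at that pair'') is not the one that bites in that model: every endpoint reachable from $y_{21}$ has a label that also occurs among endpoints reachable from $y_{11}$, so clause~1 can always be satisfied at the first level. The actual obstruction sits one level deeper: the dead-end $r$-labelled point $y_{24}$ reachable from $y_{21}$ must be paired with some $r$-labelled point reachable from $y_{11}$, but every such point can still reach a $b$-labelled point while $y_{24}$ cannot, so clause~2 fails at that pair. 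You would need to state this explicitly; as written, the first half of your argument is broken and the second half invokes the wrong clause.
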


As an example of the first case, let us consider again the model of Figure~\ref{fig:PathNoImplCMCbis}: we have already seen that
$x_{11}  \pthbis x_{21}$; but $x_{11}  \not\treq x_{21}$ since
$\SET{r}\!\!\cdot\!\!\SET{b}^{\omega} \in \T (\bpthsF(x_{11})) \setminus \T (\bpthsF(x_{21}))$. As for the second case, let us consider again the model of
Figure~\ref{fig:TraceEqNoImplCMCbis}: we have already seen that $y_{11}  \treq y_{21}$
and that $y_{11}  \not\pthbis y_{21}$.

\subsection{Logical Characterisation of \pth-bisimilarity}
In this section we show that a sub-logic of \islcs{} fully characterises \pth-bisimilarity.
We first define the Infinitary Reachability Logic, \irl{} for short and show that \irl{} is a sub-logic of \islcs{} obtained by forcing the second argument of $\ltothru$ and $\lfromthru$ to $\ltrue$. Then we provide the characterisation result.

\begin{definition}[Infinitary Reachability Logic - \irl{}]\label{def:Irl}
For index set $I$ and $p \in  \ap$ the abstract language of \irl{} is defined as follows:
$$
\form ::= p \; \sep \; \lneg \form \; \sep \; \liand_{i \in I} \form_i \; \sep \; \lto \form \; \sep \; \lfrom \form.
$$

The satisfaction relation for all \cm{s}  $\model$, $x \in \model$, and \irl{} formulas $\form$ is defined recursively on the structure of $\form$ as follows:
$$
\begin{array}{r c l c l c l l}
\model,x & \models_{\irl}  & p & \Leftrightarrow & x  \in \peval(p);\\
\model,x & \models_{\irl}  & \lneg \,\form & \Leftrightarrow & \model,x  \models_{\irl} \form \mbox{ does not hold};\\
\model,x & \models_{\irl}  & \liand_{i\in I} \form_i  & \Leftrightarrow &
\model,x  \models_{\irl} \form_i \mbox{ for all } i \in I;\\
\model,x  & \models_{\irl} & \lto \form& \Leftrightarrow &
\mbox{there exist  path } \pi \mbox{ and index } \ell \mbox{ such that }\\
&&&&\mbox{\hspace{0.1in}} \pi(0) = x \mbox{ and }\pi(\ell) \models_{\irl} \form;\\
\model,x  & \models_{\irl} & \lfrom \form& \Leftrightarrow &
\mbox{there exist  path } \pi \mbox{ and index } \ell \mbox{ such that }\\
&&&&\mbox{\hspace{0.1in}} \pi(\ell) = x \mbox{ and } \pi(0) \models_{\irl} \form.\\
\end{array}
$$
\closedefi
\end{definition}

The following proposition trivially follows from the relevant definitions:

\begin{proposition}\label{prp:ToFromToThrouETC}
For all \cm{s} $\model = (X,\closure, \peval)$ and $x_1, x_2 \in X$ the following holds:\\
$\lto \form \equiv \ltothru \form [\ltrue]$ and
$\lfrom \form \equiv \lfromthru \form [\ltrue]$.\qed
\end{proposition}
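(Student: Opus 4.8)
The statement to prove is Proposition~\ref{prp:ToFromToThrouETC}, asserting the semantic equivalences $\lto \form \equiv \ltothru \form [\ltrue]$ and $\lfrom \form \equiv \lfromthru \form [\ltrue]$ over all \cm{s}.

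The plan is to unfold the satisfaction clauses on both sides and observe that they differ only by the condition ``$\pi(j) \models \form_2$ for all $j$ with $0 < j < \ell$'', which, when $\form_2 = \ltrue$, is trivially satisfied by every point of $X$. First I would fix an arbitrary \cm{} $\model = (X,\closure,\peval)$ and a point $x \in X$, and expand $\model, x \models_{\islcs} \ltothru \form [\ltrue]$ according to Definition~\ref{def:Islcs}: there exist a path $\pi$ and index $\ell$ with $\pi(0) = x$, $\pi(\ell) \models \form$, and $\pi(j) \models \ltrue$ for all $0 < j < \ell$. The last conjunct holds vacuously because $\sem{\ltrue}^{\model} = X$, so the whole condition reduces to: there exist $\pi$ and $\ell$ with $\pi(0) = x$ and $\pi(\ell) \models \form$. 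This is verbatim the clause defining $\model, x \models_{\irl} \lto \form$ in Definition~\ref{def:Irl}. The symmetric argument for $\lfromthru$ versus $\lfrom$ swaps the roles of $\pi(0)$ and $\pi(\ell)$ but is otherwise identical. A small bookkeeping point: one must be slightly careful that $\sem{\cdot}^{\model}$ is being computed in the two different logics \irl{} and \islcs; strictly, the equivalence is an assertion at the level of the satisfaction relations, proved by structural induction on $\form$, with the $\lto/\lfrom$ cases handled as above and the base case $p$ and the Boolean/infinitary-conjunction cases being immediate since the clauses coincide syntactically.

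I do not expect any genuine obstacle here; the proposition is labelled as trivially following from the definitions, and the only thing to get right is making the vacuous-quantifier observation explicit and, if one wants full rigour, phrasing it as an induction so that the semantic brackets on the two sides refer to provably equal sets of points. Accordingly I would keep the proof to a couple of lines, remarking that it is a routine induction whose only non-trivial step is the collapse of the ``$0<j<\ell$'' guard under $\form_2 = \ltrue$.
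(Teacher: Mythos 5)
Your proposal is correct and matches the paper, which gives no explicit proof and simply states that the proposition ``trivially follows from the relevant definitions'': the whole content is the vacuous collapse of the guard ``$\pi(j)\models\form_2$ for $0<j<\ell$'' when $\form_2=\ltrue$, plus the routine structural induction you mention to align the satisfaction relations of \irl{} and \islcs{.} Nothing further is needed.
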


\begin{definition}[\irl-Equivalence]\label{def:IrlEq}
Given \cm{} $\model = (X,\closure,\peval)$, the equivalence relation $\irleq^{\model} \, \subseteq \, X \times X$ is defined as: $x_1 \irleq^{\model} x_2$ if and only if for all \irl{} formulas $\form$ the following holds: $\model, x_1 \models_{\irl} \form$ if and only if $\model, x_2 \models_{\irl} \form$.
\closedefi
\end{definition}

\begin{theorem}\label{thm:PathbisIsIrleq}
For all \cm{s} $\model=(X,\closure,\peval)$, any
\pth-Bisimulation   $B$ over $X$ is included in the equivalence $\irleq^{\model}$.
\closethm
\end{theorem}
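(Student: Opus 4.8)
The plan is to show, by induction on the structure of \irl{} formulas $\form$, that any \pth-bisimulation $B$ preserves satisfaction: if $(x_1,x_2) \in B$ then $\model, x_1 \models_{\irl} \form$ iff $\model, x_2 \models_{\irl} \form$. Once this is established, it follows immediately that $B \subseteq \irleq^{\model}$, since $\irleq^{\model}$ is by definition the set of pairs of points agreeing on all \irl{} formulas. Note that \pth-bisimilarity is symmetric in the roles of $x_1$ and $x_2$ (Conditions 2/3 and 4/5 come in matched pairs), so in each inductive case it suffices to prove one direction of the biconditional.

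The base case $\form = p$ is exactly Condition 1 of Definition~\ref{def:Pathbisimilarity}: $(x_1,x_2)\in B$ gives $\invpeval(x_1)=\invpeval(x_2)$, hence $x_1 \in \peval(p)$ iff $x_2 \in \peval(p)$. The cases $\form = \lneg\form'$ and $\form = \liand_{i\in I}\form_i$ are routine: negation flips both sides of the induction hypothesis, and infinite conjunction is satisfied on one side iff each conjunct is, iff (by IH, applied at the same pair $(x_1,x_2)$) each conjunct is satisfied on the other side.

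The two interesting cases are $\lto\form'$ and $\lfrom\form'$. Suppose $\model,x_1 \models_{\irl} \lto\form'$; then there is a bounded path $\pi_1$ with $\pi_1(0)=x_1$ and an index $\ell$ with $\pi_1(\ell)\models_{\irl}\form'$. Since $\pi_1$ is bounded we may take $\ell$ to be its length, so that $\pi_1(\pthlen(\pi_1)) = \pi_1(\ell)$ and $\pi_1 \in \bpthsF_{\calJ,\model}(x_1)$ (here one uses that the definition of $\bpthsF$ fixes the index space $\calJ$; since \pth-bisimilarity is defined relative to a fixed $\calJ$, and the witnessing path in the semantics of $\lto$ ranges over paths with that same index space, the match is direct --- or, if one wants to be careful about which index spaces are in play, one observes that truncating $\pi_1$ at $\ell$ and then extending it constantly yields a bounded path in $\bpthsF_{\calJ,\model}(x_1)$ whose endpoint still satisfies $\form'$). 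By Condition 2 of Definition~\ref{def:Pathbisimilarity} there is $\pi_2 \in \bpthsF_{\calJ,\model}(x_2)$ with $(\pi_1(\pthlen(\pi_1)),\pi_2(\pthlen(\pi_2)))\in B$. Applying the induction hypothesis to this pair, since $\pi_1(\pthlen(\pi_1))\models_{\irl}\form'$ we get $\pi_2(\pthlen(\pi_2))\models_{\irl}\form'$; together with $\pi_2(0)=x_2$ this witnesses $\model,x_2\models_{\irl}\lto\form'$. The converse direction uses Condition 3 symmetrically, and the case $\lfrom\form'$ is entirely analogous using Conditions 4 and 5 together with $\bpthsT$ and the condition on $\pi(0)$.

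The main obstacle, and the only point requiring care, is the bookkeeping around the index $\ell$ versus the path length $\pthlen(\pi)$ and around which index space is used: the \irl{} semantics quantifies over a path and a separate index $\ell$, whereas \pth-bisimilarity talks about endpoints of bounded paths. Reconciling these --- by truncating the \irl{}-witness path at $\ell$ and re-extending it constantly to obtain a genuine bounded path with the right starting point and the right endpoint --- is the key technical step; everything else is a standard structural induction.
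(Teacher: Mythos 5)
Your proposal is correct and follows essentially the same route as the paper's proof: a structural induction on \irl{} formulas in which the base and Boolean cases are immediate, and the $\lto$/$\lfrom$ cases transfer the witness path via Conditions 2--5 of Definition~\ref{def:Pathbisimilarity} and then apply the induction hypothesis to the related pair of endpoints. Your explicit handling of the mismatch between the index $\ell$ in the semantics and $\pthlen(\pi)$ (truncating and constantly extending the witness path) is a detail the paper's proof passes over silently, but it is the same argument.
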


The  converse of Theorem~\ref{thm:PathbisIsIrleq} is given below.

\begin{theorem}\label{thm:IIrleqIsPathbis}
For all \cm{s}  $\model=(X,\closure,\peval)$, $\irleq^{\model}$ is a \pth-bisimulation.
\closethm
\end{theorem}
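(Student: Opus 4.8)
The plan is to verify directly that the equivalence $\irleq^{\model}$ satisfies the five clauses of Definition~\ref{def:Pathbisimilarity} (for the fixed index space $\calJ$). Non-emptiness is immediate, since $\irleq^{\model}$ is reflexive and $X$ is non-empty. Fix a pair $(x_1,x_2)\in\irleq^{\model}$. Clause~1 is routine: each $p\in\ap$ is an \irl{} formula and $x\models_{\irl}p$ iff $p\in\invpeval(x)$, so $x_1\irleq x_2$ forces $\invpeval(x_1)=\invpeval(x_2)$.

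The heart of the argument is a \emph{characteristic-formula} construction, available because \irl{} has infinitary conjunction and is closed under negation. For a point $z\in X$ put $\chi_z \;:=\; \liand_{\,\form\,:\,z\models_{\irl}\form}\form$ (up to logical equivalence only set-many conjuncts are needed to separate the points of the fixed model $\model$, so $\chi_z$ is a legitimate \irl{} formula; this is the standard set-versus-class remark). Then for every $y\in X$ one has $y\models_{\irl}\chi_z$ iff $y\irleq z$: the direction $(\Leftarrow)$ is trivial, and for $(\Rightarrow)$, if $z\not\models_{\irl}\psi$ for some $\psi$ then $z\models_{\irl}\lneg\psi$, so $\lneg\psi$ is a conjunct of $\chi_z$, whence $y\models_{\irl}\lneg\psi$ and $y\not\models_{\irl}\psi$; thus $y$ and $z$ satisfy exactly the same \irl{} formulas.

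Now I handle clause~2 (clause~3 is symmetric, exchanging the roles of $x_1$ and $x_2$). Let $\pi_1\in\bpthsF_{\calJ,\model}(x_1)$ and set $z_1:=\pi_1(\pthlen(\pi_1))$. Since $z_1\models_{\irl}\chi_{z_1}$ and $\pi_1$ is a bounded path from $x_1$ reaching $z_1$ at index $\pthlen(\pi_1)$, we get $x_1\models_{\irl}\lto\chi_{z_1}$ (using Proposition~\ref{prp:ToFromToThrouETC}/the semantics of $\lto$ in Definition~\ref{def:Irl}). By $x_1\irleq x_2$ also $x_2\models_{\irl}\lto\chi_{z_1}$, so there are a path $\pi$ and an index $\ell$ with $\pi(0)=x_2$ and $\pi(\ell)\models_{\irl}\chi_{z_1}$, hence $\pi(\ell)\irleq z_1$ by the claim above. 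Truncating $\pi$ at $\ell$ (replace $\pi$ by $\iota\mapsto\pi(\min(\iota,\ell))$) yields a \emph{bounded} path $\pi_2\in\bpthsF_{\calJ,\model}(x_2)$ with $\pi_2(\pthlen(\pi_2))=\pi(\ell)$, so $(\pi_1(\pthlen(\pi_1)),\pi_2(\pthlen(\pi_2)))=(z_1,\pi(\ell))\in\irleq$, as required. Clauses~4 and~5 are the mirror image: for $\pi_1\in\bpthsT_{\calJ,\model}(x_1)$ put $z_1:=\pi_1(0)$; the path $\pi_1$ witnesses $x_1\models_{\irl}\lfrom\chi_{z_1}$, this transfers to $x_2$, and truncation of the resulting witness path at its index $\ell$ produces $\pi_2\in\bpthsT_{\calJ,\model}(x_2)$ with $\pi_2(0)\irleq z_1$.

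The only non-formal point is the \emph{truncation step}: cutting a (a priori unbounded) path over an index space at an index $\ell$ must again be a \emph{continuous} function into $(X,\closure)$ of the required bounded form, and the paths quantified in the semantics of $\lto$/$\lfrom$ must be reconcilable with those in $\bpths_{\calJ,\model}$. I expect this to be the main obstacle, and it is handled by the auxiliary observations on paths (e.g.\ Proposition~\ref{prp:FT}\eqref{path} in the quasi-discrete case, and the standing conventions on index spaces); the logical core---characteristic formulas together with the exact match between the $\lto$/$\lfrom$ modalities and the forward/backward path conditions---is then immediate.
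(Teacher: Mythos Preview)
Your argument is correct and rests on the same logical core as the paper's: use infinitary conjunction to build a formula pinning down the $\irleq$-class of the endpoint of $\pi_1$, then transfer satisfaction of the corresponding $\lto$-formula from $x_1$ to $x_2$ to obtain the matching path. The differences are only in packaging. The paper argues by contradiction and, instead of a single characteristic formula $\chi_{z_1}$, picks for every bounded path $\pi\in\bpthsF(x_2)$ a separating formula $\Omega_\pi$ with $\bar\pi(\pthlen(\bar\pi))\models\Omega_\pi$ and $\pi(\pthlen(\pi))\not\models\Omega_\pi$, and then uses $\lto\bigl(\liand_{\pi\in\bpthsF(x_2)}\Omega_\pi\bigr)$ as the distinguishing formula. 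Your direct version with $\chi_{z_1}$ is a clean repackaging of the same idea; indeed the paper defines exactly such characteristic formulas (Definition~\ref{def:deltachi}) and uses them in the analogous proof for \cm-bisimilarity.

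As for the truncation issue you flag: note that the paper's proof relies on it just as much as yours does, since its contradiction step (``$x_2\not\models\lto(\ldots)$'') tacitly assumes that any witnessing path for $\lto$ can be cut at $\ell$ to produce an element of $\bpthsF(x_2)$. So this is not a defect of your approach relative to the paper's; it is a standing convention about paths that both arguments lean on.
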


\begin{corollary}\label{cor:PathbisEqIrleq}
For all \cm{s}  $\model=(X,\closure,\peval)$ we have that $\irleq^{\model}$ coincides with $\pthbis^{\model}$.
\qed 
\end{corollary}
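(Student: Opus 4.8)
This corollary follows immediately from the two preceding theorems, so the plan is to dispatch it in one line and then indicate how I would prove those theorems, where the actual content sits. For the inclusion $\pthbis^{\model}\subseteq\irleq^{\model}$: if $x_1\pthbis^{\model}x_2$ then $(x_1,x_2)$ belongs to some \pth-bisimulation $B$, and Theorem~\ref{thm:PathbisIsIrleq} gives $B\subseteq\irleq^{\model}$, hence $x_1\irleq^{\model}x_2$. For the inclusion $\irleq^{\model}\subseteq\pthbis^{\model}$: Theorem~\ref{thm:IIrleqIsPathbis} states that $\irleq^{\model}$ is itself a \pth-bisimulation, so any pair in it witnesses $\pthbis^{\model}$.

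To prove Theorem~\ref{thm:PathbisIsIrleq}, I would fix a \pth-bisimulation $B$ and show, by induction on the structure of \irl{} formulas $\form$, that $(x_1,x_2)\in B$ implies $\model,x_1\models_{\irl}\form\iff\model,x_2\models_{\irl}\form$. The atomic case is Clause~1 of Definition~\ref{def:Pathbisimilarity}; negation and infinitary conjunction are routine. For $\form=\lto\psi$: from $\model,x_1\models_{\irl}\lto\psi$ I would first extract a \emph{bounded} path $\pi_1\in\bpthsF(x_1)$ whose endpoint satisfies $\psi$, by truncating the witnessing path at the witnessing index (continuity of the truncated path is the one point needing attention, and follows for the quasi-discrete index space from Proposition~\ref{prp:FT}(\ref{path}), and for the Euclidean one by the usual argument). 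Then Clause~2 produces $\pi_2\in\bpthsF(x_2)$ with $(\pi_1(\pthlen(\pi_1)),\pi_2(\pthlen(\pi_2)))\in B$, and the induction hypothesis gives $\pi_2(\pthlen(\pi_2))\models_{\irl}\psi$, so $\model,x_2\models_{\irl}\lto\psi$. The converse implication uses Clause~3, and the case $\form=\lfrom\psi$ is entirely symmetric, via Clauses~4 and~5 and $\bpthsT$ in place of $\bpthsF$.

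To prove Theorem~\ref{thm:IIrleqIsPathbis}, I would check the five clauses of Definition~\ref{def:Pathbisimilarity} for the relation $\irleq^{\model}$. Clause~1 is witnessed by the atomic formulas $p\in\ap$. For Clause~2, assume $x_1\irleq^{\model}x_2$ and fix $\pi_1\in\bpthsF(x_1)$ with endpoint $y_1$; I must find $\pi_2\in\bpthsF(x_2)$ whose endpoint is $\irleq^{\model}$-related to $y_1$. The crux: for each point $z$ reachable from $x_2$ by a bounded path with $z\not\irleq^{\model}y_1$, pick (after negating if necessary) a formula $\form_{[z]}$ depending only on the $\irleq^{\model}$-class of $z$ such that $\model,y_1\models_{\irl}\form_{[z]}$ but $\model,z\not\models_{\irl}\form_{[z]}$, and set $\Psi:=\liand_{[z]}\form_{[z]}$, a legitimate \irl{} formula since the index ranges over a \emph{set} of $\irleq^{\model}$-classes. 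Then $\model,y_1\models_{\irl}\Psi$, whence $\model,x_1\models_{\irl}\lto\Psi$ (witnessed by $\pi_1$) and therefore $\model,x_2\models_{\irl}\lto\Psi$; truncating the witnessing path from $x_2$ yields the desired $\pi_2\in\bpthsF(x_2)$, whose endpoint $z$ satisfies $\Psi$ and hence $z\irleq^{\model}y_1$. Clause~3 is the mirror image, and Clauses~4 and~5 are the $\bpthsT$/$\lfrom$ analogues.

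I expect the main obstacle to lie in the bookkeeping for Theorem~\ref{thm:IIrleqIsPathbis}: one must make precise that only set-many $\irleq^{\model}$-classes are involved so that $\Psi$ is a well-formed infinitary formula, and one must first isolate the small ``truncation lemma'' stating that $\model,x\models_{\irl}\lto\form$ iff some bounded path from $x$ ends at a point satisfying $\form$ (and dually for $\lfrom$). Both are elementary once the class of admissible index spaces is fixed, but they are precisely where a careless argument would break.
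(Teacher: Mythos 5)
Your proposal is correct and follows essentially the same route as the paper: the corollary is immediate from Theorems~\ref{thm:PathbisIsIrleq} and~\ref{thm:IIrleqIsPathbis}, the former is proved by induction on \irl{} formulas using Clauses~2--5 of Definition~\ref{def:Pathbisimilarity} exactly as you describe (the paper also silently performs the truncation to a bounded path that you flag), and the latter is proved by building an infinitary conjunction of distinguishing formulas and placing it under $\lto$ (resp.\ $\lfrom$). The only cosmetic difference is that the paper argues by contradiction and indexes the conjunction over all paths in $\bpthsF(x_2)$ rather than over $\irleq$-classes of endpoints, which changes nothing of substance.
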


\section{\cop-bisimilarity}\label{sec:CoPabisimilarity}

\pth-bisimilarity is in some sense too weak, too abstract; nothing whatsoever is required of the relevant paths, except their starting points being fixed and related by the bisimulation, and  their end-points be in the bisimulation as well. A  deeper insight into the structure of such paths would be desirable as well as some, relatively high level, requirements over them. To that purpose we resort to a notion of ``compatibility'' between relevant paths that essentially requires each of them to be composed of a non-empty sequence of non-empty, adjacent ``zones''.
More precisely, both paths under consideration in a transfer condition should share the same structure, as follows (see Figure~\ref{fig:Zones}):
\begin{itemize}
\item both paths are composed by a sequence of (non-empty) ``zones'';
\item the number of zones should be the same in both paths, {\em but}
\item the length of ``corresponding'' zones can be different, {\em as well as}
 the length of the two paths;
\item {\em each} point in one zone of a path should be related by the bisimulation to {\em every} point in the corresponding zone of  the other path.
\end{itemize}
This notion of compatibility gives rise to {\em Compatible Path bisimulation}, \cop-bisimulation, defined below.
We note that the notion of \cop-bisimulation  turns out to be reminiscent of that of {\em Equivalence with respect to Stuttering} for transition systems proposed in~\cite{BCG88}, although in a totally different context and with a quite different definition: the latter is defined via a convergent sequence of relations and makes use of a different notion of path than the one of \cs{} used in this paper. Finally, \cite{BCG88} is focussed on CTL/CTL$^*$, which 
implies a flow of time with single past (i.e. trees), which is not the case for structures representing space.

\begin{figure}
\centerline{\resizebox{3in}{!}{\includegraphics{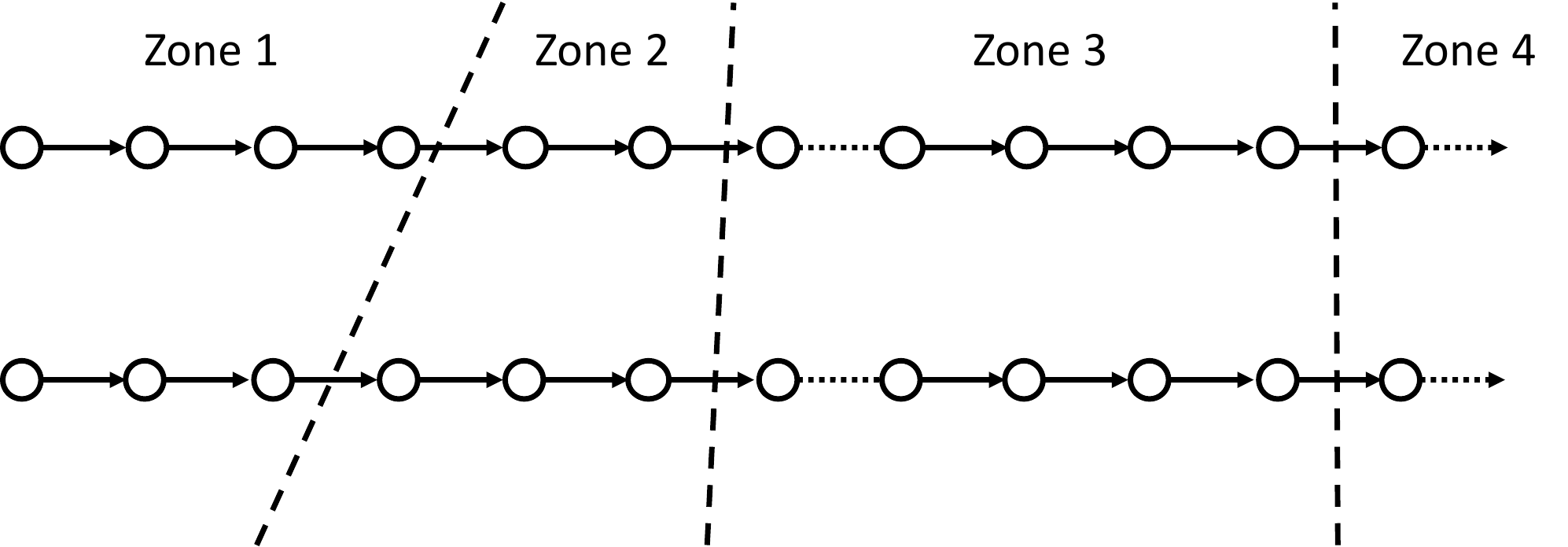}}}
\caption{\label{fig:Zones} Zones in relevant paths.}
\end{figure}

\subsection{\cop-bisimilarity}

\begin{definition}[\cop-bisimilarity]\label{def:CoPabisimilarity}
Given CM $\model=(X,\closure, \peval)$ and index space
$\calJ=(I,\closure^{\calJ})$, a non empty relation 
$B \subseteq X \times X$ is a {\em \cop-bisimulation  over $X$} if, whenever $(x_1,x_2) \in B$, the following holds:
\begin{enumerate}
\item $\invpeval(x_1) = \invpeval(x_2)$;
\item\label{zones} for all $\pi_1 \in \bpthsF_{\calJ,\model}(x_1)$ such that\\ 
$(\pi_1(i_1),x_2) \in B$ for all $i_1 \in \ZET{\iota}{0 \leq \iota < \pthlen(\pi_1)}$,\\
there is
$\pi_2 \in \bpthsF_{\calJ,\model}(x_2)$ such that the following holds:\\
$(x_1,\pi_2(i_2)) \in B$ for all $i_2\in \ZET{\iota}{0 \leq \iota < \pthlen(\pi_2)}$, and\\
$(\pi_1(\pthlen(\pi_1)), \pi_2(\pthlen(\pi_2))) \in B$;
\item for all $\pi_2 \in \bpthsF_{\calJ,\model}(x_2)$ such that\\
$(x_1,\pi_2(i_2)) \in B$ for all $i_2 \in \ZET{\iota}{0 \leq \iota < \pthlen(\pi_2)}$,\\
there is $\pi_1 \in \bpthsF_{\calJ,\model}(x_1)$ such that the following holds:\\
$(\pi_1(i_1),x_2) \in B$ for all $i_1\in \ZET{\iota}{0 \leq \iota < \pthlen(\pi_1)}$, and\\
$(\pi_1(\pthlen(\pi_1)),\pi_2(\pthlen(\pi_2))) \in B$;
\item for all $\pi_1 \in \bpthsT_{\calJ,\model}(x_1)$ such that \\
$(\pi_1(i_1),x_2) \in B$ for all $i_1 \in \ZET{\iota}{0 < \iota \leq \pthlen(\pi_1)}$,\\
there is $\pi_2 \in \bpthsT_{\calJ,\model}(x_2)$ such that the following holds:\\
$(x_1,\pi_2(i_2)) \in B$ for all $i_2\in \ZET{\iota}{0 < \iota \leq \pthlen(\pi_2)}$, and\\
$(\pi_1(0), \pi_2(0)) \in B$;
\item for all $\pi_2 \in \bpthsT_{\calJ,\model}(x_2)$ such that \\
$(x_1,\pi_2(i_2)) \in B$ for all $i_2 \in \ZET{\iota}{0 < \iota \leq \pthlen(\pi_2)}$,\\
there is $\pi_1 \in \bpthsT_{\calJ,\model}(x_1)$ such that the following holds:\\
$(\pi_1(i_1),x_2) \in B$ for all $i_1\in \ZET{\iota}{0 < \iota \leq \pthlen(\pi_1)}$, and\\
$(\pi_1(0),\pi_2(0)) \in B$;
\end{enumerate}
$x_1$ and $x_2$ are {\em \cop-bisimilar}, written $x_1\,\copbis^{\model}\, x_2$,  if there is a \cop-bisimulation $B$ over $X$  such that $(x_1,x_2) \in B$. \closedefi
\end{definition}

Figure~\ref{fig:CoPaReducedMaze} shows the minimal model modulo \cop-bisimilarity of the maze image shown in Figure~\ref{fig:PathReducedMaze}. It is easy to see that this reduced model retains more information than that of  Figure~\ref{fig:PathReducedMaze} (right). In particular, in this model three different representatives of white points are present: 
\begin{itemize}
\item one that is directly connected both with a representative of a blue starting point and with a representative of a green exit point; this represents the situation in which from a blue starting point the exit can be reached walking through the maze (i.e. white points);
\item one that is directly connected with  a representative of a green point, but it is not directly connected with a representative of a blue point; this represents parts of the maze from which an exit could be reached, but that are separated (by walls) from areas where there are starting points (see below), and
\item  one that is directly connected to a representative of a blue starting point but that is not directly connected to a green exit point---that can be reached only by passing through the black point; this represents the fact that the relevant blue starting point cannot reach the exit because it will always be blocked by a wall.
\end{itemize}

\begin{figure}
\centerline{\resizebox{2in}{!}{\includegraphics{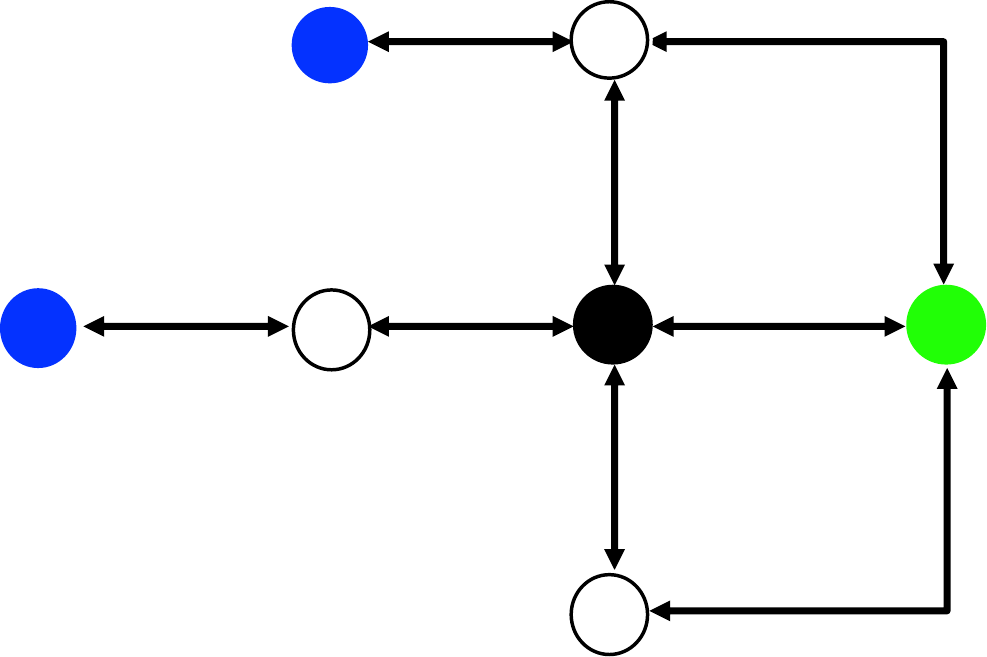}}}
\caption{\label{fig:CoPaReducedMaze} Reduced model of the maze of Fig.~\ref{fig:PathNoImplCMCbis} (left), modulo \cop-bisimilarity.}
\end{figure}
%

%
%


The following proposition can be easily proved from the relevant definitions:

\begin{proposition}
\label{prp:CoPaImplPath}
For all \cm{s} $\model = (X,\closure, \peval)$ and $x_1, x_2 \in X$ the following holds:
$x_1\, \copbis \, x_2$ implies $x_1\, \pthbis \, x_2$.\qed
\end{proposition}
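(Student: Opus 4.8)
The plan is to prove the stronger (but, as the authors note, routine) statement that \emph{every \cop-bisimulation $B$ over $X$ is also a \pth-bisimulation over $X$}. Granting this, the proposition is immediate: if $x_1 \copbis x_2$ then by definition there is a \cop-bisimulation $B$ with $(x_1,x_2)\in B$, and this same $B$ is then a \pth-bisimulation containing $(x_1,x_2)$, whence $x_1 \pthbis x_2$.

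So let $B$ be a \cop-bisimulation and let $(y_1,y_2)\in B$; one checks the five clauses of Definition~\ref{def:Pathbisimilarity} for $B$. Clause~1 is literally Clause~1 of Definition~\ref{def:CoPabisimilarity}. For Clause~2 (forth, forward paths) I would argue by induction on $\pthlen(\pi_1)$, taking the quasi-discrete index space $(\nats,\closure_{\succ})$ as the representative case. Given $\pi_1\in\bpthsF(y_1)$, let $\ell$ be the least index $i\le\pthlen(\pi_1)$ with $(\pi_1(i),y_2)\notin B$, or $\ell=\pthlen(\pi_1)$ if no such index exists; since $(\pi_1(0),y_2)=(y_1,y_2)\in B$, the value $0$ is not a failing index, hence either $\ell=\pthlen(\pi_1)$ or $\ell\ge 1$. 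Let $\pi_1'$ be the bounded path agreeing with $\pi_1$ on $\SET{0,\dots,\ell}$ and constant thereafter, so $\pthlen(\pi_1')=\ell$ and $(\pi_1'(i),y_2)\in B$ for all $i<\ell=\pthlen(\pi_1')$; Clause~2 of Definition~\ref{def:CoPabisimilarity} applied to $\pi_1'$ yields $\pi_2'\in\bpthsF(y_2)$ with $(\pi_1(\ell),\pi_2'(\pthlen(\pi_2')))\in B$. If $\ell=\pthlen(\pi_1)$ we take $\pi_2=\pi_2'$ and are done. Otherwise $\ell\ge 1$, and the tail of $\pi_1$ starting at $\ell$ is a bounded path from $\pi_1(\ell)$ of length $\pthlen(\pi_1)-\ell<\pthlen(\pi_1)$; since $(\pi_1(\ell),\pi_2'(\pthlen(\pi_2')))\in B$, the induction hypothesis provides $\pi_2''\in\bpthsF(\pi_2'(\pthlen(\pi_2')))$ with $(\pi_1(\pthlen(\pi_1)),\pi_2''(\pthlen(\pi_2'')))\in B$, and the concatenation $\pi_2$ of $\pi_2'$ followed by $\pi_2''$ is the required element of $\bpthsF(y_2)$. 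Clause~3 is obtained in exactly the same way from Clause~3 of Definition~\ref{def:CoPabisimilarity}, and Clauses~4 and~5 symmetrically, decomposing $\pi_1\in\bpthsT(y_1)$ from its end-point backwards and invoking Clauses~4 and~5 of Definition~\ref{def:CoPabisimilarity}. Note that in the conclusion we simply discard the intermediate ``$(y_1,\pi_2(i_2))\in B$'' constraints that \cop-bisimulation provides, since \pth-bisimulation does not ask for them.

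The only points requiring care are the decomposition of $\pi_1$ into maximal ``zones'' and the fact that a concatenation of bounded paths is again a bounded path with the same index space; both are straightforward for the quasi-discrete index space (using Proposition~\ref{prp:FT}(\ref{path}) to verify that the glued function is a path, and well-foundedness of the induction since consecutive zones have length at least $1$), and analogously for the Euclidean index space fixed in Section~\ref{sec:Preliminaries}. I expect this bookkeeping — in particular getting the zone decomposition and the choice of $\ell$ to work cleanly for a general (non-quasi-discrete) index space — to be the only mildly delicate part; conceptually the statement is easy, because a \cop-bisimulation already matches every path one zone at a time, and a \pth-bisimulation is obtained merely by chaining these matches along the path.
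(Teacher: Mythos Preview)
The paper provides no proof of this proposition: it simply asserts that it ``can be easily proved from the relevant definitions'' and closes the statement with \qed. Your approach---showing that every \cop-bisimulation $B$ is already a \pth-bisimulation by decomposing an arbitrary path $\pi_1$ into maximal $B$-zones, applying Clause~2 of Definition~\ref{def:CoPabisimilarity} to each zone, and chaining the resulting matching paths---is correct for the quasi-discrete index space $(\nats,\closure_{\succ})$ and spells out precisely the routine argument the paper is gesturing at.

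Your caution about general index spaces is warranted and honest: the well-founded induction on $\pthlen(\pi_1)$ and the discrete choice of the least failing index $\ell$ do not carry over verbatim to, e.g., the Euclidean index space (infima need not be attained, and the zone decomposition could in principle produce infinitely many pieces), so ``analogously'' is a little quick there. The paper does not address this point either; in fact all subsequent uses of \cop-bisimilarity, including the logical characterisation (Theorems~\ref{thm:CoPabisIsIcrleq} and~\ref{thm:IcrleqIsCoPabis}) and the minimisation result (Proposition~\ref{prp:DbsEQCopa}), are stated for \qdcm{s} only, so the quasi-discrete case you handle in full detail is the one that matters for the rest of the development.
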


The converse of  \propo{} \ref{prp:CoPaImplPath} does not hold, as shown  in Figure~\ref{fig:PathNoImplCoPa}. Relation 
$B=\SET{(t_{11},t_{21}), (t_{12},t_{22}), (t_{13},t_{23}), (t_{14},t_{24}), (t_{15},t_{25})}$ is a \pth-bisimulation, so $t_{11} \pthbis t_{21}$. On the other hand, Condition 2 of 
Definition~\ref{def:CoPabisimilarity} cannot be fulfilled for any $\pi_1 \in \bpthsF(t_{11})$ such that  $\pi_1(j)=t_{13}$ for some $j>0$ since for every
$\pi_2\in \bpthsF(t_{21})$ there is $k$ such that $g \in \invpeval(\pi_2(k))$, wheras 
$g \not\in \invpeval(\pi_1(h))$ for all $h$.
\begin{figure}
\centerline{\resizebox{2in}{!}{\includegraphics{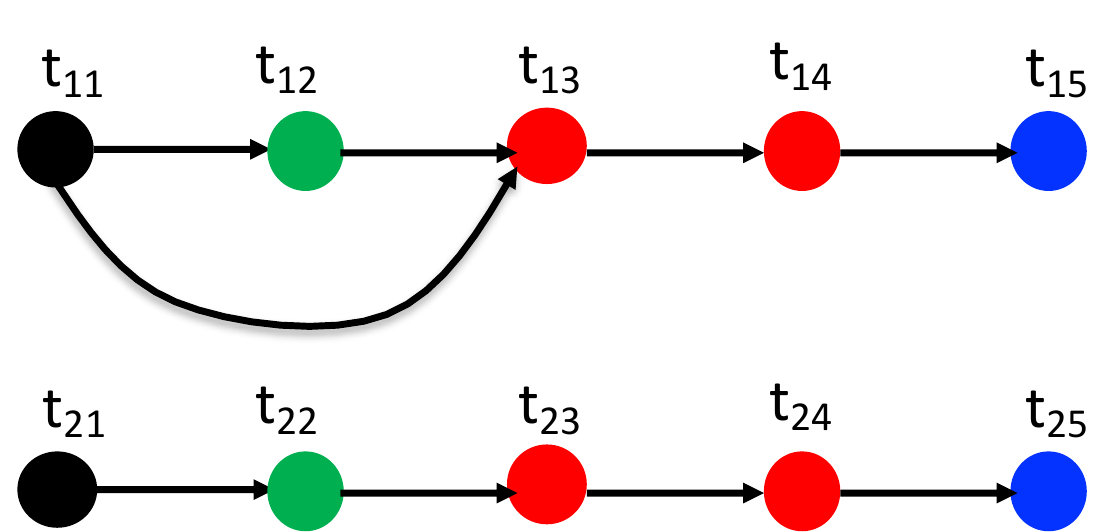}}}
\caption{\label{fig:PathNoImplCoPa} $t_{11} \pthbis t_{21}$ but $t_{11} \not\copbis t_{21}$.}
\end{figure}

\begin{proposition}
\label{prp:CMCbisImplCoPa}
For all \qdcm{s} $\model = (X,\closureF, \peval)$ and $x_1, x_2 \in X$ the following holds:
$x_1\, \cmcbis \, x_2$ implies $x_1\, \copbis \, x_2$.
\end{proposition}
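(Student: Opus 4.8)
The plan is to show that any \cmc-bisimulation $B$ over $X$ is itself a \cop-bisimulation over $X$; since $\cmcbis$ is the largest such relation, this yields the implication directly. So fix a \cmc-bisimulation $B$ and a pair $(x_1,x_2)\in B$. Condition~1 of Definition~\ref{def:CoPabisimilarity} is immediate from Condition~1 of Definition~\ref{def:CMCbisimilarity}. It remains to verify Conditions~2--5; by the symmetry of the two "directions" (forward $\bpthsF$ versus backward $\bpthsT$) and the symmetry between $x_1$ and $x_2$, it suffices to treat one of them, say Condition~2, the others being entirely analogous (using $\interiorT$/$\closureT$ in place of $\interiorF$/$\closureF$ for Conditions~4--5).

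For Condition~2, I would take a bounded path $\pi_1\in\bpthsF_{\calJ,\model}(x_1)$ such that $(\pi_1(i_1),x_2)\in B$ for all $i_1$ with $0\le i_1<\pthlen(\pi_1)$, and construct the required $\pi_2\in\bpthsF_{\calJ,\model}(x_2)$ step by step. The key observation is that \cmc-bisimulation, restricted to minimal neighbourhoods (which exist since $\model$ is quasi-discrete), is exactly \cl-bisimulation: by Corollary~\ref{cor:ClbisEqCMCbis}, $B\subseteq\clbis$, and more usefully one can extract from Conditions~2--3 of Definition~\ref{def:CMCbisimilarity} the "one-step" transfer property that for $(u_1,u_2)\in B$, every $u_1'\in\closureF(\SET{u_1})$ is matched by some $u_2'\in\closureF(\SET{u_2})$ with $(u_1',u_2')\in B$, and symmetrically (this is essentially Conditions~2--3 of Definition~\ref{def:Clbisimilarity}, and follows by instantiating $S_1$ with the minimal $\interiorF$-neighbourhood of $u_1$ and using item~\ref{path} of Proposition~\ref{prp:FT}). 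Concretely: since $\model$ is quasi-discrete we may as well restrict to quasi-discrete paths (index space $(\nats,\closure_\succ)$); writing $\pi_1 = x_1 = \pi_1(0),\pi_1(1),\ldots$ with $\pi_1(i)\in\closureF(\SET{\pi_1(i-1)})$ and $\pi_1(i-1)\in\closureT(\SET{\pi_1(i)})$ for each $i$ (Proposition~\ref{prp:FT}), I would inductively build $\pi_2(0)=x_2,\pi_2(1),\ldots$ so that $(\pi_1(i),\pi_2(i))\in B$ at every step, using the one-step forward transfer property; since $\pi_1$ is bounded (eventually constant at $\pi_1(\pthlen(\pi_1))$), one stops the construction at that index, and $\pi_2$ is then declared constant from there on, so $\pi_2$ is a bounded path with $(\pi_1(\pthlen(\pi_1)),\pi_2(\pthlen(\pi_2)))\in B$. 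The extra "zone" requirement --- that $(x_1,\pi_2(i_2))\in B$ for all $i_2<\pthlen(\pi_2)$ --- comes for free: for each such $i_2$ we have $(\pi_1(i_2),\pi_2(i_2))\in B$ by construction and $(\pi_1(i_2),x_2)\in B$ by hypothesis on $\pi_1$; composing these via transitivity of $\cmcbis$ (equivalently, replacing $B$ by $\cmcbis$ from the outset, which is an equivalence relation) gives $(x_1,\pi_2(i_2)) = (x_1,x_2)\circ(x_2,\pi_1(i_2))\circ\ldots$, i.e. $x_1\cmcbis\pi_1(i_2)$ hence $x_2\cmcbis\pi_2(i_2)$ hence $x_1\cmcbis\pi_2(i_2)$.

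The main subtlety --- and the step I expect to require the most care --- is the interaction between the "free choice of path length" allowed in Condition~2 of Definition~\ref{def:CoPabisimilarity} and the fact that a naive one-step-for-one-step simulation would force $\pthlen(\pi_2)=\pthlen(\pi_1)$, which is permitted but one must check the hypothesis on $\pi_1$ (that all its interior points are $B$-related to $x_2$) is actually used, and used correctly, to guarantee each $\pi_2(i_2)$ for $i_2<\pthlen(\pi_2)$ satisfies the zone condition rather than merely being reachable. In fact the clean way to present this is: because $(\pi_1(i_1),x_2)\in B$ and $\cmcbis$ is an equivalence containing $B$, all the points $\pi_1(0),\ldots,\pi_1(\pthlen(\pi_1)-1)$ lie in a single $\cmcbis$-class (the class of $x_1$ and $x_2$), so the one-step construction automatically keeps $\pi_2(0),\ldots,\pi_2(\pthlen(\pi_2)-1)$ inside that same class, which is precisely the zone condition; the only genuinely non-trivial endpoint is $\pi_1(\pthlen(\pi_1))$, handled by the final one-step transfer. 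One should also note explicitly that the construction respects the index-space conventions (Definition~\ref{def:path}): the built $\pi_2$ really is a continuous function from $(\nats,\closure_\succ)$ because we verified the forward and converse one-step conditions of Proposition~\ref{prp:FT}, item~\ref{path}, at every index. Finally, the failure of the converse is witnessed by a figure, e.g. the model of Figure~\ref{fig:PathNoImplCoPa} appropriately re-examined (there $t_{11}\copbis\!\!\!\!/\;\, t_{21}$ already rules out \cmc-bisimilarity a fortiori), together with Figure~\ref{fig:CMbisNoImplCMCbis} or a dedicated small quasi-discrete model where two points are \cop-bisimilar but distinguished by the "counting" power of $\closureF$-one-step, as discussed around Figure~\ref{fig:FiveAndFour}.
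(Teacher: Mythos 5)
Your proposal is correct and follows essentially the same route as the paper: pass from \cmc-bisimilarity to the one-step ($\closureF$/$\closureT$) transfer property of \cl-bisimulation via Corollary~\ref{cor:ClbisEqCMCbis}, build $\pi_2$ index-by-index with $\pthlen(\pi_2)=\pthlen(\pi_1)$ using Proposition~\ref{prp:FT}, and discharge the zone condition by transitivity. The only cosmetic difference is that the paper works with the reflexive-symmetric-transitive closure $B^{rst}$ of a \cl-bisimulation (Lemmas~\ref{lem:BBrst} and~\ref{lem:BstrExtB}) where you work directly with the equivalence relation $\cmcbis$; your parenthetical about which minimal neighbourhood instantiates which condition has $\interiorF$ and $\interiorT$ swapped (the minimal $\interiorT$-neighbourhood of $u$ is $\closureF(\SET{u})$), but this does not affect the argument.
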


The converse of  \propo{} \ref{prp:CMCbisImplCoPa} does not hold; again with reference to Figure~\ref{fig:PathNoImplCMCbis},  it is easy to see that  $B=\SET{(x_{11},x_{21}), (x_{11},x_{22}), (x_{12},x_{23})}$ is a \cop-bisimulation, and so 
$x_{11} \copbis x_{21}$. On the other hand, as we have already seen, $x_{11} \not\cmcbis x_{21}$.

The following proposition shows that $\copbis$ and $\treq$ are uncomparable.

\begin{proposition}
\label{prp:CoPaUncTraceEq}
There exist \cm{s} $\model$ and points $x_1, x_2 \in\model$ such that
$x_{1}  \copbis x_{2}$ and $x_{1}  \not\treq x_{2}$; similarly,
there are \cm{s} $\model$ and points $x_1, x_2 \in\model$ such that
$x_{1}  \not\copbis x_{2}$ and $x_{1}  \treq x_{2}$.
\end{proposition}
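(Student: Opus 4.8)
The plan is to establish both halves of the uncomparability claim by re-using the two running counterexamples of the paper, exactly as was done for \pth-bisimilarity in Proposition~\ref{prp:PathUncTraceEq}: the model of Figure~\ref{fig:PathNoImplCMCbis} for the ``\cop-bisimilar but not trace equivalent'' direction, and the model of Figure~\ref{fig:TraceEqNoImplCMCbis} for the ``trace equivalent but not \cop-bisimilar'' direction. Since both facts we need about the \emph{first} model, and the relevant fact about \pth-bisimilarity on the \emph{second} model, have already been recorded in the text, the proof reduces to short bookkeeping plus one appeal to Proposition~\ref{prp:CoPaImplPath}.

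For the first direction I would take $\model$ to be the \qdcm{} of Figure~\ref{fig:PathNoImplCMCbis} together with the points $x_{11},x_{21}$. As noted just after Proposition~\ref{prp:CMCbisImplCoPa}, the finite relation $B=\SET{(x_{11},x_{21}),(x_{11},x_{22}),(x_{12},x_{23})}$ is a \cop-bisimulation; I would spell out the (routine) verification that $B$ satisfies the five clauses of Definition~\ref{def:CoPabisimilarity}, in particular that the zone conditions in Clauses~\ref{zones}--5 hold for the few bounded forth- and back-paths of this small model. This yields $x_{11}\copbis x_{21}$. For $x_{11}\not\treq x_{21}$ I would reuse the trace argument already given in the discussion of Proposition~\ref{prp:PathUncTraceEq}: the trace $\SET{r}\!\cdot\!\SET{b}^{\omega}$ lies in $\T(\bpthsF(x_{11}))$ but not in $\T(\bpthsF(x_{21}))$, since every bounded forth-path from $x_{21}$ remains among $r$-points, whereas from $x_{11}$ one can step into a $b$-point and then idle there.

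For the second direction I would take $\model$ to be the \qdcm{} of Figure~\ref{fig:TraceEqNoImplCMCbis} with the points $y_{11},y_{21}$. That $y_{11}\treq y_{21}$ is already established (it is the content of Remark~\ref{rem:prp:CMCbisImpliesTraceEq}, and is used in the discussion of Proposition~\ref{prp:PathUncTraceEq}); I would simply restate it. For $y_{11}\not\copbis y_{21}$ I would not argue directly on \cop-bisimulations but instead invoke the contrapositive of Proposition~\ref{prp:CoPaImplPath}: since $y_{11}\not\pthbis y_{21}$ (again already established), it follows at once that $y_{11}\not\copbis y_{21}$.

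The only step with any content is the explicit check that the relation $B$ of Figure~\ref{fig:PathNoImplCMCbis} is a \cop-bisimulation, i.e.\ that it meets the zone conditions of Definition~\ref{def:CoPabisimilarity}; but because the underlying model is finite and $B$ has just three pairs, this is a finite case analysis rather than a genuine obstacle. Everything else is a direct citation of facts and propositions already available above.
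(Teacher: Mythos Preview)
Your proposal is correct and follows exactly the paper's own argument: the same two models (Figures~\ref{fig:PathNoImplCMCbis} and~\ref{fig:TraceEqNoImplCMCbis}) and the same appeal to the contrapositive of Proposition~\ref{prp:CoPaImplPath} for the second direction. One small slip: your justification that $\SET{r}\!\cdot\!\SET{b}^{\omega}\notin\T(\bpthsF(x_{21}))$ because ``every bounded forth-path from $x_{21}$ remains among $r$-points'' is not accurate---paths from $x_{21}$ do reach the $b$-point $x_{23}$; the correct reason is that every one-step successor of $x_{21}$ is an $r$-point, so no trace from $x_{21}$ can have $\SET{b}$ at index~$1$.
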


As an example of the first case, let us consider again the model of Figure~\ref{fig:PathNoImplCMCbis}: we have already seen that
$x_{11}  \copbis x_{21}$  and that $x_{11}  \not\treq x_{21}$. 
As for the second case, let us consider again the model of
Figure~\ref{fig:TraceEqNoImplCMCbis}: we have already seen that
$x_{11}  \not\pthbis x_{21}$, and thus, by \propo{}~\ref{prp:CoPaImplPath} we
get $x_{11}  \not\copbis x_{21}$; but we have  already seen that
$x_{11}  \treq x_{21}$.

\subsection{\cop-bisimilarity minimisation}\label{sec:CopMin}
In this section we show how \cop-bisimilarity minimisation can be achieved using results from~\cite{Gr+17} on 
minimisation of {\em Divergence-blind Stuttering Equivalence}.
We first recall the definition of Divergence-blind Stuttering Equivalence (Def. 2.2 of \cite{Gr+17}):
\begin{definition}{Divergence-blind Stuttering Equivalence (\dbs-Eq)}\label{def:dbs}. \\
Let $K = (S,\ap{}, R, L)$ be a Kripke structure. A symmetric relation $E \subseteq S \times S$ is a divergence-blind stuttering equivalence if and only if for all $s, t \in S$ such that  $s \, E \, t$:
\begin{enumerate}
\item $L(s) = L(t)$, and
\item for all $s'  \in S$, if $s \,R \,  s'$ then there are 
$t_0, \ldots , t_k \in S$ for some $k \in \nats$ such that 
$t = t_0, s \,E \, t_i , t_i \,R \, t_{i+1}$ for all $i < k$, and  $s' \, E \, t_k$.
\end{enumerate}
We say that two states $s, t\in S$ are divergence-blind stuttering equivalent, notation
$s \dbseq t$, if and only if there is a divergence-blind stuttering equivalence relation $E$
such that $s \, E \, t$. \closedefi
\end{definition}
First of all we recall that every Kripke structure $K = (S,\ap, R, L)$ gives rise to a \qdcm{,} namely the model $\model(K) = (S,\closure_{R}, \peval_L)$ where 
$\peval_L(p)=\ZET{s \in S}{p \in L(s)}$. Similarly, every \qdcm{}  
$\model = (S,\closure_{R}, \peval)$ characterises a Kripke structure 
$K(\model) = (S,\ap, R, L_{\peval})$ where $L_{\peval}(s)=\invpeval(s)$.
We also recall that a {\em path} in $K$ is a
sequence $s_0, \ldots , s_k \in S$ such that $s_i \,R \, s_{i+1}$ for all $i < k$. Note that this definition of path is different from that of path in a \qdcm{.} For instance, consider 
Kripke structure $(\SET{s,t},\ap, \SET{(s,t)}, L)$, for some $s\not=t$ and $L$,
and related \qdcm{}
$(\SET{s,t}, \closure_{\SET{(s,t)}}, \peval_L)$. In the Kripke structure
there is no path corresponding to the following path in the \qdcm{:}
$\pi(0)=\pi(1)= s$, and $\pi(n+2)=t$ for all $n \in \nats$, and this is because $(s,s) \not\in \SET{(s,t)}$. In other words, paths in Kripke structures are strictly bound to the 
accessibility relation of the structure, while those in \qdcm{} are more flexible in this respect, due to their possibility of having more adjacent indexes being mapped to the same point (i.e. ``stuttering''). Of course, for each Kripke structure $K = (S,\ap, R, L)$ there is a 
Kripke structure $K^r$ having exactly the same paths as those of  $\model(K)$, namely $K^r = (S,\ap, R^r, L)$, where, we recall, $R^r$ is the reflexive closure of $R$. Note, by the way, $\model(K^r)= \model(K)$, i.e. $K$ and $K^r$ share the same \qdcm{.} This is due to the
fact that $\closure_R = \closure_{R^r}$ and is a consequence of the very definition of $\closure$.

We now provide a ``back-and-forth'' version of \dbs-Eq:

\begin{definition}[Divergence-blind Stuttering Equivalence with Converse (\dbsc-Eq)]\label{def:dbsc}
Let $K = (S,\ap{}, R, L)$ be a Kripke structure. A symmetric relation $E \subseteq S \times S$ is a divergence-blind stuttering equivalence with converse if and only if 
for all $s, t \in S$ such that  $s \, E \, t$:
\begin{enumerate}
\item $L(s) = L(t)$, and
\item for all $s'  \in S$, if $s \,R \,  s'$, then there are 
$t_0, \ldots , t_k \in S$ for some $k \in \nats$ such that 
$t_0=t, s \,E \, t_i , t_i \,R \, t_{i+1}$ for all $i\in\ZET{\iota}{0\leq \iota < k}$, 
and  $s' \, E \, t_k$;
\item for all $s'  \in S$, if $s' \,R \,  s$, then there are 
$t_0, \ldots , t_k \in S$ for some $k \in \nats$ such that 
$t_k=t, s \,E \, t_i , t_{i-1} \,R \, t_{i}$ for all $i\in\ZET{\iota}{0< \iota \leq k}$, 
and  $s' \, E \, t_0$.
\end{enumerate}
We say that two states $s, t\in S$ are divergence-blind stuttering with converse equivalent, notation
$s \dbsceq t$, if and only if there is a divergence-blind stuttering equivalence with converse relation $E$
such that $s \, E \, t$. \closedefi
\end{definition}

%
%
\begin{proposition}\label{prp:DbsEQCopa}
For every \qdcm{} $\model=((X,\closure_R),\peval)$, $x_1, x_2 \in X$
$x_1 \, \copbis \, x_2$ with respect to $\model$ if and only if 
$x_1 \, \dbsceq \, x_2$ with respect to $K(\model)^r$.
\end{proposition}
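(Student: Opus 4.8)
The plan is to prove the two implications separately, in each case converting a witnessing relation on one side into a witnessing relation on the other, and the natural candidate is the identity map on relations: a symmetric relation $E$ witnessing $\dbsceq$ in $K(\model)^r$ will be shown to be a \cop-bisimulation in $\model$, and conversely a \cop-bisimulation $B$ in $\model$ will be shown (after symmetrising, if necessary) to be a \dbsc-equivalence in $K(\model)^r$. The key technical bridge is the relationship between paths in the Kripke structure $K(\model)^r = (X, \ap, R^r, L_{\peval})$ and bounded paths in the \qdcm{} $\model = (X, \closure_R, \peval)$: by the discussion preceding the proposition, $K(\model)^r$ has exactly the same paths as $\model(K(\model)) = \model$, and by Proposition~\ref{prp:FT}(\ref{path}) a function $\pi:\nats\to X$ is a path in $\model$ iff each consecutive pair satisfies $\pi(i)\in\closureF(\pi(i-1))$ and $\pi(i-1)\in\closureT(\pi(i))$, i.e. iff $\pi(i-1)\,R^r\,\pi(i)$ (using that $\closureF$ is generated by $R$). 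A bounded path of length $\ell$ in $\model$ is then, after truncation at $\ell$, exactly a finite $R^r$-path $t_0,\dots,t_\ell$; and a finite $R^r$-path extends to a bounded path by stuttering on its last point. This correspondence is the crux and I expect the bookkeeping of ``zones versus stuttering segments'' to be the main obstacle: a \dbsc{} transfer produces $t_0,\dots,t_k$ with $s\,E\,t_i$ for the intermediate indices, which is precisely the ``one zone'' picture, whereas Condition~\ref{zones} of Definition~\ref{def:CoPabisimilarity} demands the full ``all intermediate points of $\pi_1$ related to $x_2$'' hypothesis; matching the index ranges (the $0\le\iota<\pthlen$ versus $0<\iota\le\pthlen$ asymmetry between the forward and converse conditions, and the $R$ versus $R^r$ distinction) is where care is needed.

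For the direction $x_1\dbsceq x_2 \Rightarrow x_1\copbis x_2$: let $E$ be a \dbsc-equivalence in $K(\model)^r$ with $(x_1,x_2)\in E$. I claim $E$ is a \cop-bisimulation in $\model$. Clauses on atomic propositions coincide since $L_{\peval}(s)=\invpeval(s)$. For Condition~\ref{zones}, take $(x_1,x_2)\in E$ and a bounded path $\pi_1\in\bpthsF(x_1)$ with $(\pi_1(i_1),x_2)\in E$ for all $i_1<\pthlen(\pi_1)$. Write $\ell_1=\pthlen(\pi_1)$ and set $s=x_1$, $s'=\pi_1(\ell_1)$. Since $\pi_1(0)\,R^r\,\pi_1(1)\,R^r\cdots R^r\,\pi_1(\ell_1)$ is an $R^r$-path and all intermediate points are $E$-related to $x_2$, I iterate the clause-2 transfer of \dbsc-Eq (in $K(\model)^r$, whose relation is $R^r$) along the edges $\pi_1(j-1)\,R^r\,\pi_1(j)$: starting from $t=x_2$, each step produces a finite $R^r$-path segment from the current $E$-partner of $\pi_1(j-1)$ to an $E$-partner of $\pi_1(j)$, all intermediate vertices $E$-related to the corresponding $\pi_1(j-1)$ (hence, using transitivity of $E$ restricted as needed, $E$-related to $x_1$ since $\pi_1(j-1)\,E\,x_2\,E\,x_1$ — this uses symmetry and the fact that $E$ is an equivalence on the relevant points, which I should double-check can be arranged by working with $\dbsceq$ itself). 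Concatenating these segments and then stuttering on the final point yields $\pi_2\in\bpthsF(x_2)$ with $(x_1,\pi_2(i_2))\in E$ for all $i_2<\pthlen(\pi_2)$ and $(\pi_1(\ell_1),\pi_2(\pthlen(\pi_2)))\in E$, which is exactly what Condition~\ref{zones} requires. Conditions~3--5 are handled symmetrically, Conditions~4--5 invoking clause~3 of \dbsc-Eq (the converse edges $s'\,R^r\,s$) and the index ranges $0<\iota\le\pthlen$ matching because those conditions look at $\bpthsT$ and the endpoint $\pi(0)$.

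For the converse $x_1\copbis x_2\Rightarrow x_1\dbsceq x_2$: let $B$ be a \cop-bisimulation in $\model$ with $(x_1,x_2)\in B$; replace $B$ by its symmetric closure $B^s$ (one checks the \cop{}-conditions are preserved, or simply observe the definition is already symmetric in the two points via Conditions 2/3 and 4/5), so we may assume $B$ symmetric. I claim $B$ is a \dbsc-equivalence in $K(\model)^r$. The atomic-proposition clause is immediate. For clause~2, suppose $(s,t)\in B$ and $s\,R^r\,s'$. Define the bounded path $\pi_1$ by $\pi_1(0)=s$, $\pi_1(1)=s'$, $\pi_1(n)=s'$ for $n\ge 1$; this is a genuine path in $\model$ by Proposition~\ref{prp:FT}(\ref{path}) since $s\,R^r\,s'$ gives $s'\in\closureF(s)$ and $s\in\closureT(s')$, and stuttering on $s'$ is always legitimate. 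Its length is $1$, so the only intermediate index is $i_1=0$ with $\pi_1(0)=s$ and $(s,t)=(\pi_1(0),t)\in B$, satisfying the hypothesis of Condition~\ref{zones}. Applying Condition~\ref{zones} yields $\pi_2\in\bpthsF(t)$ with $(s,\pi_2(i_2))\in B$ for all $i_2<\pthlen(\pi_2)$ and $(s',\pi_2(\pthlen(\pi_2)))\in B$. Truncating $\pi_2$ at $\ell_2=\pthlen(\pi_2)$ gives an $R^r$-path $t_0=t, t_1,\dots,t_{\ell_2}$ (each consecutive pair $R^r$-related by Proposition~\ref{prp:FT}(\ref{path})) with $s\,B\,t_i$ for $i<\ell_2$ and $s'\,B\,t_{\ell_2}$; setting $k=\ell_2$ this is precisely what clause~2 of \dbsc-Eq demands. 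Clause~3 of \dbsc-Eq (the case $s'\,R^r\,s$) is obtained the same way using Condition~4 of Definition~\ref{def:CoPabisimilarity} applied to the reversed bounded path landing at $s$, with the index-range conventions $0<\iota\le\pthlen$ lining up because Condition~4 concerns $\bpthsT$ and the starting point $\pi(0)$. Finally I note $B$ is symmetric by assumption, completing the verification that $B$ is a \dbsc-equivalence and hence $x_1\dbsceq x_2$ in $K(\model)^r$. $\hfill\bullet$
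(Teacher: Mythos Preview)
Your proof plan is correct and follows the same approach as the paper: one direction feeds a single $R^r$-edge into \cop-Condition~\ref{zones} via a length-1 path, and the other iterates the \dbsc{} transfer along the edges of $\pi_1$, concatenates the resulting segments, and uses symmetry/transitivity to relate all intermediate points back to $x_1$. The two points you flag are exactly where the paper takes the cleaner route of working directly with the largest relations $\copbis$ and $\dbsceq$ (which are symmetric, and $\dbsceq$ is an equivalence) rather than symmetrising an arbitrary $B$ or relying on transitivity of an arbitrary $E$; note in particular that your passing claim that $B^s$ remains a \cop-bisimulation is not obvious (the hypothesis of Condition~\ref{zones} is antitone in $B$), but since your argument only invokes it for length-1 paths this is harmless.
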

\propo{} \ref{prp:DbsEQCopa}  gives an effective way for computing the minimisation of $\model$ w.r.t.
$\copbis$, by using the algorithm(s) proposed in~\cite{Gr+17}.

\subsection{Logical Characterisation of \cop-bisimilarity}
In this section we show that a sub-logic of \islcs{} fully characterises \cop-bisimilarity.
We first define the Infinitary Compatible Reachability Logic, \icrl{} for short and show that \icrl{} is a sub-logic of \islcs{} obtained by forcing $\ltothru$ and $\lfromthru$ 
to be used only in conjunction of their second argument. Then we provide the characterisation result.

\begin{definition}[Infinitary Compatible Reachability Logic - \icrl]\label{def:Icrl}
For index set $I$ and $p \in  \ap$ the abstract language of \icrl{} is defined as follows:
$$
\form ::= p \; \sep \; \lneg \form \; \sep \; \liand_{i \in I} \form_i \; \sep \; \lstothru \form_1[\form_2] \; \sep \; \lsfromthru \form_1[\form_2].
$$

The satisfaction relation for all \cm{s}  $\model$, $x \in \model$, and \icrl{} formulas $\form$ is defined recursively on the structure of $\form$ as follows:
$$
\begin{array}{r c l c l c l l}
\model,x & \models_{\icrl}  & p & \Leftrightarrow & x  \in \peval(p);\\
\model,x & \models_{\icrl}  & \lneg \,\form & \Leftrightarrow & \model,x  \models_{\icrl} \form \mbox{ does not hold};\\
\model,x & \models_{\icrl}  & \liand_{i\in I} \form_i  & \Leftrightarrow &
\model,x  \models_{\irl} \form_i \mbox{ for all } i \in I;\\
\model,x  & \models_{\icrl} & \lstothru \form_1 [\form_2] & \Leftrightarrow &
\mbox{there exist  path } \pi \mbox{ and index } \ell \mbox{ such that }\\
&&&&\mbox{\hspace{0.1in}} \pi(0) = x \mbox{ and }\\
&&&&\mbox{\hspace{0.1in}} \pi(\ell) \models_{\icrl} \form_1 \mbox{ and }\\
&&&&\mbox{\hspace{0.1in}} \mbox{for all } j \mbox{ such that } 0 \leq j < \ell \mbox{ the following holds:}\\
&&&&\mbox{\hspace{0.2in}} \pi(j) \models_{\icrl} \form_2;\\
\model,x  & \models_{\icrl} & \lsfromthru \form_1 [\form_2] & \Leftrightarrow &
\mbox{there exist  path } \pi \mbox{ and index } \ell \mbox{ such that }\\
&&&&\mbox{\hspace{0.1in}} \pi(\ell) = x \mbox{ and }\\
&&&&\mbox{\hspace{0.1in}} \pi(0) \models_{\icrl} \form_1 \mbox{ and }\\
&&&&\mbox{\hspace{0.1in}} \mbox{for all } j \mbox{ such that } 0  < j \leq \ell \mbox{ the following holds:}\\
&&&&\mbox{\hspace{0.2in}} \pi(j) \models_{\icrl} \form_2.
\end{array}
$$
\closedefi
\end{definition}

The following proposition trivially follows from the relevant definitions:

\begin{proposition}\label{prp:SToThrouToThrouETC}
For all \cm{s} $\model = (X,\closure, \peval)$ and $x_1, x_2 \in X$ the following holds:
$\lstothru \form_1 [\form_2] \equiv \form_2 \land \ltothru \form_1 [\form_2]$ and
$\lsfromthru \form_1 [\form_2] \equiv \form_2 \land \lfromthru \form_1 [\form_2]$.\qed
\end{proposition}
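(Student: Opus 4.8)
The plan is to prove the two equivalences directly from the semantics, treating $\equiv$ as equality of denotations: fix a \cm{} $\model=(X,\closure,\peval)$, regard $\sem{\form_1}^{\model}$ and $\sem{\form_2}^{\model}$ as fixed point-sets (so no induction on $\form_1,\form_2$ is needed — the two operators only inspect these two sets), and show for every $x\in X$ that $\model,x\models_{\icrl}\lstothru\form_1[\form_2]$ iff $\model,x\models_{\islcs}\form_2\wedge\ltothru\form_1[\form_2]$, together with its dual for the backward operators. The whole argument is a side-by-side unfolding of Definitions~\ref{def:Icrl} and~\ref{def:Islcs}.

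The key observation is that the two defining clauses differ by exactly one constraint. Unfolding Definition~\ref{def:Icrl}, $\model,x\models_{\icrl}\lstothru\form_1[\form_2]$ asks for a path $\pi$ and an index $\ell$ with $\pi(0)=x$, $\pi(\ell)\models\form_1$, and $\pi(j)\models\form_2$ for all $j$ with $0\le j<\ell$; by Definition~\ref{def:Islcs}, $\model,x\models_{\islcs}\ltothru\form_1[\form_2]$ asks for $\pi,\ell$ with $\pi(0)=x$, $\pi(\ell)\models\form_1$, and $\pi(j)\models\form_2$ for all $j$ with $0<j<\ell$. The index set $\{j:0\le j<\ell\}$ exceeds $\{j:0<j<\ell\}$ exactly by the element $0$, and since $\pi(0)=x$ in both clauses, demanding $\form_2$ at that extra index is precisely demanding $x\models\form_2$. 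Hence a pair $(\pi,\ell)$ witnesses $\lstothru\form_1[\form_2]$ at $x$ iff it witnesses $\ltothru\form_1[\form_2]$ at $x$ and $x\models\form_2$; this gives both implications at once, and so the first equivalence. For the second, one runs the mirror-image argument: in Definition~\ref{def:Icrl} the $\form_2$-range for $\lsfromthru\form_1[\form_2]$ is $\{j:0<j\le\ell\}$, which exceeds the range $\{j:0<j<\ell\}$ used for $\lfromthru\form_1[\form_2]$ in Definition~\ref{def:Islcs} exactly by the element $\ell$; since $\pi(\ell)=x$ in those clauses, the surplus demand is again $x\models\form_2$, and the same two implications yield $\lsfromthru\form_1[\form_2]\equiv\form_2\wedge\lfromthru\form_1[\form_2]$.

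The one step I would single out as the main (indeed essentially the only) obstacle is the bookkeeping at the boundary index $\ell=0$: there both ranges $\{j:0\le j<0\}$ and $\{j:0<j<0\}$ collapse to the empty set, so the clean ``the extra index is $\{0\}$ (resp.\ $\{\ell\}$)'' identification is not available verbatim, and this degenerate witness — which may always be taken to be the constant path at $x$ — must be inspected on its own, against the convention fixed in Definition~\ref{def:Islcs} for the witnessing index, to confirm that the two sides still agree. Once that case is dispatched, the statement is immediate from the definitions, which is precisely why it is recorded as following trivially from them.
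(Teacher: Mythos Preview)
Your approach is exactly the paper's --- unfold the two clauses and compare the index ranges --- only carried out in more detail than the paper, which simply records the proposition as following trivially from the definitions. The direction $\form_2\wedge\ltothru\form_1[\form_2]\Rightarrow\lstothru\form_1[\form_2]$ (and its backward dual) is indeed immediate, and your witness-preservation argument for $\ell\ge 1$ is clean.

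You are right to single out $\ell=0$ as the one delicate point, but your expectation that inspection will ``confirm that the two sides still agree'' does not survive a literal reading of Definition~\ref{def:Icrl}. Take any $x$ with $\model,x\models\form_1$ and $\model,x\not\models\form_2$, and the constant path at $x$ with $\ell=0$: then $\pi(0)=x$, $\pi(0)\models\form_1$, and the range $\{j:0\le j<0\}$ is empty, so the clause yields $\model,x\models_{\icrl}\lstothru\form_1[\form_2]$; yet $\model,x\not\models\form_2\wedge\ltothru\form_1[\form_2]$ since $x\not\models\form_2$. The symmetric counterexample applies to $\lsfromthru$. This is a wrinkle in the paper's formulation rather than a flaw in your method: the paper glosses over it too, and in the proof of Theorem~\ref{thm:CoPabisIsIcrleq} it explicitly reads off $\model,x_1\models\form_2$ from a witness with $\ell=0$, which the literal clause does not grant. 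The intended reading --- the one that makes the proposition true and that the rest of the paper relies on --- is that the $\form_2$-constraint in $\lstothru$ always covers the starting point (and dually the endpoint in $\lsfromthru$); rather than leaving the $\ell=0$ case as something to be ``dispatched'', you should note this discrepancy and state the intended convention explicitly.
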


\begin{definition}[\icrl-Equivalence]\label{def:IcrlEq}
Given \cm{} $\model = (X,\closure,\peval)$, the equivalence relation $\icrleq^{\model} \, \subseteq \, X \times X$ is defined as: $x_1 \icrleq^{\model} x_2$ if and only if for all \icrl{} formulas $\form$, it holds: $\model, x_1 \models_{\icrl} \form$ if and only if $\model, x_2 \models_{\icrl} \form$.
\closedefi
\end{definition}

\begin{theorem}\label{thm:CoPabisIsIcrleq}
For all \qdcm{s} $\model=(X,\closureF,\peval)$, any
\cop-bisimulation   $B$ over $X$ is included in the equivalence $\icrleq^{\model}$.
\closethm
\end{theorem}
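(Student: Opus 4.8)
The plan is to show that, for any \cop-bisimulation $B$ over $X$ and any $(x_1,x_2)\in B$, the two points satisfy exactly the same \icrl{} formulas, i.e.\ $B \subseteq \icrleq^{\model}$. Since $\icrleq^{\model}$ is by construction symmetric, it suffices to prove by induction on the structure of $\form \in \icrl{}$ that $(x_1,x_2)\in B$ and $\model,x_1 \models_{\icrl} \form$ imply $\model,x_2\models_{\icrl}\form$ (the converse direction follows by swapping the roles of $x_1$ and $x_2$, using the symmetric counterparts of the transfer conditions of Definition~\ref{def:CoPabisimilarity}, namely conditions 3 and 5 against 2 and 4).

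First I would dispatch the easy cases. For $\form = p$ the claim is immediate from Condition~1 of Definition~\ref{def:CoPabisimilarity}, since $\invpeval(x_1)=\invpeval(x_2)$. For $\form = \lneg\psi$ and $\form = \liand_{i\in I}\psi_i$ the argument is the routine Hennessy--Milner-style bookkeeping: negation uses the induction hypothesis in the contrapositive direction together with the symmetry just mentioned, and conjunction is componentwise. The substantive cases are the two reachability modalities $\lstothru \form_1[\form_2]$ and $\lsfromthru \form_1[\form_2]$; I would treat $\lstothru$ in detail and note that $\lsfromthru$ is entirely symmetric, using Conditions~4--5 of Definition~\ref{def:CoPabisimilarity} in place of Conditions~2--3 and reversing the r\^ole of the index $0$ and $\pthlen(\pi)$.

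So suppose $(x_1,x_2)\in B$ and $\model,x_1\models_{\icrl}\lstothru\form_1[\form_2]$; by the semantics there are a path $\pi_1$ and an index $\ell_1$ with $\pi_1(0)=x_1$, $\pi_1(\ell_1)\models_{\icrl}\form_1$, and $\pi_1(j)\models_{\icrl}\form_2$ for all $j$ with $0\le j<\ell_1$. Without loss of generality I take $\pi_1$ to be the bounded path of length $\ell_1$ obtained by letting $\pi_1$ be constant after $\ell_1$, so $\pi_1\in\bpthsF_{\calJ,\model}(x_1)$ with $\pthlen(\pi_1)=\ell_1$. The key observation is that $\pi_1$ is exactly the kind of path that triggers Condition~\ref{zones} of Definition~\ref{def:CoPabisimilarity} with $x_2$: applying the induction hypothesis to the formula $\form_2$ at each index $i_1$ with $0\le i_1<\ell_1$, and noting that for $i_1=0$ we have $\pi_1(0)=x_1$ hence trivially $(\pi_1(0),x_2)=(x_1,x_2)\in B$, we must still be careful that $\pi_1(i_1)\models_{\icrl}\form_2$ alone does not give $(\pi_1(i_1),x_2)\in B$ --- this is the crux, and it is why the definition of \cop-bisimulation already carries the ``$(\pi_1(i_1),x_2)\in B$'' hypothesis rather than a logical one. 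So instead I would argue the statement of the theorem by a single induction in which the transfer condition is invoked directly: whenever the \emph{relational} hypothesis $(\pi_1(i_1),x_2)\in B$ holds for all $0\le i_1<\ell_1$, Condition~\ref{zones} yields $\pi_2\in\bpthsF_{\calJ,\model}(x_2)$ with $(x_1,\pi_2(i_2))\in B$ for all $0\le i_2<\pthlen(\pi_2)$ and $(\pi_1(\ell_1),\pi_2(\pthlen(\pi_2)))\in B$; then the induction hypothesis applied to $\form_2$ along the $\pi_2$-zone (using $(x_1,\pi_2(i_2))\in B$, hence $\pi_2(i_2)\models_{\icrl}\form_2$ since $x_1\models_{\icrl}\form_2$ when $0<i_2$; for $i_2=0$ we have $\pi_2(0)=x_2$ and $x_2\models_{\icrl}\form_2$ follows from $x_1\models_{\icrl}\form_2$ and the case $\form_2$ of the theorem being an \emph{earlier/smaller} instance), together with the induction hypothesis applied to $\form_1$ at the endpoint, gives $\model,x_2\models_{\icrl}\lstothru\form_1[\form_2]$ via the witness path $\pi_2$ and index $\pthlen(\pi_2)$.

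The main obstacle, and the point I would be most careful about, is the interplay between the \emph{relational} side condition $(\pi_1(i_1),x_2)\in B$ demanded by Definition~\ref{def:CoPabisimilarity} and the \emph{logical} side condition $\pi_1(i_1)\models_{\icrl}\form_2$ supplied by the semantics of $\lstothru$: these are not a priori the same, and a na\"ive structural induction on $\form$ alone will not close. The clean fix is to strengthen the induction: prove simultaneously (or nest appropriately) the statement ``$B\subseteq\icrleq$'' together with its consequence that along any $B$-zone all points agree on all previously-treated formulas, so that when Condition~\ref{zones} is applied the zone in $\pi_2$ is genuinely related to $x_1$ and therefore --- by the already-established part of the induction for the strictly smaller formula $\form_2$ --- satisfies $\form_2$. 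Once this ordering of the induction (on formula structure, with the relational transfer condition invoked at each step) is set up correctly, the remaining verifications are the routine bookkeeping sketched above. I would then remark that Proposition~\ref{prp:SToThrouToThrouETC} identifies $\lstothru\form_1[\form_2]$ with $\form_2\land\ltothru\form_1[\form_2]$, so \icrl{} is indeed the announced sub-logic of \islcs{}, which makes the statement a natural analogue of Theorem~\ref{thm:PathbisIsIrleq} for the compatible variant.
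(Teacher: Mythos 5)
You have correctly located the crux --- the mismatch between the \emph{relational} side condition $(\pi_1(i_1),x_2)\in B$ required by Condition~2 of Definition~\ref{def:CoPabisimilarity} and the \emph{logical} side condition $\pi_1(j)\models_{\icrl}\form_2$ supplied by the semantics of $\lstothru$ --- but the fix you propose does not close the gap. Strengthening the induction so that $B$-related points agree on smaller formulas only helps on the $\pi_2$ side (where the zone produced by the transfer condition is genuinely $B$-related to $x_1$ and hence satisfies $\form_2$ by the induction hypothesis). It does nothing on the $\pi_1$ side: the witnessing path for $\lstothru\form_1[\form_2]$ at $x_1$ is only guaranteed to pass through points satisfying $\form_2$, and such points need not be $B$-related to $x_2$ at all. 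The inclusion $B\subseteq\;\icrleq$ that you are proving goes the wrong way to infer membership in $B$ from satisfaction of $\form_2$ (in the maze example, many white points satisfy the same ``through'' formula without being \cop-bisimilar). So Condition~2 simply cannot be invoked on the whole of $\pi_1$, and your argument as written only covers the special case in which the entire prefix of $\pi_1$ up to $\ell_1$ happens to lie in the $B$-class of $x_2$.

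The missing idea is an \emph{iterated zone decomposition} of the witnessing path. The paper sets $M_0=0$ and $x_{21}=x_2$, takes $M_1$ to be the largest index $m_1\le\pthlen(\pi_1)$ such that $(\pi_1(i_1),x_{21})\in B$ for all $M_0\le i_1<m_1$, applies Condition~2 to the sub-path of $\pi_1$ from $M_0$ to $M_1$ to obtain a matching sub-path $\pi_{21}$ from $x_{21}$ ending in some $x_{22}$ with $(\pi_1(M_1),x_{22})\in B$, and then repeats the construction starting from $\pi_1(M_1)$ and $x_{22}$, producing $M_2$, $x_{23}$, and so on until the end of $\pi_1$ is reached; the final witness $\pi_2\in\bpthsF(x_2)$ is the concatenation of the pieces $\pi_{21},\dots,\pi_{2J}$. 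The induction hypothesis on $\form_2$ is then applied to each $B$-pair $(\pi_1(M_{j-1}),\pi_{2j}(i_2))$, and on $\form_1$ to the pair of endpoints, to verify that $\pi_2$ indeed witnesses $\lstothru\form_1[\form_2]$ at $x_2$. Without this chaining through intermediate points the modality case does not go through.
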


The  converse of Theorem~\ref{thm:CoPabisIsIcrleq} is given below.

\begin{theorem}\label{thm:IcrleqIsCoPabis}
For all \qdcm{s}  $\model=(X,\closureF,\peval)$, $\icrleq^{\model}$ is a \cop-bisimulation.
\closethm
\end{theorem}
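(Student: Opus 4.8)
The plan is to show, following the same template as the earlier converse characterisation results (Theorems~\ref{thm:ImleqIsCMbis}, \ref{thm:ImlceqIsCMCbis}, \ref{thm:IslcseqIsClbis} and \ref{thm:IIrleqIsPathbis}), that $\icrleq^{\model}$ satisfies the five clauses of Definition~\ref{def:CoPabisimilarity} (reading $\calJ$ there as the index space $(\nats,\closure_{\succ})$ of quasi-discrete paths, the one matching the path quantification in the semantics of \icrl{}). The relation is non-empty, being reflexive on the non-empty set $X$, and it is symmetric, so clauses~3 and~5 are instances of clauses~2 and~4 with $x_1$ and $x_2$ interchanged; hence it suffices to establish clauses~1, 2 and~4. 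The tool driving the argument is a family of \emph{characteristic formulas}: writing $[x]$ for the $\icrleq^{\model}$-class of $x$, for each class $C$ one forms, using infinitary conjunction, an \icrl{} formula $\chi_C := \liand_{z \notin C}\form_{C,z}$, where $\form_{C,z}$ is an \icrl{} formula -- or the negation of one -- that holds throughout $C$ but fails at $z$ (such a formula exists since $z$ is not $\icrleq^{\model}$-equivalent to the points of $C$); one checks that $\sem{\chi_C}^{\model}=C$. This is the only place where the infinitary conjunction of \icrl{} is essential, and it is legitimate because $X$ is a set. The crucial observation is that the index ranges appearing in the semantics of $\lstothru$ (namely $0 \le j < \ell$) and of $\lsfromthru$ (namely $0 < j \le \ell$) are precisely the ranges occurring in clauses~2 and~4 of Definition~\ref{def:CoPabisimilarity}.

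Clause~1 is immediate, since each $p \in \ap$ is an \icrl{} formula and $x_1\icrleq^{\model} x_2$ forces $x_1\in\peval(p)\iff x_2\in\peval(p)$. For clause~2, take $(x_1,x_2) \in \icrleq^{\model}$ and a path $\pi_1 \in \bpthsF(x_1)$ with $(\pi_1(i),x_2)\in\icrleq^{\model}$ for all $i$ with $0 \le i < \pthlen(\pi_1)$; because $x_1\icrleq^{\model} x_2$ this says precisely that $\pi_1(i)\in[x_1]$ for all such $i$. Write $x_1' := \pi_1(\pthlen(\pi_1))$. Using $\pi_1$ and $\ell := \pthlen(\pi_1)$ as witnesses one reads off $x_1 \models_{\icrl} \lstothru\chi_{[x_1']}[\chi_{[x_1]}]$, hence also $x_2 \models_{\icrl} \lstothru\chi_{[x_1']}[\chi_{[x_1]}]$, so there are a path $\pi\colon\nats\to X$ and an index $\ell'$ with $\pi(0)=x_2$, $\pi(\ell')\in[x_1']$, and $\pi(j)\in[x_1]$ for all $j$ with $0 \le j < \ell'$. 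If $[x_1']=[x_1]$ the constant path at $x_2$ already witnesses clause~2 (both compatibility requirements are vacuous and $(x_1',x_2)\in\icrleq^{\model}$); otherwise $\pi(\ell')$ differs from every earlier point of $\pi$, so truncating $\pi$ at $\ell'$ -- setting $\pi_2(j):=\pi(\min(j,\ell'))$ -- produces, by Proposition~\ref{prp:FT} (item~\ref{path}) together with extensivity of $\closureF$, a \emph{bounded} path $\pi_2\in\bpthsF(x_2)$ of length exactly $\ell'$ with $\pi_2(\pthlen(\pi_2))=\pi(\ell')\in[x_1']$ and $\pi_2(i)\in[x_1]$ for $0 \le i < \pthlen(\pi_2)$. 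This is exactly clause~2.

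Clause~4 is the mirror image, with $\lsfromthru$ and $\bpthsT$ in place of $\lstothru$ and $\bpthsF$: given $\pi_1\in\bpthsT(x_1)$ with $(\pi_1(i),x_2)\in\icrleq^{\model}$ for all $i$ with $0 < i \le \pthlen(\pi_1)$ -- equivalently $\pi_1(i)\in[x_1]$ for those $i$, which includes the endpoint $\pi_1(\pthlen(\pi_1))=x_1$ -- put $x_1'':=\pi_1(0)$, observe $x_1\models_{\icrl}\lsfromthru\chi_{[x_1'']}[\chi_{[x_1]}]$ via $\pi_1$ and $\ell:=\pthlen(\pi_1)$, transfer this formula to $x_2$, and truncate the witnessing path at the relevant index (again by Proposition~\ref{prp:FT}, item~\ref{path}) to obtain $\pi_2\in\bpthsT(x_2)$ with $\pi_2(0)\in[x_1'']$ and $\pi_2(i)\in[x_1]$ for $0 < i \le \pthlen(\pi_2)$; here the endpoint of the truncated path is automatically $x_2$, so no case distinction is needed. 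This is clause~4, and clauses~3 and~5 follow by symmetry of $\icrleq^{\model}$, completing the proof.

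I expect the main obstacle to be the bookkeeping around boundedness. The existential clause in the semantics of $\lstothru$ / $\lsfromthru$ only delivers a plain function $\pi\colon\nats\to X$ that is a path, together with an index $\ell'$, whereas Definition~\ref{def:CoPabisimilarity} speaks of bounded paths in $\bpthsF$ / $\bpthsT$; one must check that the truncation of $\pi$ at $\ell'$ is again a path -- this is precisely where Proposition~\ref{prp:FT}, item~\ref{path}, together with $A \subseteq \closureF(A)$ and $A\subseteq\closureT(A)$, is used -- and then pin down the length of the truncated path so that its endpoint lies in the intended $\icrleq^{\model}$-class, which is why the case distinction on whether $[x_1]=[x_1']$ enters in clause~2. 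Everything else -- the characteristic-formula construction and the verification that the relevant formula holds at $x_1$ -- is routine once the two index ranges have been observed to coincide with those in Definition~\ref{def:CoPabisimilarity}.
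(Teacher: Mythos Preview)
Your argument is correct and in fact somewhat cleaner than the paper's own proof, but the route is genuinely different. The paper proceeds by contradiction in the style of Theorem~\ref{thm:IIrleqIsPathbis}: assuming no matching $\pi_2$ exists, it splits $\bpthsF(x_2)$ into the set $I$ of paths that fail the intermediate constraint and the set $L$ of paths that fail the endpoint constraint, chooses distinguishing formulas $\Omega^I_\pi$, $\Omega^L_\pi$ for each, and assembles the single formula $\lstothru\bigl(\bigwedge_{\pi\in L}\Omega^L_\pi\bigr)\bigl[\bigwedge_{\pi\in I}\Omega^I_\pi\bigr]$ that is satisfied at $x_1$ but not at $x_2$. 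You instead reuse the characteristic-formula machinery of Definition~\ref{def:deltachi}/Lemma~\ref{lem:deltachi} (as in Theorem~\ref{thm:ImleqIsCMbis}) and argue directly: one formula $\lstothru\chi_{[x_1']}[\chi_{[x_1]}]$ transferred from $x_1$ to $x_2$ already produces the witnessing path. What the paper's decomposition into $I$ and $L$ buys is that the argument never needs the full equivalence class as a definable set; what your approach buys is a shorter, more uniform proof that makes visible why the index ranges in $\lstothru$/$\lsfromthru$ are exactly what clauses~2 and~4 demand.

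Two small remarks on presentation. First, the case distinction on whether $[x_1']=[x_1]$ is not actually needed: even if the truncation $\pi_2(j)=\pi(\min(j,\ell'))$ has length $m<\ell'$, one still has $\pi_2(m)=\pi(\ell')\in[x_1']$ and $\pi_2(i)=\pi(i)\in[x_1]$ for $i<m$, so the clause is satisfied regardless. Second, the phrase ``both compatibility requirements are vacuous'' is slightly inaccurate---only the intermediate-point requirement is vacuous for the length-$0$ path; the endpoint requirement $(x_1',x_2)\in\icrleq^{\model}$ is genuine, and you do verify it. Neither point affects correctness.
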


\begin{corollary}\label{cor:CoPabisEqIcrleq}
For all \qdcm{s}  $\model=(X,\closureF,\peval)$ we have that $\icrleq^{\model}$ coincides with $\copbis^{\model}$.
\qed 
\end{corollary}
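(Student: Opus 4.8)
The corollary states the coincidence of $\icrleq^{\model}$ and $\copbis^{\model}$ for all \qdcm{s}. The plan is simply to combine the two preceding theorems, exactly in the pattern already used for Corollaries~\ref{cor:CMbisEqImleq}, \ref{cor:CMCbisEqImleq}, and \ref{cor:PathbisEqIrleq}. Concretely, one direction is Theorem~\ref{thm:CoPabisIsIcrleq}: every \cop-bisimulation $B$ over $X$ is contained in $\icrleq^{\model}$, so in particular if $x_1 \copbis^{\model} x_2$, witnessed by some \cop-bisimulation $B$ with $(x_1,x_2)\in B$, then $(x_1,x_2)\in\icrleq^{\model}$, i.e. $x_1\icrleq^{\model}x_2$. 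This gives $\copbis^{\model}\,\subseteq\,\icrleq^{\model}$.

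For the other direction one invokes Theorem~\ref{thm:IcrleqIsCoPabis}: $\icrleq^{\model}$ is itself a \cop-bisimulation over $X$. Hence, for any $x_1,x_2$ with $x_1\icrleq^{\model}x_2$, the relation $\icrleq^{\model}$ is a \cop-bisimulation containing the pair $(x_1,x_2)$, so by definition of $\copbis^{\model}$ (Definition~\ref{def:CoPabisimilarity}) we get $x_1\copbis^{\model}x_2$. Here one should note that $\icrleq^{\model}$ is non-empty (it is reflexive, being an equivalence relation on the non-empty set $X$), so it genuinely qualifies as a \cop-bisimulation in the sense of the definition. Putting the two inclusions together yields $\icrleq^{\model}=\copbis^{\model}$.

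There is essentially no obstacle: the corollary is a one-line consequence of Theorems~\ref{thm:CoPabisIsIcrleq} and~\ref{thm:IcrleqIsCoPabis}, and all the real work (setting up \icrl, proving that \cop-bisimulations refine \icrl-equivalence, and proving that \icrl-equivalence satisfies the five transfer conditions of Definition~\ref{def:CoPabisimilarity}) has already been done in those theorems. The only minor point worth stating explicitly in the write-up is the non-emptiness remark above, which is why $\icrleq^{\model}$ counts as an admissible \cop-bisimulation. Accordingly I would present the proof as the single sentence ``Immediate from Theorems~\ref{thm:CoPabisIsIcrleq} and~\ref{thm:IcrleqIsCoPabis},'' consistent with the $\qed$-style treatment the authors already give to the analogous corollaries.
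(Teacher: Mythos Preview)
Your proposal is correct and matches the paper's approach exactly: the corollary is marked with a bare \qed{} and is intended as an immediate consequence of Theorems~\ref{thm:CoPabisIsIcrleq} and~\ref{thm:IcrleqIsCoPabis}, just as the earlier analogous corollaries were. Your added remark on non-emptiness of $\icrleq^{\model}$ is a harmless (and accurate) elaboration that the paper leaves implicit.
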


\section{Conclusions}\label{sec:Conclusions}

In this paper we have studied three main bisimilarities for closure spaces, namely
\cm-bisimilarity, and its specialisation for \qdcm{s} \cm-bisimilarity with converse, \pth-bisimilarity, and \cop-bisimilarity.

\cm-bisimilarity is a generalisation for \cm{s} of classical Topo-bisimilarity for topological spaces.
\cm-bisimilarity with converse takes into consideration the fact that, in \qdcm{s}, there is a notion of ``direction'' given by the binary relation underlying the closure operator. This can be exploited in order to get an equivalence---namely \cm-bisimilarity with converse---that, for \qdcm{s}, refines \cm-bisimilarity. We have shown that 
\cm-bisimilarity with converse coincides with \cl-bisimilarity defined~\cite{Ci+20}. 
Both \cm-bisimilarity and \cm-bisimilarity with converse turn out to be too strong for expressing interesting properties of spaces. To that purpose we introduce \pth-bisimilarity that characterises
unconditional reachability in the space, and a stronger equivalence, \cop-bisimilarity, that expresses a notion of path ``compatibility'' resembling the concept of {\em stuttering} equivalence for transition systems~\cite{BCG88}.

For each notion of bisimilarity we also provide a modal logic that characterises it. We finally address the issue of space minimisation via bisimulation and provide a recipe for \cop-bisimilarity minimisation; minimisation via \cm-bisimilarity with converse has already been dealt with in~\cite{Ci+20} whereas minimisation via \pth-bisimilarity is a special case of that via \cop-bisimilarity (also note that $\lto \, \form \equiv \lstothru \form [\ltrue]$ and, similarly, 
$\lfrom \, \form \equiv \lsfromthru \form [\ltrue]$).

Many results we have shown in this paper concern \qdcm{s;} we think the investigation of their extension to continuous or  general closure spaces is an interesting line of future research. In~\cite{Ci+20} we investigated a coalgebraic view of  \qdcm{s} that was useful for the definition
of the minimisation algorithm for \cl-bisimilarity. It would be interesting to study a similar approach for \pth-bisimilarity and \cop-bisimilarity.

\bibliographystyle{splncs03}
\bibliography{abbr,clmv,xref}

\appendix

\section{Proofs of Results of Section~\ref{sec:Preliminaries}}\label{apx:sec:Preliminaries}
\subsection{Proof of  \propo{} \ref{prp:FT}}\label{prf:prp:FT}
We prove only Point~\ref{path} of the proposition, the proof of the other points being trivial.
We show that $\pi$ is a path over $X$ if and only if, for all  
$i \in (\dom \, \pi) \setminus \SET{0}$, we have $ \pi(i) \in \closureF(\pi(i-1))$.
Suppose $\pi$ is a path over $X$; the following derivation, valid for all $i\in \nats$, proves the assert:\\
$ $\\
\noindent
$
\deriv
\pi(i)
\hint{\in}{Set Theory}
\SET{\pi(i-1),\pi(i)}
\hint{=}{Definition of $\pi(N)$ for $N \subseteq \nats$}
\pi(\SET{i-1,i})
\hint{=}{Definition of $\closure_{\succ}$}
\pi(\closure_{\succ}(\SET{i-1}))
\hint{\subseteq}{Continuity of $\pi$}
\closureF(\pi(i-1))
$

For proving the converse we have to show that for all sets 
$N \subseteq (\dom \, \pi)$ we have 
$\pi(\closure_{\succ}(N)) \subseteq \closureF(\pi(N))$. 
By definition of $\closure_{\succ}$
we have that $\closure_{\succ}(N) = N \cup \ZET{i}{i-1 \in N}$ and so
$\pi(\closure_{\succ}(N)) = \pi(N) \cup \pi(\ZET{i}{i-1 \in N})$. 
By the second axiom of closure, we have $\pi(N) \subseteq \closureF(\pi(N))$.
We show that $\pi(\ZET{i}{i-1 \in N}) \subseteq \closureF(\pi(N))$ as well.
Take any $i$ such that $i-1 \in N$; we have $\SET{\pi(i-1)} \subseteq \pi(N)$ since $i-1 \in N$, and,
by monotonicity of $\closureF$ it follows that 
$\closureF(\SET{\pi(i-1)}) \subseteq \closureF(\pi(N))$ and since 
$\pi(i) \in \closureF(\pi(i-1))$ by hypothesis, we also get 
$\pi(i) \in\closureF(\pi(N))$. Since this holds for all elements of
the set $\ZET{i}{i-1 \in N}$ we also have 
$\pi(\ZET{i}{i-1 \in N}) \subseteq \closureF(\pi(N))$.

The proof for $ \pi(i-1) \in \closureT(\pi(i))$ is similar.

\section{Proofs of Results of Section~\ref{sec:CMbisimilarity}}\label{apx:sec:CMbisimilarity}

\subsection{Proof of \propo{} \ref{prp:HomeoImplCMbis}}\label{prf:prp:HomeoImplCMbis}
We show that $\homeo$ is a \cm-bisimulation. Suppose, \wlg, that $x_2 = h(x_1)$ for some homeomorphism $h: X \to X$.
Condition 1 of Definition~\ref{def:CMbisimilarity} is trivially satisfied due to Condition 1
of Definition~\ref{def:Homeo}. For what concerns Condition 2 of Definition~\ref{def:CMbisimilarity}, let $S_1$ a neighbourhood of $x_1$. Define
$S_2$ as $S_2=h(S_1)$. We have 
$
x_2 = h(x_1) \in h(\interior(S_1))=\interior(h(S_1))=\interior(S_2)
$, where in the one but last step we exploited Condition 3 of Definition~\ref{def:Homeo}.
Now we can easily see that  Condition 2 of Definition~\ref{def:CMbisimilarity} is satisfied since, by definition of $S_2$, for all $s_2 \in S_2$ there exists $s_1=h^{-1}(s_2) \in S_1$ such that $s_2=h(s_1)$, i.e. $s_1 \homeo s_2$. The proof for Condition 3 of  
Definition~\ref{def:CMbisimilarity} is similar.\\

\begin{remark}\label{rem:prp:HomeoImplCMbis}
The converse of \propo{} \ref{prp:HomeoImplCMbis} does not hold, as shown in Figure~\ref{fig:CMbisNoImplHomeo} where 
$
\invpeval(x_{11})=\invpeval(x_{21})=\SET{r}\not=\SET{b}=\invpeval(x_{12})=\invpeval(x_{22})=\invpeval(x_{23})
$
and $x_{11}  \cmbis x_{21}$ but $x_{11}  \not\homeo x_{21}$. In fact, any non-trivial homeomorphism $h$ should map $x_{11}$ to $x_{21}$ (or viceversa), and any
of $x_{12}$, $x_{22}$ and $x_{23}$ to any of $x_{12}$, $x_{22}$ and $x_{23}$,
otherwise Condition 1 of Definition~\ref{def:Homeo} would be violated. In addition, 
in order not to violate injectivity, $h$ hould be a permutation over 
$\SET{x_{12}, x_{22}, x_{23}}$. Let us suppose, \wlg, 
$h(x_{11})=x_{21}$ and $h(x_{12})=x_{22}$. Then we would get
$
h(\closure(\SET{x_{11}}))= h(\SET{x_{11}, x_{12}}) = \SET{x_{21},x_{22}}\not=
\SET{x_{21},x_{22},x_{23}}=\closure(\SET{x_{21}})=\closure(h(\SET{x_{11}}))
$, violating (the equivalent of) Condition 3 of Definition~\ref{def:Homeo}.
\end{remark}

\subsection{Proof of \theo{} \ref{thm:CMbisIsImleq}}\label{prf:thm:CMbisIsImleq}

We proceed by induction on the structure of $\form$ and 
consider only the case  $\form = \lnear \form'$, the others being trivial.
Suppose $B$ is a \cm-Bisimulation,  $(x_1,x_2) \in B$ and, \wlg, 
$\model,x_1 \not\models \lnear \form'$ and $\model,x_2 \models \lnear \form'$, that is
$x_2 \in \closure(\sem{\form'})$ and
$
x_1 \in \overline{\closure(\sem{\form'})}=
\overline{\overline{\interior (\overline{\sem{\form'}})}}=
\interior(\overline{\sem{\form'}})
$.

Let $S_1=\overline{\sem{\form'}}$ and, by $x_1 \in \interior(\overline{\sem{\form'}})$,
let $S_2$ be chosen according to Definition~\ref{def:CMbisimilarity}, with $x_2 \in \interior(S_2)$. By Lemma~\ref{lem:IntIfClImInt} below, we have $\sem{\form'}\cap S_2 \not=\emptyset$, since $x_2 \in \closure(\sem{\form'})  \cap \interior(S_2)$. Let thus $s_2$ belong to $\sem{\form'} \cap S_2$ and since $B$ is a \cm-Bisimulation, there exists $s_1 \in S_1$ such that  $(s_1,s_2) \in B$ (Condition 2 of Definition~\ref{def:CMbisimilarity}), with $s_2 \in \sem{\form'}$---by definition of $s_2$. By the induction hypothesis, since $\model, s_2 \models \form'$ and $(s_1,s_2) \in B$ we get
$\model, s_1 \models \form'$ which contradicts $s_1\in \overline{\sem{\form'}} = S_1$.

\begin{lemma}\label{lem:IntIfClImInt}
For all \cm{s} $\model=(X,\closure,\peval)$, for all  $Y,Z \subseteq X$ the following holds: if 
$\closure(Y) \cap \interior(Z) \not=\emptyset$ then $Y \cap Z \not=\emptyset$.
\end{lemma}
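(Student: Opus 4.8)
The statement is a purely order-theoretic consequence of monotonicity of $\closure$ together with the defining equation $\interior(Z) = \overline{\closure(\overline{Z})}$, so I would prove it by contraposition. Assume $Y \cap Z = \emptyset$; the goal becomes to show $\closure(Y) \cap \interior(Z) = \emptyset$.

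First I would rewrite $Y \cap Z = \emptyset$ as the inclusion $Y \subseteq \overline{Z}$. Then, invoking the monotonicity of the closure operator recalled in Section~\ref{sec:Preliminaries}, I get $\closure(Y) \subseteq \closure(\overline{Z})$. Taking complements reverses the inclusion, so $\overline{\closure(\overline{Z})} \subseteq \overline{\closure(Y)}$, which by the definition of the interior operator is exactly $\interior(Z) \subseteq \overline{\closure(Y)}$. Intersecting both sides with $\closure(Y)$ yields $\closure(Y) \cap \interior(Z) \subseteq \closure(Y) \cap \overline{\closure(Y)} = \emptyset$, as desired. Reading the chain of implications backwards gives the contrapositive, i.e. the statement of the lemma.

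I do not expect any real obstacle here: the only things used are the second axiom is not even needed, just monotonicity of $\closure$ (Definition~\ref{def:ClosureSpace} suffices to derive it) and the duality $\interior(A) = \overline{\closure(\overline{A})}$. The sole point requiring a little care is the bookkeeping with complements — making sure the inclusion is correctly reversed when passing from $\closure(Y) \subseteq \closure(\overline{Z})$ to $\interior(Z) \subseteq \overline{\closure(Y)}$ — but this is elementary set theory. No case analysis, no appeal to quasi-discreteness or to any structure of $\model$ beyond it being a \cs{,} is required.
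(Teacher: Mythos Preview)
Your proposal is correct and follows essentially the same argument as the paper's proof: both proceed by contraposition, rewrite $Y\cap Z=\emptyset$ as $Y\subseteq\overline{Z}$, apply monotonicity of $\closure$, take complements to obtain $\interior(Z)\subseteq\overline{\closure(Y)}$, and conclude that $\closure(Y)\cap\interior(Z)=\emptyset$. The steps and their order are identical.
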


\begin{proof}
We prove that $Y \cap Z =\emptyset$ implies $\interior(Z) \cap  \closure(Y) = \emptyset$.
Suppose $Y \cap Z =\emptyset$. 
Then $Y \subseteq  \overline{Z}$, 
and so  $\closure(Y) \subseteq  \closure(\overline{Z})$, 
that is $\overline{\closure(\overline{Z})} \subseteq \overline{\closure(Y)}$.
So $\interior(Z) \subseteq \overline{\closure(Y)}$, 
that is $\interior(Z) \cap  \closure(Y) = \emptyset$.
This proves the assert.
\end{proof}

\subsection{Proof of \theo{} \ref{thm:ImleqIsCMbis}}\label{prf:thm:ImleqIsCMbis}

The following proof has been inspired by the proof of an analogous theorem 
in~\cite{vB+17}. We first need a preliminary definition:

\begin{definition}\label{def:deltachi}
Given \cm{}  $\model=(X,\closure,\peval)$, for all $x_1,x_2 \in X$, 
let formula $\delta_{x_1,x_2}$ be defined as follows: if $x_1 \imleq x_2$, then set 
$\delta_{x_1,x_2}$ to $\ltrue$; otherwise, choose a formula, say $\Phi_{x_1,x_2}$, such that $\model,x_1 \models \Phi_{x_1,x_2}$ and
$\model, x_2 \models \lneg\Phi_{x_1,x_2}$ and set $\delta_{x_1,x_2}$ to $\Phi_{x_1,x_2}$. For all $x_1 \in X_1$, define
$\chi_{x_1}$ as follows: $\chi_{x_1}= \liand_{x_2 \in X} \delta_{x_1,x_2}$.
\closedefi
\end{definition}
We now prove the following auxiliary lemmas:

\begin{lemma}\label{lem:deltachi}
For all \cm{s} $\model=(X,\closure,\peval)$, $x, x_1$ and $x_2 \in X$, the following holds:
\begin{enumerate}
\item\label{lem:deltachi:1}  $\model,x \models \chi_{x}$;
\item\label{lem:deltachi:2}  $\model,x_2 \models \chi_{x_1}$ if and only if $x_1 \imleq x_2$.
\end{enumerate}
\end{lemma}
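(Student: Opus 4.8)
The plan is to prove the two items of Lemma~\ref{lem:deltachi} in order, using the definition of $\delta_{x_1,x_2}$ and $\chi_{x_1}$ from Definition~\ref{def:deltachi}, together with the obvious fact that $\imleq$ is an equivalence relation and that $\liand_{i\in I}\form_i$ is satisfied exactly when every conjunct is.

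\textbf{Proof of item~\ref{lem:deltachi:1}.} First I would unfold $\chi_x = \liand_{x_2\in X}\delta_{x,x_2}$; by the semantics of $\liand$ it suffices to show $\model,x\models\delta_{x,x_2}$ for every $x_2\in X$. Fix such an $x_2$. If $x\imleq x_2$, then $\delta_{x,x_2}=\ltrue$, which holds at every point. If $x\not\imleq x_2$, then $\delta_{x,x_2}=\Phi_{x,x_2}$ was chosen precisely so that $\model,x\models\Phi_{x,x_2}$ (and $\model,x_2\models\lneg\Phi_{x,x_2}$), so again $\model,x\models\delta_{x,x_2}$. Hence $\model,x\models\chi_x$.

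\textbf{Proof of item~\ref{lem:deltachi:2}.} For the ``if'' direction, assume $x_1\imleq x_2$. By item~\ref{lem:deltachi:1} we have $\model,x_1\models\chi_{x_1}$, and since $\chi_{x_1}$ is an \iml{} formula and $x_1\imleq x_2$ means $x_1$ and $x_2$ satisfy exactly the same \iml{} formulas, it follows that $\model,x_2\models\chi_{x_1}$. For the ``only if'' direction, assume $\model,x_2\models\chi_{x_1}$ but, towards a contradiction, $x_1\not\imleq x_2$. Then $\delta_{x_1,x_2}=\Phi_{x_1,x_2}$ with $\model,x_2\models\lneg\Phi_{x_1,x_2}$, i.e. $\model,x_2\not\models\delta_{x_1,x_2}$. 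But $\delta_{x_1,x_2}$ is one of the conjuncts of $\chi_{x_1}=\liand_{x\in X}\delta_{x_1,x}$ (taking $x=x_2$), so $\model,x_2\models\chi_{x_1}$ forces $\model,x_2\models\delta_{x_1,x_2}$, a contradiction. Hence $x_1\imleq x_2$.

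I do not expect any real obstacle here: the lemma is essentially bookkeeping about the ``characteristic formula'' $\chi_{x_1}$, and the only subtlety is making sure the index set $X$ is available for the infinite conjunction, which is exactly why \iml{} is taken to be \emph{infinitary}. The one point to state carefully is that $\chi_{x_1}$ is itself a legitimate \iml{} formula (a possibly infinite conjunction of \iml{} formulas), so that ``$x_1\imleq x_2$'' may legitimately be applied to it in the ``if'' direction; this is immediate from the grammar in Definition~\ref{def:Iml}. The heavier work — using these formulas to verify the back-and-forth conditions of \cm-bisimilarity — will come in the main body of the proof of Theorem~\ref{thm:ImleqIsCMbis}, not in this lemma.
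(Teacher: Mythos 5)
Your proof is correct and follows the only natural route; the paper itself simply states that the lemma ``follows directly from the relevant definitions,'' and your write-up is exactly the spelling-out of that observation (unfolding $\chi_{x_1}$ as the conjunction of the $\delta_{x_1,\cdot}$ and case-splitting on whether $x_1 \imleq x_2$). No gaps.
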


\begin{proof}
The assert follows directly from the relevant definitions.
\end{proof}

\begin{lemma}\label{lem:SsubOR}
For all \cm{s} $\model=(X,\closure,\peval)$ and $S \subseteq X$, the following holds:
$S \subseteq \sem{\lior{s\in S} \chi_{s}}$.
\end{lemma}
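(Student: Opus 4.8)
The plan is to unwind the definitions directly; the statement is an immediate consequence of the first item of Lemma~\ref{lem:deltachi}. Recall first that infinitary disjunction is not a primitive connective of \iml{} but the De Morgan dual of $\liand$, so that for any family $\{\form_i\}_{i\in I}$ one has $\model,x \models \lior_{i\in I}\form_i$ if and only if $\model,x \models \form_i$ for at least one $i\in I$. Equivalently, $\sem{\lior_{i\in I}\form_i}^{\model} = \bigcup_{i\in I}\sem{\form_i}^{\model}$.

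With this in hand, I would fix an arbitrary point $x \in S$ and argue that $x \in \sem{\lior_{s\in S}\chi_s}^{\model}$. By Lemma~\ref{lem:deltachi}.\ref{lem:deltachi:1} we have $\model,x \models \chi_x$. Since $x \in S$, the formula $\chi_x$ is one of the disjuncts of $\lior_{s\in S}\chi_s$, so by the semantics of $\lior$ recalled above we conclude $\model,x \models \lior_{s\in S}\chi_s$, that is, $x \in \sem{\lior_{s\in S}\chi_s}^{\model}$. As $x$ was an arbitrary element of $S$, this yields $S \subseteq \sem{\lior_{s\in S}\chi_s}^{\model}$, as required.

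I do not expect any genuine obstacle here: all the real work sits in Lemma~\ref{lem:deltachi} (itself a direct consequence of Definition~\ref{def:deltachi}), and the remaining step is purely a matter of reading off the interpretation of the derived connective $\lior$. The only point deserving explicit mention is precisely that $\lior$ is derived from $\liand$ and $\lneg$, so that its semantics must be spelled out once before the one-line argument goes through.
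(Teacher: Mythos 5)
Your proof is correct and follows essentially the same route as the paper's: fix a point of $S$, apply Lemma~\ref{lem:deltachi}(\ref{lem:deltachi:1}) to get $\model,x\models\chi_x$, and conclude via the identity $\sem{\lior_{s\in S}\chi_s}=\bigcup_{s\in S}\sem{\chi_s}$. Your explicit remark that $\lior$ is a derived connective whose semantics must be read off from $\liand$ and $\lneg$ is a minor clarification the paper leaves implicit, but it does not change the argument.
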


\begin{proof}
$ $\\
\noindent
$
\deriv
y \in S
\hint{\Rightarrow}{$S \subseteq X$ and Lemma~\ref{lem:deltachi}(\ref{lem:deltachi:1})}
\model,y \models \chi_y
\hint{\Leftrightarrow}{Definition of $\sem{\cdot}$}
y\in \sem{\chi_y}
\hint{\Rightarrow}{$y\in S$}
y\in \bigcup_{s \in S}\sem{\chi_{s}}
\hint{\Leftrightarrow}{$\bigcup_{s \in S}\sem{\chi_{s}}=
\sem{\bigvee_{s \in S}\chi_{s}}$}
y\in\sem{\bigvee_{s \in S}\chi_{s}}
$
\end{proof}

We now proceed with the proof of the theorem.
Assume $x_1 \imleq x_2$. Condition 1 of Definition~\ref{def:CMbisimilarity} is trivially satisfied. Let us consider Condition 2. Let $S_1 \subseteq X$ be any set such that 
$x_1 \in \interior(S_1)$. Since $x_1 \in \interior(S_1)$ and, by Lemma~\ref{lem:SsubOR} below, 
$S_1 \subseteq \sem{\lior_{s_1\in S_1} \chi_{s_1}}$, by monotonicity of $\interior$ we get 
$x_1 \in\interior(\sem{\lior_{s_1\in S_1} \chi_{s_1}})= 
\overline{\closure(\overline{\sem{\lior_{s_1\in S_1} \chi_{s_1}}})}$.
This means that $x_1 \not\in \closure(\overline{\sem{\lior_{s_1\in S_1} \chi_{s_1}}})$, and so we get 
$\model, x_1 \not\models \lnear (\lneg \lior_{s_1\in S_1} \chi_{s_1})$.
Thus we have
$\model, x_1 \models \lneg \lnear (\lneg \lior_{s_1\in S_1} \chi_{s_1})$.
Since $x_1 \imleq x_2$, we have that also
$\model, x_2 \models \lneg \lnear (\lneg \lior_{s_1\in S_1} \chi_{s_1})$ holds, which means that  $x_2 \in \interior(\sem{\bigvee_{s_1\in S_1} \chi_{s_1}})$. 
Take now $S_2 = \sem{\lior_{s_1\in S_1} \chi_{s_1}}$. 
Let $s_2$ be any element of $S_2$. By definition of $S_2$ there exists $s_1 \in S_1$ such that $\model, s_2 \models \chi_{s_1}$, that means, by Lemma~\ref{lem:deltachi}(\ref{lem:deltachi:2}), $s_1 \imleq s_2$. The proof for Condition 3 is similar, using symmetry.

\section{Proofs of Results of Section~\ref{sec:CMCbisimilarity}}\label{apx:sec:CMCbisimilarity}

\begin{remark}\label{rem:prp:CMCbisImpliesCMbis}
The converse of \propo{}~\ref{prp:CMCbisImpliesCMbis} does not hold
as shown in Figure~\ref{fig:CMbisNoImplCMCbis} where 
$\invpeval(u_{11})=\invpeval(u_{21})=\SET{r}, \invpeval(u_{12})=\invpeval(u_{22})=\SET{g}$ and $\invpeval(u_{13})=\SET{b,g}$.
It is easy to see that $\SET{(u_{11},u_{21}),(u_{12},u_{22})}$ is a \cm-bisimulation whereas there is no \cmc-bisimulation $B$ containing $(u_{11},u_{21})$; in fact,
any such relation should satisfy Condition (4) of Definition~\ref{def:CMCbisimilarity} for $S_1=\SET{u_{11},u_{12}}$, for which there is only one $S_2$ with $u_{21} \in \interiorT(S_2)$, namely $\SET{u_{21}, u_{22}}$, and this would require 
$(u_{11},u_{22}) \in B$ or $(u_{12},u_{22}) \in B$. But $(u_{11},u_{22}) \in B$ cannot hold
because $\invpeval(u_{11})=\SET{r}\not=\SET{g}=\invpeval(u_{22})$, which would violate Condition 1 of Definition~\ref{def:CMCbisimilarity}. Also  $(u_{12},u_{22}) \in B$ cannot hold because Condition 5 would be violated: take $S_2 =\SET{u_{22}}$ and consider all
sets $S$ such that $u_{12} \in \interiorT(S)$. Any such $S$ would necessarily contain also $u_{13}$ and there is no $s_2\in \SET{u_{22}} = S_2$ such that $(u_{13},s_2)\in B$ and this is because $\peval(b)=\SET{u_{13}}$.
\end{remark}

\subsection{Proof of \propo{} \ref{prp:CMCbisImpliesTraceEq}}\label{prf:prp:CMCbisImpliesTraceEq}
In the sequel, we also exploit the fact that $x_1\, \cmcbis \, x_2$ if and only if
with $x_1\, \clbis \, x_2$ (see Definition~\ref{def:Clbisimilarity} and Corollary~\ref{cor:ClbisEqCMCbis}).
By $x_1\, \clbis \, x_2$ we know there exists \cl-bisimulation $B$ such that 
$(x_1,x_2)\in B$, which implies that $\invpeval(x_1)=\invpeval(x_2)$ by 
Condition 1 of Definition~\ref{def:Clbisimilarity}. 
Let $\theta$ be any element of $\T(\bpthsF(x_1))$ and 
$\pi_1 \in \bpthsF(x_1)$ such that $\theta_1 = \T(\pi_1)$, with $\pthlen(\pi_1)=n$. 
By the \propo{}~\ref{prp:FT}(\ref{path})  we know
that $\pi_1(i) \in \, \closureF(\pi_1(i-1))$ for $i=1,\ldots,n$. We build path $\pi_2 \in \bpthsF(x_2)$ as follows: we let $\pi_2(0) = x_2$;
since $(x_1,x_2)\in B$ and $\pi_1(1) \in \, \closureF(\pi_1(0))$,
we know that  there is an element, say $\eta_1 \in \, \closureF(\pi_2(0))$ 
such that $(\pi_1(1),\eta_1)\in B$: we let $\pi_2(1) = \eta_1$, observing that
$\invpeval(\pi_1(1))=\invpeval(\pi_2(1))$ since $(\pi_1(1), \pi_2(1))\in B$.
With similar reasoning, exploiting \propo{}~\ref{prp:FT}(\ref{path}), we define
$\pi_2(i)$ for $i=2,\ldots,n$ and we let $\dom \, \pi_2 = \SET{0,\ldots n} =\dom \, \pi_1$.
\propo{}~\ref{prp:FT}(\ref{path})  ensures that $\pi_2$ is continuous and so it is a path
from $x_2$ and $(\pi_1(i),\pi_2(i)) \in B$ for $i=0,\ldots,n$, so that 
$\T(\pi_2) = \theta$.
The proof for the other cases is similar, using \propo{}~\ref{prp:FT}(\ref{path}).

\begin{remark}\label{rem:prp:CMCbisImpliesTraceEq}
The converse of \propo{}~\ref{prp:CMCbisImpliesTraceEq} does not hold
as shown in Figure~\ref{fig:TraceEqNoImplCMCbis} where 
$
\invpeval(y_{11})=\invpeval(y_{12})=\invpeval(y_{21})=\invpeval(y_{22})=\invpeval(y_{24})=\SET{r}\not=\SET{b}=\invpeval(y_{13})=\invpeval(y_{23})
$
and $y_{11}  \treq y_{21}$ but $y_{11}  \not\cmcbis y_{21}$.
In fact, recalling again that $\cmcbis$ coincides with $\clbis$
(see Definition~\ref{def:Clbisimilarity} and Corollary~\ref{cor:ClbisEqCMCbis}),
we note that there cannot be any \cl-bisimulation containing 
$(y_{11},y_{21})$ and this is because $y_{24} \in \, \closureF(y_{21})$, with 
$\closureF(y_{24}) =\emptyset$ and $\closureF(y_{11}) =\SET{y_{11},y_{12}}$
and both $\closureF(y_{11})\not=\emptyset$ and $\closureF(y_{12})\not=\emptyset$.
\end{remark}

\subsection{Proof of \theo{} \ref{thm:CMCbisIsImlceq}}\label{prf:thm:CMCbisIsImlceq}
The proof can be carried out by induction on the structure of $\form$. The only interesting cases
are those for $\lnearF$ and $\lnearT$.
The proof for  $\lnearF$ is exactly the same as the proof of Theorem~\ref{thm:CMbisIsImleq} where $B$ is now a \cmc-bisimulation and
$\lnearF, \closureF$, $\interiorF$ and Condition 2 of Definition~\ref{def:CMCbisimilarity}
are used instead of 
$\lnear, \closure{}$, $\interior{}$ and Condition 2 of Definition~\ref{def:CMbisimilarity}.
The proof for  $\lnearT$ is again the same as the proof of Theorem~\ref{thm:CMbisIsImleq} where $B$ is a \cmc-bisimulation and
$\lnearT, \closureT$, $\interiorT$ and Condition 4 of Definition~\ref{def:CMCbisimilarity}
are used instead of 
$\lnear, \closure{}$, $\interior{}$ and Condition 2 of Definition~\ref{def:CMbisimilarity}.

\subsection{Proof of \theo{} \ref{thm:ImlceqIsCMCbis}}\label{prf:thm:ImlceqIsCMCbis}

The proof is exactly the same as the proof of Theorem~\ref{thm:ImleqIsCMbis} where
$\imlceq$ is considered instead of $\imleq$ and,
when proving that the requirements concerning Condition 2 of Definition~\ref{def:CMCbisimilarity} are fulfilled, 
$\lnearF, \closureF$, $\interiorF$ and Condition 2 of Definition~\ref{def:CMCbisimilarity}
are used instead of 
$\lnear, \closure{}$, $\interior{}$ and Condition 2 of Definition~\ref{def:CMbisimilarity}, while for proving that the requirements concerning Condition 4 of Definition~\ref{def:CMCbisimilarity} are fulfilled, 
$\lnearT, \closureT$, $\interiorT$ and Condition 4 of Definition~\ref{def:CMCbisimilarity}
are used instead of 
$\lnear, \closure{}$, $\interior{}$ and Condition 2 of Definition~\ref{def:CMbisimilarity}.
The proof for Condition 3 (Condition 5, respectively) is similar to that of
Condition 2  (Condition 4, respectively), using symmetry.

\subsection{Proof of \theo{} \ref{thm:ClbisIsIslcseq}}\label{prf:thm:ClbisIsIslcseq}
We proceed by induction on the structure of $\form$ and 
consider only the case  $\ltothru \form_1 [\form_2]$, 
the case for $\lfromthru \form_1 [\form_2]$ being similar, and the others being trivial.
Suppose $B$ is a \cl-bisimulation,  $(x_1,x_2) \in B$ and  
$\model,x_1 \models \ltothru \form_1 [\form_2]$. This means that there exist
path $\pi$ and index $\ell$ such that $\pi(0)=x_1$, $\model, \pi(\ell) \models \form_1$
and for all $j \in \ZET{n \in \nats}{0 < n < \ell}$ we have $\pi(j) \models \form_2$.
We define $\pi_1$ as $\pi_1(j)=\pi(j)$ for $j \in \ZET{n \in \nats}{0 \leq n < \ell}$
and $\pi_1(j)=\pi(\ell)$ for $\ell \leq j$.

We build $\pi_2$, such that $\pthlen \,\pi_2=\pthlen \,\pi_1$, as follows. We let 
$\pi_2(0)=x_2$. By Proposition~\ref{prp:FT}(\ref{path}), we know that
$\pi_1(1) \in \closureF(\pi_1(0))$, and since $(\pi_1(0),\pi_2(0))=(x_1,x_2) \in B$
and $B$ is a \cl-bisimulation, we also know that there exists 
$\eta\in \closureF(\pi_2(0))$ such that $(\pi_1(1),\eta) \in B$. 
We let  $\pi_2(1)=\eta$ and we proceed in  a similar way for defining 
$\pi_2(j)\in\closureF(\pi_2(j -1))$ for all $j \leq (\pthlen \, \pi_2) =\ell$, exploiting
Proposition~\ref{prp:FT}(\ref{path}). Again by
Proposition~\ref{prp:FT}(\ref{path}), function $\pi_2$ is continuous and so it is a path.
In addition, since, for all $j \in \ZET{n \in \nats}{0 < n < \ell}$, by hypothesis and construction
we have $\pi_1(j) \models \form_2$ and $(\pi_1(j),\pi_2(j)) \in B$, by
the Induction Hypothesis, we also get $\pi_2(j) \models \form_2$. Similarly,
we get that $\pi_2(\ell) \models \form_1$ since $\pi_1(\ell) \models \form_1$
and $(\pi_1(\ell),\pi_2(\ell)) \in B$. Thus we have that 
$\model,x_2 \models \ltothru \form_1 [\form_2]$ since there is a path $\pi_2$
and index $\ell$ such that $\pi_2(0)=x_2$, $\model, \pi_2(\ell) \models \form_1$
and for all $j \in \ZET{n \in \nats}{0 < n < \ell}$ we have $\pi_2(j) \models \form_2$.

\subsection{Proof of \theo{} \ref{thm:IslcseqIsClbis}}\label{prf:thm:IslcseqIsClbis}
We have to show that Conditions 1-5 of Definition~\ref{def:Clbisimilarity} are satisfied.
We consider only Condition 2, since the proofs for Conditions 3-5 is similar and
Condition 1 is trivially satisfied if $(x_1, x_2) \in \islcseq$. Suppose there exists
$x_1' \in \closureF(\SET{x_1})$ such that $(x_1',x_2')\not\in \islcseq$ for all
$x_2' \in \closureF(\SET{x_2})$. Note that $x_1'\not=x_1$ because $x_2 \in \closureF(\SET{x_2})$ and $(x_1,x_2) \in \islcseq$. Since $(x_1',x_2')\not\in \islcseq$ 
for all $x_2' \in \closureF(\SET{x_2})$, we know that, for each such $x_2'$,
there is a formula $\Phi_{{x_2'}}$ such that, \wlg,
$\model, x_1' \models \Phi_{{x_2'}}$ and $\model, x_2' \not\models \Phi_{{x_2'}}$,
by definition of $\islcseq$. Clearly, we also have
$\model, x_1' \models \liand_{x_2' \in \closureF(\SET{x_2})}\Phi_{{x_2'}}$ and
$\model, x_2' \not\models \liand_{x_2' \in \closureF(\SET{x_2})}\Phi_{{x_2'}}$.
But this brings to 
$\model, x_1 \models \ltothru (\liand_{x_2' \in \closureF(\SET{x_2})}\Phi_{{x_2'}})[\lfalse]$ and
$\model, x_2 \not\models \ltothru (\liand_{x_2' \in \closureF(\SET{x_2})}\Phi_{{x_2'}})[\lfalse]$,
which contradicts  $x_1\, \islcseq \, x_2$.

\subsection{Proof of \lm{} \ref{lem:NearEqRhoFalse}}\label{prf:lem:NearEqRhoFalse}
We prove that $\lnearF \form\equiv  \lfromthru \form[\lfalse]$, the proof for
$\lnearT \form\equiv  \ltothru \form[\lfalse]$ being similar.
We recall that $\model,x \models \lnearF \, \form \Leftrightarrow x \in 
\closureF{\sem{\form}^{\model}}$.
If $\sem{\form}^{\model}= \emptyset$, i.e. if $\form \equiv \lfalse$ then the proposition holds trivially.
So, assume $\sem{\form}^{\model} \not= \emptyset$.
Suppose $\model,x \models \lnearF \form$. We have two cases:\\
{\bf Case 1}: $\model,x \models \form$\\
In this case, take $\pi$ such that $\pi(i)=x$ for all $i \in \nats$. So, there is a path, $\pi$ as above,
such that $\pi(\ell)=x$, for $\ell=0$, $\model,\pi(0) \models \form$, and there is no 
$j\in \nats$ ssuch that $0 < j < 0$; therefore 
$\model,x \models \lfromthru \form[\lfalse]$.\\
\noindent
{\bf Case 2}: $\model,x \not\models \form$\\
In this case, we know  
$x\in \closureF(\sem{\form}^{\model})\setminus \sem{\form}^{\model}) \not=\emptyset$  by definition of $\lnearF \form$
and by hypothesis.
Since, by hypothesis, $\sem{\form}^{\model} \not= \emptyset$, $x \in \closureF(\sem{\form}^{\model})\setminus \sem{\form}^{\model} \not=\emptyset$, and $\closureF(\sem{\form}^{\model})= \cup_{x'\in \sem{\form}^{\model}} \closureF(\SET{x'})$, then there exists $x'\not=x$ with $x' \in \sem{\form}^{\model}$ and $x\in \closureF(\SET{x'})$.
Let $\pi$ be defined as follows 
$\pi(0)=x'$ and $\pi(j)=x$ for all $j\in \nats$ s.t. $j\geq1$;
by Proposition~\ref{prp:FT}(\ref{path}) $\pi$ is a path and so we get 
$\model, x \models \lfromthru \form[\lfalse]$ by definition of $\lfromthru$.

For the the proof of the converse, let us assume 
$\model,x \models \lfromthru \form[\lfalse]$.
This means there exists $\pi$ and $\ell$ such that $\pi(\ell)=x$, $\model, \pi(0) \models \form$
and for all $j \in \nats$ s.t. $0 < j < \ell$ it holds $\model, \pi(j) \models \lfalse$;
obviously there cannot be any such a $j$, which implies that there are only two cases:\\
{\bf Case 1}: $\ell=0$\\
In this case we have $\model, x \models \form$, which implies $x \in \sem{\form}^{\model}$, and thus
$x \in \closureF(\sem{\form}^{\model})$, so that $\model, x \models \lnearF \form$\\
{\bf Case 2}: $\ell=1$\\
From continuity of $\pi$, we get that $x \in \closureF(\SET{\pi(0)})$, as follows:\\\\
\noindent
$
\deriv
x
\hint{=}{By hypothesis}
\pi(1)
\hint{\in}{Set theory}
\SET{\pi(0),\pi(1)}
\hint{=}{Algebra}
\pi(\SET{0,1})
\hint{=}{Definition of $\closureF_{\succ}$}
\pi(\closureF_{\succ}(\SET{0}))
\hint{\subseteq}{Continuity of $\pi$}
\closureF(\pi(\SET{0}))
\hint{=}{Algebra}
\closureF(\SET{\pi(0)})
$\\
So, by monotonicity of $\closureF$, since $\pi(0) \in \sem{\form}^{\model}$, we have $x \in \closureF(\sem{\form}^{\model})$,
that is $\model, x \models \lnearF\form$.

\subsection{Proof of \theo{}\ref{thm:ClbisEqImlceq}}\label{prf:thm:ClbisEqImlceq}
The proof that $x_1 \, \clbis \, x_2$ implies $x_1 \, \imlceq \, x_2$ follows directly
from Theorem~\ref{thm:ClbisIsIslcseq} and  Lemma~\ref{lem:NearEqRhoFalse}.
The proof that $x_1 \, \imlceq \, x_2$ implies $x_1 \, \clbis \, x_2$ is exactly the same 
as that of Theorem~\ref{thm:IslcseqIsClbis} where, $\lnearT (\liand_{x_2' \in \closureF(\SET{x_2})}\Phi_{{x_2'}})$ is used instead of 
$\ltothru (\liand_{x_2' \in \closureF(\SET{x_2})}\Phi_{{x_2'}}) [\lfalse]$ and 
similarly for  $\lnearF$ and  $\lfromthru$. 

\subsection{$\lsurr$ and  $\lprop$ as derived operators}\label{apx:prp:SurrAndPropDerived}

The surrounded and the propagation operators of~\cite{Ci+16} can be derived from the reachability ones $\ltothru$ and $\lfromthru$,
noting that the proposition below is not restricted for \qdcm{} but it holds for {\em general} \cm{.}  We first recall the definition of the surrounded and of the propagation operators as given in~\cite{Ci+16}:

\begin{definition}\label{def:SurrAndProp}
The satisfaction relation for (general) \cm{s}  $\model$, $x \in \model$, and \slcs{} formulas 
$\form_1 \, \lsurr \, \form_2$ and $\form_1 \, \lprop \, \form_2$ is defined recursively on the structure of $\form$ as follows:
$$
\begin{array}{r c l c l c l}
\model,x  & \models_{\slcs} & \form_1 \, \lsurr \, \form_2 & \Leftrightarrow &
\model, x \models_{\slcs} \form_1 \mbox{ and }\\
&&&&\mbox{for all  paths } \pi \mbox{ and indexes } \ell \mbox{ the following holds:}\\
&&&&\mbox{\hspace{0.1in}} \pi(0) = x \mbox{ and } \pi(\ell) \models_{\slcs} \lneg\form_1\\
&&&&\mbox{\hspace{0.1in}} \mbox{implies}\\
&&&&\mbox{\hspace{0.1in}} \mbox{there exists index } j \mbox{ such that:}\\
&&&&\mbox{\hspace{0.2in}} 0 < j \leq \ell \mbox{  and }
\pi(j) \models_{\slcs} \form_2;\\
\model,x  & \models_{\slcs} & \form_1 \, \lprop  \, \form_2 & \Leftrightarrow &
\model, x \models_{\slcs} \form_2 \mbox{ and }\\
&&&&\mbox{there exist  path } \pi \mbox{ and index } \ell \mbox{ such that }\\
&&&&\mbox{\hspace{0.1in}} \pi(\ell) = x \mbox{ and }\\
&&&&\mbox{\hspace{0.1in}} \pi(0) \models_{\slcs} \form_1 \mbox{ and }\\
&&&&\mbox{\hspace{0.1in}} \mbox{for all } j \mbox{ such that } 0 < j <  \ell \mbox{ the following holds:}\\
&&&&\mbox{\hspace{0.2in}} \pi(j) \models_{\slcs} \form_2.
\end{array}
$$
\closedefi
\end{definition}

\begin{proposition}\label{prp:SurrAndPropDerived}
For all \cm{s}  $\model=(X,\closure,\peval)$ the following holds:
\begin{enumerate}
\item\label{Sur}
$
\form_1 \, \lsurr \, \form_2  \equiv 
\form_1 \, \land\lneg 
(\ltothru (\lneg(\form_1 \lor \, \form_2)[\lneg \form_2]);
$
\item\label{Pro}
$
\form_1 \, \lprop \, \form_2  \equiv 
\form_2 \, \land \lfromthru \form_1[\form_2].
$
\end{enumerate}
\end{proposition}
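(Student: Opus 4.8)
The plan is to establish both equivalences by unfolding the satisfaction clauses and comparing them directly; no closure axiom enters the argument, which is also why the statement holds for general \cm{s} and not only \qdcm{s}. Part~\ref{Pro} is immediate: reading off Definition~\ref{def:SurrAndProp} together with the clause for $\lfromthru$ in Definition~\ref{def:Islcs}, $\model,x\models\form_1\,\lprop\,\form_2$ holds precisely when $\model,x\models\form_2$ and there are a path $\pi$ and an index $\ell$ with $\pi(\ell)=x$, $\pi(0)\models\form_1$ and $\pi(j)\models\form_2$ for all $0<j<\ell$; the last condition is, verbatim, the definition of $\model,x\models\lfromthru\form_1[\form_2]$. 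Hence $\form_1\,\lprop\,\form_2\equiv\form_2\land\lfromthru\form_1[\form_2]$, with nothing more to check.

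For part~\ref{Sur} I would prove the two implications in turn. The conjunct $\form_1$ on the right-hand side is inherited directly from the first requirement of $\lsurr$, so the work is entirely in relating the universally quantified ``surrounded'' requirement to the negated reachability formula; the one thing to watch is the extra disjunct $\form_2$ inside $\lneg(\form_1\lor\form_2)$, which is exactly what forces a reachability witness to contain a \emph{strictly interior} $\form_2$-point, matching the range $0<j\le\ell$ of $\lsurr$. For the implication from left to right I would argue by contradiction: assuming $\model,x\models\form_1\,\lsurr\,\form_2$ and $\model,x\models\ltothru(\lneg(\form_1\lor\form_2))[\lneg\form_2]$, the second assumption supplies a path $\pi$ and index $\ell$ with $\pi(0)=x$, $\pi(\ell)\models\lneg\form_1\land\lneg\form_2$, and $\pi(j)\models\lneg\form_2$ for all $0<j<\ell$; since $\pi(\ell)\models\lneg\form_1$, the surrounded clause gives $j$ with $0<j\le\ell$ and $\pi(j)\models\form_2$, which---all interior indices satisfying $\lneg\form_2$---forces $j=\ell$, contradicting $\pi(\ell)\models\lneg\form_2$.

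For the implication from right to left I would assume $\model,x\models\form_1$ and $\model,x\not\models\ltothru(\lneg(\form_1\lor\form_2))[\lneg\form_2]$ and verify the surrounded clause: given any path $\pi$ and index $\ell$ with $\pi(0)=x$ and $\pi(\ell)\models\lneg\form_1$, first note $\ell\ge1$ (otherwise $\pi(\ell)=\pi(0)=x$ would satisfy both $\form_1$ and $\lneg\form_1$); if $\pi(\ell)\models\form_2$ take $j=\ell$, and otherwise $\pi(\ell)\models\lneg(\form_1\lor\form_2)$, so $\pi$ with index $\ell$ meets the first two clauses for being a witness of $\ltothru(\lneg(\form_1\lor\form_2))[\lneg\form_2]$; by hypothesis it is not such a witness, so the third clause must fail, giving some $0<j<\ell$ with $\pi(j)\models\form_2$, as required. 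The only real obstacle is bookkeeping---keeping the boundary index $j=\ell$ apart from the interior indices, handling the three-way split of path points into $\form_1$-, $(\form_2\land\lneg\form_1)$- and $\lneg(\form_1\lor\form_2)$-points, and checking that no closure axiom (in particular not quasi-discreteness) sneaks into the argument---and I expect no genuine difficulty beyond that.
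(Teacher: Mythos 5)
Your proof is correct and follows essentially the same route as the paper: both arguments just unfold the satisfaction clauses for $\lsurr$ and $\ltothru$ and observe that the endpoint condition $\lneg(\form_1\lor\form_2)$ absorbs the $j=\ell$ boundary case of the surrounded clause, while part~\ref{Pro} is a verbatim match of definitions. The only difference is presentational---the paper writes a single chain of equivalences between the two \emph{negated} statements, whereas you split it into two implications (one by contradiction)---and your side remarks (the $\ell\neq 0$ check, the independence from quasi-discreteness) are accurate.
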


\begin{proof}
For what concerns Proposition~\ref{prp:SurrAndPropDerived}(\ref{Sur})
We prove that 
$$
\model, x \not\models \form_1 \, \lsurr \, \form_2
\mbox{ if and only if }
\model, x \not\models \form_1 \, \land\lneg 
(\ltothru(\lneg(\form_1 \lor \, \form_2))[\lneg \form_2])
$$
by the following derivation:\\

\noindent
$
\deriv
\model, x \not\models \form_1 \, \land\lneg 
(\ltothru(\lneg(\form_1 \lor \, \form_2))[\lneg \form_2])
\hint{\Leftrightarrow}{Defs. of $\not\models$, $\land$; Logic}
\model, x \not\models \form_1 \mbox{ or } \contnl
\model, x \not\models \lneg 
(\ltothru(\lneg(\form_1 \lor \, \form_2))[\lneg \form_2])\\
\hint{\Leftrightarrow}{Defs. of $\not\models$, $\lneg$}
\model, x \not\models \form_1 \mbox{ or }  \contnl
\model, x \models
\ltothru(\lneg(\form_1 \lor \, \form_2))[\lneg \form_2]\\
\hint{\Leftrightarrow}{Definition of $\ltothru\Phi[\Psi]$}
\model, x \not\models \form_1\mbox{ or }  \contnl
\begin{array}{l l l}
\mbox{exist path} &\pi  \mbox{ and index }  \ell  \mbox{ s.t. :}   &\\
& \pi(0)=x \,\mbox{ and }  \\
& \model, \pi(\ell) \models \lneg (\form_1 \lor \, \form_2)\,\mbox{ and } \\
& \mbox{for all } j : 0 < j < \ell \mbox{ implies }\model,\pi(j) \models  \lneg \form_2
\end{array}\\
\hint{\Leftrightarrow}{Defs. of $\lneg$, $\lor$, $\not\models$;  Logic}
\model, x \not\models \form_1 \mbox{ or } \contnl
\begin{array}{l l l}
\mbox{exist path} & \pi  \mbox{ and index } \ell  \mbox{ s.t. :} \\
  & \pi(0)=x \,\mbox{ and } \\
& \model, \pi(\ell) \models \lneg\form_1 \mbox{ and } \\
& \model, \pi(\ell) \models \lneg\form_2 \mbox{ and } \\
& \mbox{for all } j: 0 < j < \ell \mbox{ implies }\model,\pi(j) \models  \lneg\form_2
\end{array}\\
\hint{\Leftrightarrow}{Logic}
\model, x \not\models \form_1 \mbox{ or } \contnl
\begin{array}{l l l}
\mbox{exist path} & \pi  \mbox{ and index } \ell  \mbox{ s.t. :} \\
 & \pi(0)=x \,\mbox{ and } \\
& \model, \pi(\ell) \models \lneg\form_1 \mbox{ and } \\
& \mbox{for all } j : 0< j \leq \ell \mbox{ implies } \model,\pi(j) \models  \lneg \form_2
\end{array}\\
\hint{\Leftrightarrow}{Defs. of $\not\models$, $\lsurr$}
\model, x \not\models \form_1 \, \lsurr \, \form_2
$

\noindent
The proof of Proposition~\ref{prp:SurrAndPropDerived}(\ref{Pro}) trivially follows
from the relevant definitions.
\end{proof}

\section{Proofs of Results of Section~\ref{sec:Pathbisimilarity}}\label{apx:sec:Pathbisimilarity}

\subsection{Proof of \propo{} \ref{prp:CMCbisImplPath}}\label{prf:prp:CMCbisImplPath}

We prove that every relation $B \subseteq X \times X$ that is a \cmc-bisimulation  is also a \pth-bisimulation.

Suppose $(x_1,x_2)\in B$; we have to prove that Conditions 1-5 of Definition~\ref{def:Pathbisimilarity} are satisfied. This is trivially the case for Condition 1, since $(x_1,x_2)\in B$ and $B$ is a \cmc-bisimulation.
Let $\pi_1$ be a path in $\bpthsF(x_1)$ and suppose $\pthlen\,\pi_1 = n >0$, the case $n=0$ being trivial. By the \propo{}~\ref{prp:FT}(\ref{path}) we know
that $\pi_1(i) \in \closureF(\pi_1(i-1))$ for $i=1,\ldots,n$. We build path $\pi_2$ 
as follows: we let $\pi_2(0) = x_2$;
since $(x_1,x_2)\in B$ and $\pi_1(1) \in \closureF(\pi_1(0))$,
we know that  there is an element, say $\eta_1 \in \closureF(\pi_2(0))$ 
such that $(\pi_1(1),\eta_1)\in B$: we let $\pi_2(1) = \eta_1$.
With similar reasoning, exploiting \propo{}~\ref{prp:FT}(\ref{path}), we define
$\pi_2(i)$ for $i=2,\ldots,n$ and we let $\dom \, \pi_2 = \SET{0,\ldots, n} =\dom \, \pi_1$.
Again, \propo{}~\ref{prp:FT}(\ref{path})  ensures that $\pi_2$ is continuous and so it is a path
from $x_2$ and $(\pi_1(n),\pi_2(n)) \in B$.
The proof for the other conditions is similar, using \propo{}~\ref{prp:FT}(\ref{path}).

\begin{remark}\label{rem:prp:CMCbisImplPath}
The converse of  \propo{} \ref{prp:CMCbisImplPath} does not hold, as shown in Figure~\ref{fig:PathNoImplCMCbis} where 
$
\invpeval(x_{11})=\invpeval(x_{21})=\invpeval(x_{22})=\SET{r}\not=\SET{b}=\invpeval(x_{12})=\invpeval(x_{23})
$
and $x_{11}  \pthbis x_{21}$ but $x_{11}  \not\cmcbis x_{21}$. In fact
$B=\SET{(x_{11},x_{21}),(x_{11}, x_{22}),(x_{12},x_{23})}$ is a \pth-Bisimulation.
We show that $x_{11}\not\clbis x_{21}$, i.e. there exists no
\cl-bisimulation containing $x_{11}$ and $x_{21}$;
$x_{12} \in  \closure(\SET{x_{11}})$ and $\invpeval x_{12} = \SET{b}$.
All $x_{2j} \in \closure(\SET{x_{21}})$ are such that $\invpeval x_{2j} = \SET{r}$;
thus, there cannot be any \cl-bisimulation $B$ such that $(x_{12},x_{2j}) \in B$, for $j=1,2$, since Condition 1 of Definition~\ref{def:Clbisimilarity} would be
violated.  Thence there cannot exist any \cl-bisimulation containing $(x_{11},x_{21})$ since Condition 2 of Definition~\ref{def:Clbisimilarity} would be violated. This brings to $x_{11}  \not\clbis x_{21}$, i.e. $x_{11}  \not\cmcbis x_{21}$.
\end{remark}

\subsection{Proof of \propo{} \ref{prp:MppImplInlImplpath}}\label{prf:prp:MppImplInlImplpath}

Suppose $B$ is an \inl-bisimulation and $(x_1,x_2) \in B$. We have to prove that
Conditions 1-5 of Definition~\ref{def:Pathbisimilarity} hold. We prove only Condition 2, the
proof for Conditions 3-5 being similar and that for Condition 1 trivial.
Suppose $x_1 \leadspth{\pi_1} x_1'$. Take neighbourhood $S_1$ of $x_1$ such that
$\rng(\pi_1) \subseteq S_1$---such an $S_1$ exists because $\rng(\pi_1) \subseteq X$ and $X= \interior(X)$. Since $B$ is a \inl-bisimulation, there exists neighbourhood $S_2$ of $x_2$ and $x_2' \in S_2$ such that $(x_1',x_2') \in B$, by Condition~\ref{back} 
of Definition~\ref{def:INLbisimilarity}.
In addition, since $X$ is path-connected, there is $\pi_2$ such that
$x_2 \leadspth{\pi_2} x_2'$.
%

\subsection{Proof of \theo{} \ref{thm:PathbisIsIrleq}}\label{prf:thm:PathbisIsIrleq}
We proceed by induction on the structure of formulas and consider only the case  
$\lto \form$,  the case for $\lfrom \form$ being similar, and the others being trivial.
So, let us assume that for all $x_1,x_2$, if $x_1 \pthbis x_2$, then
$\model, x_1 \models \form$ if and only if $\model, x_2 \models \form$
and prove the assert for $\lto \form$.
Assume $(x_1,x_2)$ is an element  of \pth-bisimulation $B$ and suppose that $\model, x_1 \models \lto \form$.
This means there exist $\pi,\ell$ s.t. $\pi(0)=x_1$ and $\model,\pi(\ell)\models \form$.
So, there is $\pi_1 \in \bpthsF(x_1)$ such that $\model, \pi_1(\pthlen(\pi_1)) \models \form$.
Moreover, since $(x_1,x_2)\in B$, by Condition 2 of the definition of 
\pth-bisimulation, there is also
$\pi_2 \in \bpthsF(x_2)$ such that 
$(\pi_1(\pthlen(\pi_1)), \pi_2(\pthlen(\pi_2))) \in B$.
This, by definition of $\pthbis$, means that we have 
$\pi_1(\pthlen \, \pi_1) \pthbis \pi_2(\pthlen \, \pi_2)$.
By the I.H. we then get
$\model, \pi_2(\pthlen(\pi_2)) \models \form$, from which 
$\model, x_2 \models \lto \form$ follows.\\

\subsection{Proof of \theo{} \ref{thm:IIrleqIsPathbis}}\label{prf:thm:IIrleqIsPathbis}

We have to prove that Conditions 1-5 of Definition~\ref{def:Pathbisimilarity} are fullfilled
by $\irleq$. We consider only Condition 2, since the proof of Conditions 3-5 is similar and that
of Condition 1 is trivial.
Suppose $(x_1,x_2)\in \irleq$ and that 
Condition 2 is not satisfied; this means that there exists 
$\bar{\pi}\in \bpthsF(x_1)$ such that for all $\pi \in \bpthsF(x_2)$   
the following holds: 
$(\bar{\pi}(\pthlen(\bar{\pi})),  \pi(\pthlen(\pi))) \not\in \irleq$. 

For each $\pi \in \bpthsF(x_2)$, let $\Omega_{\pi}$ be a formula such that
$\model, \bar{\pi}(\pthlen(\bar{\pi})) \models \Omega_{\pi}$ and
$\model, \pi(\pthlen(\pi)) \not\models \Omega_{\pi}$---such a formula exists because
$\bar{\pi}(\pthlen(\bar{\pi}))\not\irleq\pi(\pthlen(\pi)) $. 
Clearly,  $\model, \bar{\pi}(\pthlen(\bar{\pi})) \models \bigwedge_{\pi \in \bpthsF(x_2)} \Omega_{\pi}$ and, consequently, we have
$\model, x_1 \models \lto (\bigwedge_{\pi \in \bpthsF(x_2)} \Omega_{\pi})$ whereas 
$\model, x_2 \not\models \lto (\bigwedge_{\pi \in \bpthsF(x_2)} \Omega_{\pi})$, which 
would contradict $(x_1,x_2)\in \irleq$.

\section{Proofs of Results of Section~\ref{sec:CoPabisimilarity}}\label{apx:sec:CoPabisimilarity}

\subsection{Proof of \propo{}~\ref{prp:CMCbisImplCoPa}}\label{prf:prp:CMCbisImplCoPa}
Suppose $x_1 \cmcbis x_2$, i.e. $x_1 \clbis x_2$ (see Corollary~\ref{cor:ClbisEqCMCbis}). Then there exists \cl-bisimulation
$B \subseteq X \times X$ such that $(x_1,x_2) \in B$. By Lemma~\ref{lem:BBrst} below
we know that $B^{rst}\subseteq X \times X$ is a \cop-bisimulation and since 
$B \subseteq B^{rst}$ we have $(x_1,x_2) \in B^{rst}$, i.e. $x_1 \copbis x_2$.

\begin{lemma}\label{lem:BBrst}
For all  \qdcm{s} $(X,\closureF,\peval)$ and relations $B \subseteq X \times X$ the following holds: if $B$ is a \cl-bisimulation, then $B^{rst}$ is a \cop-bisimulation.
\end{lemma}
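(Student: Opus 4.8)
The plan is to split the statement into two halves: first, that $B^{rst}$ is itself a \cl-bisimulation (and, being the reflexive--symmetric--transitive closure of $B$, an equivalence relation); second, that every \cl-bisimulation which happens to be an equivalence relation is a \cop-bisimulation. Together these give the lemma, since $B^{rst}$ is then an equivalence relation that is also a \cl-bisimulation.

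For the first half I would record three facts, each immediate from Definition~\ref{def:Clbisimilarity}: (i) the identity relation on $X$ is a \cl-bisimulation; (ii) if $B$ is a \cl-bisimulation then so is $B^{-1}$, since interchanging the two arguments turns Condition~2 into Condition~3 and Condition~4 into Condition~5 and leaves Condition~1 in place; (iii) a union of \cl-bisimulations is a \cl-bisimulation, and the transitive closure $B^{r}$ of a \cl-bisimulation $B$ is a \cl-bisimulation. For the transitive closure, given a chain $x = w_0\,B\,w_1\,B\cdots B\,w_k = z$ and a point $x'\in\closureF(\SET{x})$, one walks Condition~2 of Definition~\ref{def:Clbisimilarity} along the chain, producing successively $w_i'\in\closureF(\SET{w_i})$ with $(x',w_1'),(w_1',w_2'),\dots,(w_{k-1}',w_k')\in B$, hence a $w_k'\in\closureF(\SET{z})$ with $(x',w_k')\in B^{r}$; the same works for Conditions~3--5, and Condition~1 is transitive. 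Since $B^{rst}$ is the transitive closure of $B\cup B^{-1}\cup\mathrm{Id}$ (a standard fact), items (i)--(iii) show $B^{rst}$ is a \cl-bisimulation.

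For the second half, fix an equivalence relation $E$ that is a \cl-bisimulation, take $(x_1,x_2)\in E$, and verify Conditions~1--5 of Definition~\ref{def:CoPabisimilarity} for the quasi-discrete index space $\calJ=(\nats,\closure_{\succ})$. Condition~1 is inherited. For Condition~2, let $\pi_1\in\bpthsF(x_1)$ with $n=\pthlen(\pi_1)$ and $(\pi_1(i),x_2)\in E$ for all $i<n$. Build $\pi_2$ step by step: put $\pi_2(0):=x_2$, and given $\pi_2(i)$ with $(\pi_1(i),\pi_2(i))\in E$, use $\pi_1(i+1)\in\closureF(\SET{\pi_1(i)})$ (Proposition~\ref{prp:FT}(\ref{path})) and Condition~2 of Definition~\ref{def:Clbisimilarity} to obtain $\pi_2(i+1)\in\closureF(\SET{\pi_2(i)})$ with $(\pi_1(i+1),\pi_2(i+1))\in E$; finally set $\pi_2(j):=\pi_2(n)$ for $j>n$. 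By Proposition~\ref{prp:FT}(\ref{path}) (and the extensivity axiom for the constant tail) $\pi_2$ is a path in $\bpthsF(x_2)$ with $\pthlen(\pi_2)=n$; moreover $(\pi_1(n),\pi_2(n))\in E$, and for $i<n$ we have $\pi_2(i)\mathbin{E}\pi_1(i)\mathbin{E}x_2\mathbin{E}x_1$ by symmetry and transitivity of $E$, so $(x_1,\pi_2(i))\in E$. Condition~3 is the mirror statement, obtained by swapping $x_1$ and $x_2$ (legitimate because $E$ is symmetric). Conditions~4 and~5 are handled analogously, building $\pi_2\in\bpthsT(x_2)$ backwards from $\pi_2(\pthlen(\pi_2)):=x_2$ and using $\pi_1(i-1)\in\closureT(\SET{\pi_1(i)})$ together with Condition~4 of Definition~\ref{def:Clbisimilarity}; here one must only keep in mind that the index range carrying the relatedness obligation is $\{\iota : 0<\iota\le\pthlen(\pi_1)\}$ rather than $\{\iota : 0\le\iota<\pthlen(\pi_1)\}$.

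The step I expect to be the crux is item (iii): the iterated transfer along a chain must keep the image of $x'$ inside the closure of a \emph{singleton}, which works precisely because Definition~\ref{def:Clbisimilarity} is stated pointwise on singletons. Everything else is bookkeeping — the path constructions are routine once Proposition~\ref{prp:FT}(\ref{path}) is invoked, noting that in a \qdcm{} the forward step $\pi(i)\in\closureF(\SET{\pi(i-1)})$ and the backward step $\pi(i-1)\in\closureT(\SET{\pi(i)})$ are equivalent by Proposition~\ref{prp:FT}(2).
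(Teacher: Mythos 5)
Your proof is correct and follows essentially the same route as the paper's: the paper's auxiliary Lemma~\ref{lem:BstrExtB} is exactly your first half (its four-case analysis over $\mathrm{Id}$, $B$, $B^{s}\setminus B$, and chains is your items (i)--(iii) repackaged), and the paper then performs the same step-by-step path construction via Proposition~\ref{prp:FT}(\ref{path}), discharging the ``zone'' conditions by symmetry and transitivity of $B^{rst}$ just as you do with $\pi_2(i)\mathbin{E}\pi_1(i)\mathbin{E}x_2\mathbin{E}x_1$. Your packaging as ``$B^{rst}$ is a \cl-bisimulation, and any equivalence \cl-bisimulation is a \cop-bisimulation'' is a slightly cleaner modularisation of the same argument, not a different one.
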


\begin{proof}
We have to prove that $B^{rst}$ satisfies Conditions 1-5 of Definition~\ref{def:CoPabisimilarity},
under the assumption that $B$ is a \cl-bisimulation. We consider only Condition 1 and Condition 2, since the proof for all the other conditions is similar.
Suppose $(x_1,x_2)\in B^{rst}$. 
For what concerns Condition 1 there are four cases to consider:
\begin{enumerate}
\item $x_1=x_2$: trivial; 
\item $(x_1,x_2) \in B$: in this case $\invpeval x_1 = \invpeval x_2$ since $B$ is a \cl-bisimulation; 
\item $(x_1,x_2) \in B^s \setminus B$: in this case $(x_2,x_1) \in B$---by definition of $B^s$, and so $\invpeval x_2 = \invpeval x_1$;
\item there are $y_1, \ldots, y_n \in X$ such that $y_1=x_1$, $y_n=x_2$ and for all
$i\in\SET{1,\ldots,n-1}$  we have $(y_i,y_{i+1})\in B^s$: in this case 
$\invpeval y_i = \invpeval y_{i+1}$ for all $i\in\SET{1,\ldots,n-1}$---see cases (2) and (3) above---and so also $\invpeval x_1 = \invpeval x_2$.
\end{enumerate}

For what concerns Condition 2, let $\pi_1$ any path in $\bpthsF(x_1)$ such that
$(\pi_1(i_1),x_2) \in B^{rst}$ for all $i_1  < \pthlen(\pi_1)$, and assume 
$\pthlen(\pi_1) > 0$---the case $\pthlen(\pi_1) = 0$ being trivial by choosing $\pi_2$ such that $\pi(i_2)=x_2$ for all $i_2$. By \propo{}~\ref{prp:FT}(\ref{path})  we know that $\pi_1(i_1) \in \closureF(\pi_1(i_1 - 1))$  for all $i_1=1,\ldots, \pthlen(\pi_1)$. We build $\pi_2$, such that $\pthlen(\pi_2)=\pthlen(\pi_1)$, as follows. We let $\pi_2(0)=x_2$; since $(\pi_1(0),\pi_2(0))=(x_1,x_2) \in B^{rst}$ and 
$\pi_1(1) \in \closureF(\pi_1(0))$, there is, by Lemma~\ref{lem:BstrExtB}, $\eta\in \closureF(\pi_2(0))$ s.t. $(\pi_1(0),\eta) \in B^{rst}$. We let  $\pi_2(1)=\eta$ and we proceed in  a similar way for defining $\pi_2(i_2)\in\closureF(\pi_2(i_2 -1))$ for all
$i_2 < \pthlen(\pi_2)$, ensuring that for all such $i_2$, $(\pi_1(0),\pi_2(i_2)) \in B^{rst}$.

Now, by hypothesis and since $\pi_2(0)=x_2$ by definition, we know that\\
$(\pi_1(\pthlen(\pi_1)-1),\pi_2(0)) \in B^{rst}$ and
$(\pi_1(0),\pi_2(0)) \in B^{rst}$, and, by symmetry of $B^{rst}$, also $(\pi_2(0),\pi_1(0)) \in B^{rst}$.
By construction of $\pi_2$, we have also $(\pi_1(0),\pi_2(\pthlen(\pi_2)-1)) \in B^{rst}$.
Thence, by transitivity of $B^{rst}$, we finally get 
$(\pi_1(\pthlen(\pi_1)-1),\pi_2(\pthlen(\pi_2)-1)) \in B^{rst}$.
But then, by \propo{}~\ref{prp:FT}(\ref{path})  we know that 
$\pi_1(\pthlen(\pi_1)) \in \closureF(\pi_1(\pthlen(\pi_1) - 1))$ and so, again by 
Lemma~\ref{lem:BstrExtB}, we know that there exists $\xi \in \closureF(\pi_2(\pthlen(\pi_2)-1))$
such that $(\pi_1(\pthlen(\pi_1)),\xi) \in B^{rst}$. We define  
$\pi_2(\pthlen(\pi_2)) = \xi$; so $(\pi_1(\pthlen(\pi_1)),\pi_2(\pthlen(\pi_2))) \in B^{rst}$ and, 
noting that, again by \propo{}~\ref{prp:FT}(\ref{path}), 
the resulting function $\pi_2$ is continuous, i.e. it is a path, we get the assert.
\end{proof}

\begin{lemma}\label{lem:BstrExtB}
For all \qdcm{s} $\model=(X,\closureF, \peval)$, \cl-bisimulation $B$ and
$(x_1,x_2) \in B^{rst}$ the following holds:
for all $x_1' \in \closureF(x_1)$ there exists $x_2' \in \closureF(x_2)$ such that
$(x_1',x_2') \in B^{rst}$
\end{lemma}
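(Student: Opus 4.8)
The plan is to prove the claim in three stages of increasing generality, following the way $B^{rst}$ is built from $B$ by reflexive, symmetric and transitive closure, and to reduce everything to Conditions 2 and 3 of Definition~\ref{def:Clbisimilarity} together with the closure properties of $B^{rst}$.

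First I would settle the case $(x_1,x_2)\in B$: given $x_1'\in\closureF(\SET{x_1})$, Condition 2 of Definition~\ref{def:Clbisimilarity} directly yields $x_2'\in\closureF(\SET{x_2})$ with $(x_1',x_2')\in B\subseteq B^{rst}$. Next I would lift this to $B^s$. If $(x_1,x_2)\in B^s\setminus B$ then $(x_2,x_1)\in B$, and applying Condition 3 of Definition~\ref{def:Clbisimilarity} to the pair $(x_2,x_1)$ gives, for every $x_1'\in\closureF(\SET{x_1})$, some $x_2'\in\closureF(\SET{x_2})$ with $(x_2',x_1')\in B$, i.e.\ $(x_1',x_2')\in B^{-1}\subseteq B^s\subseteq B^{rst}$. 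Hence the claim holds whenever $(x_1,x_2)\in B^s$, with the witness already in $B^s$.

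Finally I would handle a general $(x_1,x_2)\in B^{rst}$. If $x_1=x_2$, reflexivity of $B^{rst}$ lets us take $x_2'=x_1'$. Otherwise there are $y_0=x_1,y_1,\ldots,y_n=x_2$ with $(y_i,y_{i+1})\in B^s$ for all $i\in\ZET{\iota}{0\leq\iota<n}$, and I would induct on $n$. The base case $n=1$ is the $B^s$-stage above. For the inductive step, given $x_1'\in\closureF(\SET{x_1})$, the induction hypothesis applied to the chain $y_0,\ldots,y_{n-1}$ yields $y_{n-1}'\in\closureF(\SET{y_{n-1}})$ with $(x_1',y_{n-1}')\in B^{rst}$; then the $B^s$-stage applied to $(y_{n-1},y_n)\in B^s$ and the point $y_{n-1}'\in\closureF(\SET{y_{n-1}})$ yields $x_2'\in\closureF(\SET{y_n})=\closureF(\SET{x_2})$ with $(y_{n-1}',x_2')\in B^s\subseteq B^{rst}$; transitivity of $B^{rst}$ then gives $(x_1',x_2')\in B^{rst}$, as required. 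The only point needing care is the bookkeeping in the inductive step, namely that every extension is performed between a point of $\closureF(\SET{y_i})$ and a point of $\closureF(\SET{y_{i+1}})$ so that transitivity of $B^{rst}$ can legitimately be invoked; beyond this there is no genuine obstacle, as the whole argument is the closure-operator analogue of the standard fact that bisimilarity-type relations are preserved under equivalence closure.
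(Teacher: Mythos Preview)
Your proposal is correct and follows essentially the same approach as the paper: a case split into the identity case, the case $(x_1,x_2)\in B$ (handled via Condition~2 of Definition~\ref{def:Clbisimilarity}), the case $(x_1,x_2)\in B^s\setminus B$ (handled via Condition~3 applied to $(x_2,x_1)\in B$), and the transitive case via a $B^s$-chain and transitivity of $B^{rst}$. The only cosmetic difference is that you frame the transitive case as an explicit induction on the chain length, whereas the paper simply iterates the $B^s$-step along the chain and then invokes transitivity at the end.
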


\begin{proof}
There are four cases to consider:
\begin{enumerate}
\item $x_1=x_2$: trivial; 
\item $(x_1,x_2) \in B$: in this case the assert follows directly from the fact that $B$ is a \cl-bisimulation and $B \subseteq B^{rst}$; 
\item $(x_1,x_2) \in B^s \setminus B$: in this case
$(x_2,x_1) \in B$, by Definition of $B^s$, and since $B$ is a \cl-bisimulation, by Condition 3 of Definition~\ref{def:Clbisimilarity}, 
for all $x_1' \in \closureF(x_1)$ there exists $x_2' \in \closureF(x_2)$ such that 
$(x_2',x_1') \in B$; this means that is $(x_1',x_2') \in B^s \subseteq B^{rst}$;
\item there are $y_1, \ldots, y_n \in X$ such that $y_1=x_1$, $y_n=x_2$ and for all
$i\in\SET{1,\ldots,n-1}$  we have $(y_i,y_{i+1})\in B^s$: in this case---by applying the
same reasoning as for cases (2) and (3) above---we have that
for all $y_i' \in \closureF(y_i)$ there is $y_{i+1}' \in \closureF(y_{i+1})$ with
$(y_i', y_{i+1}') \in B^s\subseteq B^{rst}$, for all $i\in\SET{1,\ldots,n-1}$; the assert then 
follows by transitivity of $B^{rst}$.
\end{enumerate}
\end{proof}

\subsection{Proof of \propo{} \ref{prp:DbsEQCopa}}\label{prf:prp:DbsEQCopa}
In the sequel, for the sake of readability, we will let $\rightarrow$ denote the transition relation of $K(\model)^r$, i.e. $\rightarrow=R^r$.

We prove  that $x_1 \, \copbis \, x_2$ implies $x_1 \, \dbsceq \, x_2$ by showing that
$\copbis$ is a \dbsc-Eq w.r.t. $K(\model)^r$.
We know  that Condition (1) of Definition~\ref{def:dbsc} is trivially satisfied since $x_1 \, \copbis \, x_2$.
For what concerns Condition (2) of Definition~\ref{def:dbsc}, suppose
$x_1 \rightarrow x_1'$ in $K(\model)^r$; this means that there $\pi_1 \in \bpthsF_{\nats,\model}(x_1)$ with $\pthlen(\pi_1)=1$ and $\pi_1(1) = x_1'$ and $\pi_1(0) \copbis \, x_2$. But then, since $\copbis$ is a \cop-bisimulation, there is 
$\pi_2 \in \bpthsF_{\nats,\model}(x_2)$ such that $x_1 \copbis \pi_2(i)$ for all
$i<\pthlen(\pi_2)$ and $\pi_1(\pthlen(\pi_1)) \copbis \pi_2(\pthlen(\pi_2))$. This in turn means that there exist $k \in \nats$, $k=\pthlen(\pi_2)$, and 
$t_0=\pi_2(0), \ldots , t_k = \pi_2(k)\in X$ such that 
$x_2 = t_0, x_1 \,\copbis \, t_i , t_i \,\rightarrow \, t_{i+1}$ for all $i < k$, and  $x_1' = \pi_1(\pthlen(\pi_1)) \copbis \, \pi_2(\pthlen(\pi_2))= t_k$, due to the definition of
$K(\model)^r$ and to its relationship to $\model$. The proof for Condition (3) of Definition~\ref{def:dbsc} is similar.

Now we prove  that $x_1 \, \dbsceq \, x_2$ implies $x_1 \, \copbis \, x_2$ and we do it by showing that $\dbsceq$ is a \cop-bisimulation (see example in Figure~\ref{fig:DbsIsCopa}).
\begin{figure}
\centerline{\resizebox{2.8in}{!}{\includegraphics{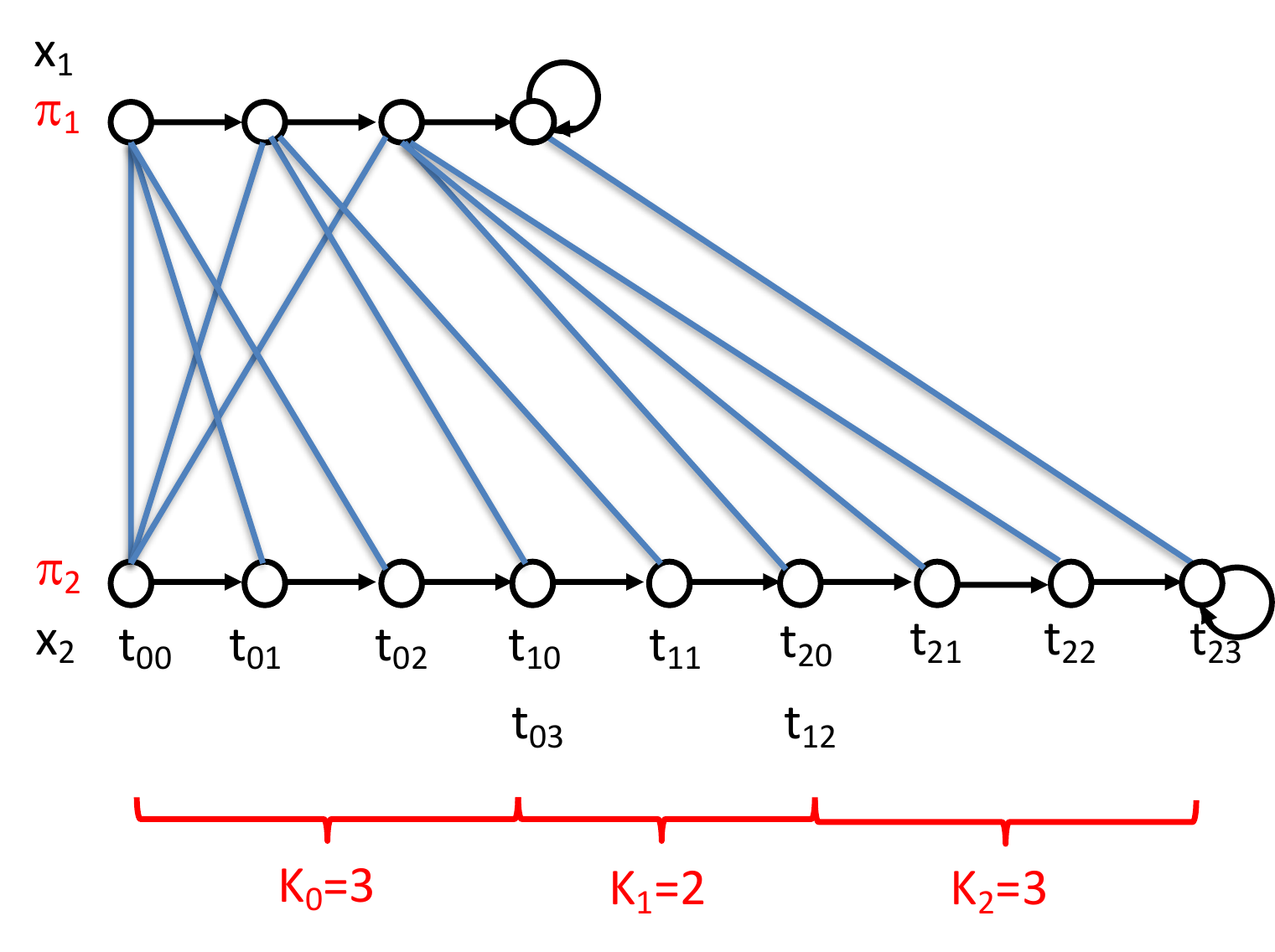}}}
\caption{\label{fig:DbsIsCopa} Example of schema for the proof of \propo{}
 \ref{prp:DbsEQCopa} with $\pthlen(\pi_1)=3, k_0=3, k_1=2$ and $k_2=3$; only ``terminal'' self-loops are shown; $\dbseq$ is shown as blue segments (transitivity of $\dbseq$ is implicit and not shown in the figure).}
\end{figure}

Condition (1) of Definition~\ref{def:CoPabisimilarity} is trivially satisfied because $x_1 \, \dbseq \, x_2$. We prove that also Condition (2) is satisfied, the proof of the remaining conditions being similar. Let  $\pi_1 \in \bpthsF_{\nats,\model}(x_1)$ be any 
path in $\model$ such that 
$\pi_1(i_1)\, \dbsceq \, x_2$ for all $i_1 \in \ZET{\iota}{0 \leq \iota < \pthlen(\pi_1)}$.
We first observe that, due to the definition of $K(\model)^r$ and to its relationship to $\model$, $\pi_1(j)\rightarrow\pi_1(j+1)$, for $j=0, \ldots, \pthlen(\pi_1) - 1$. 
So, for all such $j$ we have that there exist $k_j \in \nats$ and
$t_{j0}, \ldots , t_{jk_j}$ such that $t_{00}=x_2$,
$\pi_1(j) \, \dbsceq \, t_{jm}$ and  $t_{jm} \rightarrow t_{j(m+1)}$ for all $m < k_j$ and 
$\pi_1(j+1) \dbsceq t_{jk_j} = t_{(j+1)0}$; 
Clearly, letting $\ell=\pthlen(\pi_1) - 1$, we have that 
$
t_{00} \rightarrow \ldots \rightarrow t_{0k_0} = 
t_{10} \rightarrow \ldots \rightarrow \ldots \rightarrow 
t_{\ell0} \ldots \rightarrow t_{\ell k_{\ell}}
$
is a path over $K(\model)^r$. Such a path corresponds to the following path $\pi_2$ of 
$\model$, where we let $h(n,j)= n - \sum_{i=0}^{j-1} k_i$ and we assume
$\sum_{i=0}^{w} k_i = 0$ if $w<0$:
$$
\pi_2(n)=
\left\{
\begin{array}{l}
t_{j(h(n,j))}, \mbox{ if there is } j \mbox{ s.t. } 0\leq j \leq \ell \mbox{ and }
\sum_{i=0}^{j-1} k_i \leq n < \sum_{i=0}^{j} k_i,\\\\
t_{\ell k_{\ell}}, \mbox{ if } n \geq \sum_{i=0}^{\ell} k_i.
\end{array}
\right.
$$
Note that $\pthlen(\pi_2) =  \sum_{i=0}^{\ell} k_i$ and that, by construction,
$\pi_1(\pthlen(\pi_1)) \dbsceq \pi_2(\pthlen(\pi_2))$. Note furthermore that
$x_1\,\dbseq \pi_2(i_2)$ for all $i_2 \in \ZET{\iota}{0 \leq \iota < \pthlen(\pi_2)}$. In fact, again by construction, for each $i_2 \in \ZET{\iota}{0 \leq \iota < \pthlen(\pi_2)}$ there is
$i_1 \in \ZET{\iota}{0 \leq \iota < \pthlen(\pi_1)}$ such that $\pi_1(i_1) \, \dbsceq \, \pi_2(i_2)$; moreover, $\pi_1(i_1) \, \dbsceq \,x_2 \dbseq \,x_1$ holds for all such $\pi_1(i_1)$ by hypothesis and so, by transitivity, we also get  $x_1\,\dbsceq \pi_2(i_2)$.

\subsection{Proof of \theo{} \ref{thm:CoPabisIsIcrleq}}\label{prf:thm:CoPabisIsIcrleq}

We proceed by induction on the structure of formulas and consider only the case  
$\lstothru \form_1[\form_2]$,  the case for $\lsfromthru \form_1[\form_2]$ being similar, 
and the others being trivial.
So, let us assume that for all $x_1,x_2$, if $x_1 \copbis x_2$, then
$\model, x_1 \models \form$ if and only if $\model, x_2 \models \form$
and prove the assert for $\lstothru \form_1[\form_2]$.

Suppose that $\model, x_1 \models \lstothru \form_1[\form_2]$.
This means there exist $\pi,\ell$ s.t. $\pi(0)=x_1, \model,\pi(\ell)\models \form_1$
and, for $j \in\ZET{\iota}{0\leq \iota < \ell}$ we have $\model,\pi(j)\models \form_2$.
If $\ell=0$, then, by definition of $\lstothru$, we know that
$\model, x_1 \models \form_1$ and $\model, x_1 \models \form_2$ and, by the I.H.
we  get that also
$\model, x_2 \models \form_1$ and $\model, x_2 \models \form_2$ and, again by
definition of $\lstothru$ we get
$\model, x_2 \models \lstothru \form_1[\form_2]$.
Suppose now that $\ell>0$, and let path $\pi_1$ be defined as follows:
$$
\pi_1(i_1)=
\left\{
\begin{array}{l}
\pi(i_1), \mbox{ if } i_1 \leq \ell,\\
\pi(\ell), \mbox{ if } i_1 > \ell.
\end{array}
\right.
$$

Clearly, $\pi_1 \in \bpthsF(x_1)$, $\pthlen(\pi_1) = \ell$, 
$\model, \pi(\pthlen(\pi_1))\models \form_1$ 
and, for $j \in\ZET{\iota}{0\leq \iota < \pthlen(\pi_1)}$ we have $\model,\pi_1(j)\models \form_2$.
%
Let $B$ be a \cop-bisimulation such that $(x_1,x_2)\in B$; such a $B$
exists since $x_1 \copbis x_2$.
In the sequel, we  will construct a path $\pi_2 \in \bpthsF(x_2)$ such that
$\pi_2(0)=x_2$ and we also have $\model, \pi_2(\pthlen(\pi_2)) \models \form_1$ and
for all $i_2 \in \ZET{\iota}{0 \leq \iota < \pthlen(\pi_2)}$ we have
$\model,\pi_2(i_2) \models \form_2$ thus showing that 
$\model, x_2 \models \lstothru \form_1[\form_2]$ (see Figure~\ref{fig:ProofSchema}).

\begin{figure}
\centerline{\resizebox{4in}{!}{\includegraphics{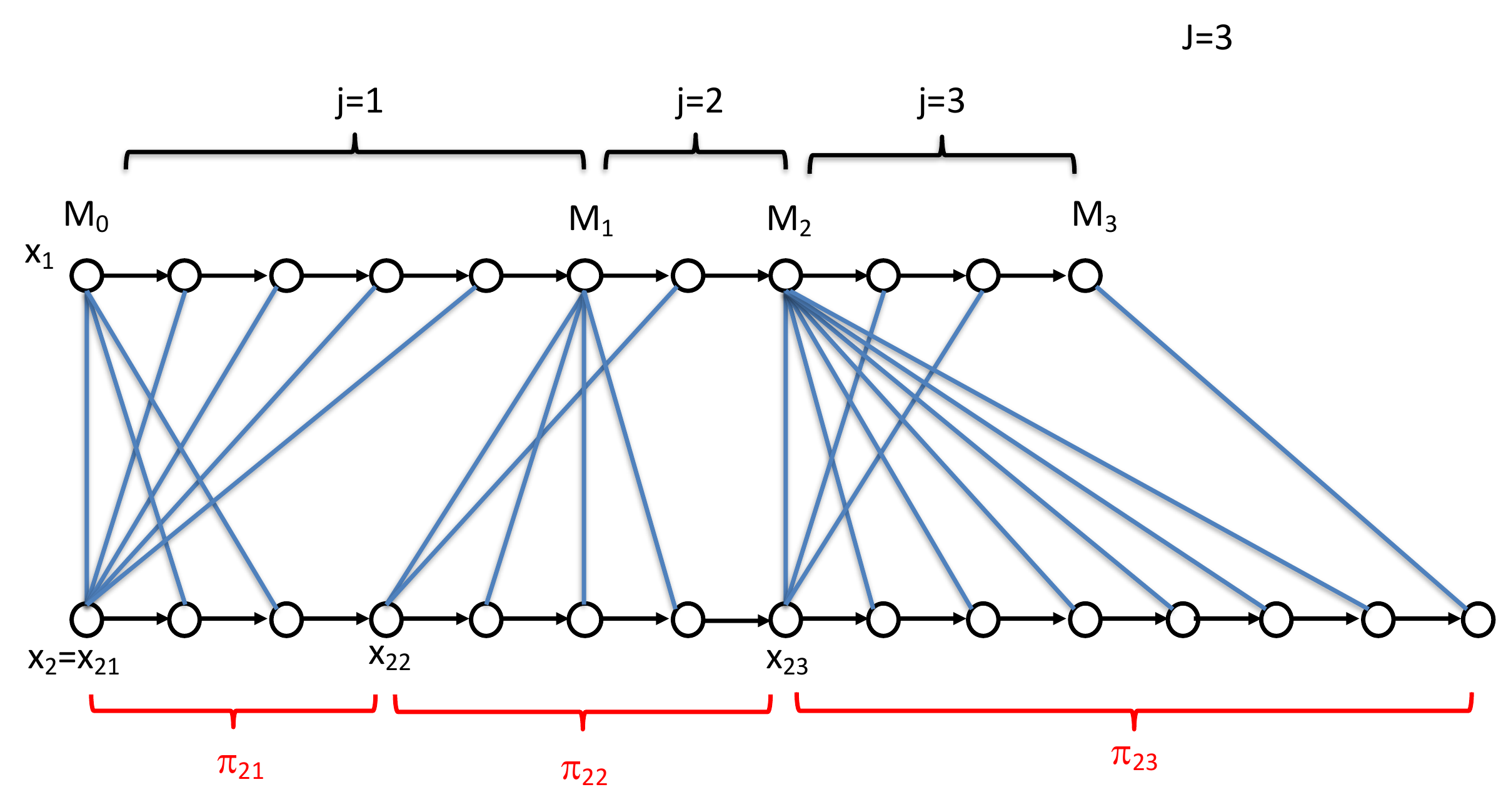}}}
\caption{\label{fig:ProofSchema} Example of schema for the Proof of \theo{} \ref{thm:CoPabisIsIcrleq}, for $J=3$. Relation $B$ is shown as blue segments.}
\end{figure}

Let $M_0=0$, $x_{21}=x_2$.
Now let $M_1$ be the greatest $m_1$ such that  $m_1 \leq \pthlen(\pi_1)$ and
$(\pi_1(i_1), x_{21})\in B$ for all $i_1 \in \ZET{\iota}{M_0 \leq \iota < m_1}$, 
recalling that 
$(\pi_1(M_0), x_{21})\in B$ by hypothesis.
Moreover, since $(\pi_1(M_0), x_{21})\in B$  and 
$B$ is a \cop-bisimulation, by Condition 2 of Definition~\ref{def:CoPabisimilarity},
there exists $\pi_{21}\in \bpthsF(x_{21})$ such that
$(\pi_1(M_0),\pi_{21}(i_2)) \in B$ for all
$i_2 \in \ZET{\iota}{0 \leq \iota < \pthlen(\pi_{21})}$
and $(\pi_1(M_1), \pi_{21}(x_{22}))\in B$, 
where $x_{22}=\pi_{21}(\pthlen(\pi_{21}))$.
Furthermore, since $\model, \pi_1(M_0) \models \form_2$,
by the I.H. we get that also $\model, \pi_{21}(i_2) \models \form_2$ 
for all $i_2 \in \ZET{\iota}{0 \leq \iota < \pthlen(\pi_{21})}$.

For $j>1$, let $M_j$ be the greatest $m_j$ such that  $m_j \leq \pthlen(\pi_1)$ and
$(\pi_1(i_1), x_{2j})$ $\in$  $B$ for all $i_1 \in \ZET{\iota}{Z_{j-1} \leq \iota < z_j}$
recalling that
$(\pi_1(M_{j-1}), x_{2j})\in B$ by definition of $\pi_{2j-1}$.
Moreover, since $(\pi_1(M_{j-1}), x_{2j})\in B$  and 
$B$ is a \cop-bisimulation, by Condition 2 of Definition~\ref{def:CoPabisimilarity},
there exists $\pi_{2j}\in \bpthsF(x_{2j})$ such that 
$(\pi_1(M_{j-1}),\pi_{2j}(i_2)) \in B$ for all
$i_2 \in \ZET{\iota}{0 \leq \iota < \pthlen(\pi_{2j})}$
and $(\pi_1(M_j), \pi_{2j}(x_{2(j+1)}))\in B$, 
where $x_{2(j+1)}=\pi_{2j}(\pthlen(\pi_{2j}))$.
Furthermore, since $\model, \pi_1(M_{j-1}) \models \form_2$,
by the I.H. we get that also $\model, \pi_{2j}(i_2) \models \form_2$ 
for all $i_2 \in \ZET{\iota}{0 \leq \iota < \pthlen(\pi_{2j})}$.

Finally, letting $J$ be the greatest $j$ as above,
since $\model, \pi_1(M_J) \models \form_1$,
by the I.H. we get that also $\model, \pi_{2J}(\pthlen(\pi_{2J})) \models \form_1$. 

We note that $\pi_{2j}(0)=\pi_{2(j-1)}(\pthlen(\pi_{2(j-1)}))$ for $j=1\ldots J$.
Thus we can build the following path $\pi_2$:

$$
\pi_2(n)=
\left\{
\begin{array}{l}
\pi_{21}(n), \mbox{ if } n \in 
[0,\pthlen(\pi_{21})),\\
\vdots\\
\pi_{2j}(n- \sum_{i=1}^{j-1}\pthlen(\pi_{2i})), 
\mbox{ if } n \in 
\left[
\sum_{i=1}^{j-1}\pthlen(\pi_{2i}), 
\sum_{i=1}^{j}\pthlen(\pi_{2i})\right),\\
\vdots\\
\pi_{2J}(n- \sum_{i=1}^{J-1}\pthlen(\pi_{2i})), 
\mbox{ if } n \geq \sum_{i=1}^j \pthlen(\pi_{2i}).
\end{array}
\right.
$$

Clearly, $\pi_2 \in \bpthsF(x_2)$ since $\pi_2(0)=\pi_{2,1}(0)= x_2$ because
$\pi_{21} \in \bpthsF(x_2)$ and $\pi_{2J}$ is bounded. Moreover, by construction,
$\model,\pi_2(i_2) \models \form_2$ for all $i_2 \in \ZET{\iota}{0\leq \iota < \pthlen(\pi_2)}$ and
$\model,\pi_2(\pthlen(\pi_2)) \models \form_1$. 
Thus $\model, x_2 \models \lstothru \form_1[\form_2]$.

\subsection{Proof of \theo{} \ref{thm:IcrleqIsCoPabis}}\label{prf:thm:IcrleqIsCoPabis}

We have to prove that
Conditions 1-5 of the Definition~\ref{def:CoPabisimilarity} are fullfilled. We consider only Condition 2, since the proof for Conditions 3-5 is similar and that of Condition 1 is trivial.
We proceed by contradition. 
Suppose  Condition 2 is not satisfied; this means that there exists 
$\bar{\pi}\in \bpthsF(x_1)$ such that
$(\bar{\pi}(i),x_2) \in \icrleq$ for all $i \in \ZET{\iota}{0\leq \iota < \pthlen(\bar{\pi})}$ and,
for all $\pi \in \bpthsF(x_2)$, having considered that  $\pi(0)=x_2 \,\icrleq\, x_1$, 
the following holds: \\$(\bar{\pi}(\pthlen(\bar{\pi})),  \pi(\pthlen(\pi))) \not\in \icrleq$
or there exists $h_{\pi}$ such that $0 < h_{\pi} < \pthlen(\pi)$ and  $(x_1, \pi(h_{\pi})) \not\in \icrleq$. 
Let  set $I$ be defined as\\

$I = \ZET{\pi \in \bpthsF(x_2)}{\mbox{there exists } h_{\pi} \mbox{ such that } 0 < h_{\pi} < \pthlen(\pi)$\\
\mbox{ }\hspace{3in}$\mbox{ and } (x_1, \pi(h_{\pi})) \not\in \icrleq}$\\

and, for each $\pi \in I$, let 
$\Omega^I_{\pi}$ be a formula such that
$\model, x_1 \models \Omega^I_{\pi}$ and
$\model, \pi(h_{\pi}) \not\models \Omega^I_{\pi}$---such a formula exists because
$(x_1,  \pi(h_{\pi})) \not\in \icrleq$.

Let furthermore set $L$ be defined as   
$$
L=\ZET{\pi \in \bpthsF(x_2)}{(\bar{\pi}(\pthlen(\bar{\pi})),  \pi(\pthlen(\pi))) \not\in \icrleq}
$$ 
and, for each $\pi \in L$, let $\Omega^L_{\pi}$ be a formula such that
$\model, \bar{\pi}(\pthlen(\bar{\pi})) \models \Omega^L_{\pi}$ and
$\model, \pi(\pthlen(\pi)) \not\models \Omega^L_{\pi}$---such a formula exists because
$(\bar{\pi}(\pthlen(\bar{\pi})),  \pi(\pthlen(\pi))) \not\in \icrleq$. Note that
$I \cup L = \bpthsF(x_2)$ by hypothesis.
Clearly, $\model, x_1 \models \bigwedge_{\pi \in I} \Omega^I_{\pi}$ and, since
$(\bar{\pi}(i),x_2) \in \icrleq$ for all $i \in \ZET{\iota}{0\leq \iota <\pthlen(\bar{\pi})}$,
we also get $\model, \bar{\pi}(i) \models \bigwedge_{\pi \in I} \Omega^I_{\pi}$ for
all $i \in \ZET{\iota}{0\leq \iota <\pthlen(\bar{\pi})}$---recall that $\bar{\pi}(0)=x_1$.
Also, $\model, \bar{\pi}(\pthlen(\bar{\pi})) \models \bigwedge_{\pi \in L} \Omega^L_{\pi}$

Thus, we get $\model, x_1 \models \Psi$, where $\Psi$ is  the formula 
$\lstothru (\bigwedge_{\pi \in L} \Omega^L_{\pi}) [\bigwedge_{\pi \in I} \Omega^I_{\pi}]$. 
On the other hand,
$\model, x_2 \not\models \Psi$, since, for every path $\pi \in \bpthsF(x_2)$, 
$\pi(h_{\pi})$ does not satisfy $\bigwedge_{\pi \in I} \Omega^I_{\pi}$ for some
$h_{\pi}$ with $0 < h_{\pi} < \pthlen(\pi)$---by construction 
of $\bigwedge_{\pi \in I} \Omega^I_{\pi}$---or $\pi(\pthlen(\pi))$ does not satisfy 
$\bigwedge_{\pi \in L} \Omega^L_{\pi}$---by construction of $\bigwedge_{\pi \in L} \Omega^L_{\pi}$.
In conclusion, we have found a formula, $\Psi$, such that $\model, x_1 \models \Psi$
whereas $\model, x_2 \not\models \Psi$ and this contradicts $x_1 \, \icrleq\, x_2$.

\end{document}